\title{Computing m-Eternal Domination Number of Cactus Graphs in Linear Time}
\author{V{\'a}clav Bla{\v z}ej}{Faculty of Information Technology, Czech Technical University in Prague, Prague, Czech Republic}{}{https://orcid.org/0000-0001-9165-6280}{}
\author{Jan Maty{\'a}{\v s} K{\v r}i{\v s}{\v t}an}{Faculty of Information Technology, Czech Technical University in Prague, Prague, Czech Republic}{}{https://orcid.org/0000-0001-6657-0020}{}
\author{Tom{\'a}{\v s} Valla}{Faculty of Information Technology, Czech Technical University in Prague, Prague, Czech Republic}{}{https://orcid.org/0000-0003-1228-7160}{}
\authorrunning{V. Bla{\v z}ej, J. M. K{\v r}i{\v s}{\v t}an, and T. Valla}
\keywords{Graphs, Algorithms, Eternal domination}
\newlength{\RoundedBoxWidth}
\newsavebox{\GrayRoundedBox}
\newenvironment{GrayBox}[1]%
   {\setlength{\RoundedBoxWidth}{.93\columnwidth}%
    \def\boxheading{#1}%
    \begin{lrbox}{\GrayRoundedBox}%
       \begin{minipage}{\RoundedBoxWidth}}%
   {   \end{minipage}
    \end{lrbox}
    \begin{center}
    \begin{tikzpicture}%
       \node(Text)[draw=black!20,fill=white,rounded corners,inner sep=2ex,text width=\RoundedBoxWidth]
             {\usebox{\GrayRoundedBox}};
        \coordinate(x) at (current bounding box.north west);
        \node [draw=white,rectangle,inner sep=3pt,anchor=north west,fill=white]
        at ($(x)+(6pt,.75em)$) {\boxheading};
    \end{tikzpicture}
    \end{center}}
\newenvironment{defproblemx}[1]{\noindent\ignorespaces%
                                \FrameSep=6pt%
                                \parindent=0pt%
                \begin{GrayBox}{#1}%
                \begin{tabular*}{\columnwidth}{!{\extracolsep{\fill}}@{\hspace{.1em}} >{\itshape} p{1.1cm} p{0.89\columnwidth} @{}}%
            }{
                 \vspace*{-1em}
                 \end{tabular*}%
                \end{GrayBox}%
                \ignorespacesafterend
            }
\newcommand{\defProblemQuestion}[3]{%
  \begin{defproblemx}{#1}
    Input: & #2 \\
    Question: & #3
  \end{defproblemx}
}
\newcommand{\customlabel}[2]{%
  \protected@write \@auxout {}{\string \newlabel {#1}{{#2}{\thepage}{#2}{#1}{}} }%
  \hypertarget{#1}{#2}
}
\crefname{observation}{observation}{observations}
\newtheorem{reduction}{Reduction}{}
\crefname{reduction}{reduction}{reductions}
\newtheorem{property}{Property}{}
\crefname{property}{property}{properties}
\newcommand{\CONF}{C}
\newcommand{\CONFS}{\mathbb{C}}
\newcommand{\FF}{\mathbb{F}}
\newcommand{\EDN}{\gamma^{\infty}_\mathrm{m}} 
\newcommand{\EGC}{\Gamma^{\infty}_\mathrm{m}} 
\newcommand{\DN}{\gamma} 
\newcommand{\DG}{\mathcal{D}} 
\newcommand{\TRAN}{\mathcal{T}} 
\newcommand{\EDNinstance}{m-Eternal Domination\xspace}
\newcommand{\EDNproblem}{\textsc{\EDNinstance}\xspace}
\newcommand{\EGCinstance}{m-Eternal Guard Configuration\xspace}
\newcommand{\EGCproblem}{\textsc{\EGCinstance}\xspace}
\newcommand{\LAB}{labelled strategy\xspace}
\newcommand{\LABS}{labelled strategies\xspace}
\newcommand{\vertexStates}{\mathcal{S}}
\newcommand{\PSPACE}{\textsf{PSPACE}\xspace}
\newcommand{\EXPTIME}{\textsf{EXPTIME}\xspace}
\newcommand\MED{m-eternal domination\xspace}
\newcommand{\GCPOS}[3]{\ensuremath{{#1}\;\square_{#2}\;{#3}}}
\DeclareMathOperator{\white}{\ensuremath{\widehat{\tt 0}}}
\DeclareMathOperator{\pink}{\ensuremath{\widehat{\tt 1}}}
\DeclareMathOperator{\red}{\ensuremath{\widehat{\tt 2}}}
\DeclareMathOperator{\connecting}{\ensuremath{\widehat{\tt X}}}
\DeclareMathOperator{\unknown}{\ensuremath{\widehat{\star}}}
\DeclareMathOperator{\col}{\ensuremath{\operatorname{col}}}
\newcommand\putabove[2]{\mathrel{\overset{\makebox[0pt]{\mbox{\normalfont\tiny\sffamily #2}}}{#1}}}
\newcounter{num}
\newcommand\standardReductionLemma[2]{%
  \setcounter{num}{#2}
  Let $G'$ be $G$ after application of Reduction~\ref{reduction-#1}.
  \ifthenelse{\value{num}>1}{%
    $G$ is defended with $#2$ more guards than $G'$.
    }{
    \ifthenelse{\value{num}>0}{%
      $G$ is defended with $1$ more guard than $G'$.
      }{%
      $G$ is defended with the same number of guards as $G'$.
    }%
  }
}
\newcommand{\reductionimage}[2]{\includegraphics[page=#2,scale=1.1,raise=-\dp\strutbox]{reduction/#1.pdf}}
\newcommand{\reductioncase}[2]{\ref{reduction-#1} & \reductionimage{lower}{#2} & \reductionimage{upper}{#2} \\}
\begin{document}

\maketitle

\newcommand\blfootnote[1]{%
  \begingroup
  \renewcommand\thefootnote{}\footnote{#1}%
  \addtocounter{footnote}{-1}%
  \endgroup
}

\blfootnote{The authors acknowledge the support of the OP VVV MEYS funded
project CZ.02.1.01/0.0/0.0/16\_019/0000765 ``Research Center for Informatics''.
This work was supported by the Grant Agency of the Czech Technical University in Prague, grant \mbox{No.~SGS20/208/OHK3/3T/18}.
Supported by the grant 22-19557S of the Czech Science Foundation.}

\begin{abstract}
  In \MED attacker and defender play on a graph.
  Initially, the defender places guards on vertices.
  In each round, the attacker chooses a vertex to attack.
  Then, the defender can move each guard to a neighboring vertex and must move a guard to the attacked vertex.
  The \MED number is the minimum number of guards such that the graph can be defended indefinitely.

  In this paper, we study the \MED number of cactus graphs.
  We consider two variants of the \MED number: one allows multiple guards to occupy a single vertex, the second variant requires the guards to occupy distinct vertices.
  We develop several tools for obtaining lower and upper bounds on these problems and we use them to obtain an algorithm which computes the minimum number of required guards of cactus graphs for both variants of the problem.
\end{abstract}

\section{Introduction}\label{sec:introduction}

Consider the following game, played by an attacker and a defender on graph $G$.
The defender controls a set of guards, which he initially places on the vertices of $G$.
Each vertex can be occupied by at most one guard.

In each round, the attacker first chooses one vertex, which he \emph{attacks}.
The defender then must \emph{defend} against the attack by moving some or all of his guards along their adjacent edges, so that one of the guards moves to the attacked vertex.

If the attacked vertex is not occupied by a guard after the attack, the attacker wins.
The defender wins if he can defend indefinitely.

Defending a graph from attacks using guards for an infinite number of steps was introduced by Burger et~al.~\cite{ineq1}.
In this paper, we study the concept of \MED, which was introduced by Goddard et~al.~\cite{eternal-security-in-graphs} (eternal domination was originally called eternal security).
Here, the notion of the letter ``m'' emphasizes that multiple guards may move during each round.
There is also a variant of the problem studied by Goddard et~al.~\cite{eternal-security-in-graphs} where only one guard may move during each round, which is not considered in this paper.



The \MED number $\EDN(G)$ is the minimum number of guards which defend against all attacks indefinitely.
Goddard et~al.~\cite{eternal-security-in-graphs} established $\EDN$ exactly for paths, cycles, complete graphs and complete bipartite graphs.
Since then, several results have focused on finding bounds on $\EDN$ under different conditions or graph classes.
Among the studied graph classes are trees~\cite{klostermeyer2021eternal, henning2016trees, Klostermeyer2015}, grids \cite{3n-grids, 5n-grids, Messinger2017, Inerney2021, finbow2020eternal, Inerney2019, Lamprou2019}, and interval graphs~\cite{proper-interval-graphs, Rinemberg2019}.
For a good survey of other related results and topics, see Klostermeyer and Mynhardt~\cite{survey-article}.




Very little is known regarding the algorithmic aspects of \MED.
The decision problem (asking if $\EDN(G)\le k$) is NP-hard and belongs to \EXPTIME, however, it is not known whether it lies in the class \PSPACE~\cite{survey-article}.

\subsection{Original Results}
In this paper, we focus on the class of cactus graphs (connected graphs where each edge lies in at most one cycle) and provide an algorithm for computing $\EDN$ in cactus graphs.
In \Cref{sec:toolbox}, we provide a set of tools with more general applications to proving upper and lower bounds of $\EDN$.
Those tools are then used in \Cref{sec:reducing_cactus_graph} to describe a set of reductions, which allow us to compute $\EDN$ of cactus graphs.
This is a significant expansion of basic principles which were introduced by Klostermeyer and MacGillivray~\cite{eternal-dom-sets}, in which they provide an algorithm for computing $\EDN$ of trees.



Our main result is summarized in the following theorem.
\begin{restatable}{theorem}{cactusalgoritm}\label{thm:cactus-algorithm}
  Let $G$ be a cactus graph on $n$ vertices.
  Then there exists a polynomial algorithm which computes $\EDN(G)$.
\end{restatable}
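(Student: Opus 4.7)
The plan is to combine the lower-bound and upper-bound tools of \Cref{sec:toolbox} with the reduction rules of \Cref{sec:reducing_cactus_graph} into a straightforward bottom-up algorithm on the block-cut tree of the cactus.

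First I would fix an arbitrary vertex $r$ of $G$ as the root of the block-cut tree $T(G)$. Since $G$ is a cactus, every block is either a single edge or a single cycle, and two distinct blocks share at most one vertex (a cut vertex). I would then process $T(G)$ in a post-order fashion: repeatedly pick a block $B$ such that all blocks below $B$ in $T(G)$ have already been processed, so that $B$ is attached to the rest of $G$ through a single cut vertex $c$ (or $B$ itself is the last remaining block). The invariant maintained throughout is that the running graph $G'$ is still a cactus, together with an integer counter $k$ such that $\EDN(G) = \EDN(G') + k$.

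Next I would invoke the reductions of \Cref{sec:reducing_cactus_graph}: each reduction recognises a local pattern in a leaf block $B$ (a pendant edge, a path configuration, a pendant cycle of a certain length and parity, a cycle meeting the rest of $G$ in a specific way, etc.), removes part of $B$, and increases $k$ by a fixed constant whose correctness is justified by pairing one of the bounding tools from \Cref{sec:toolbox} for the upper bound with a matching lower-bound tool for the same quantity. The role of \Cref{sec:toolbox} here is exactly to certify, reduction by reduction, that the local replacement changes $\EDN$ by the declared amount regardless of the rest of the graph; this is why the reductions compose correctly along the block-cut tree. I would argue exhaustiveness by a case analysis on the shape of the leaf block $B$ and on the type of the cut vertex $c$ (whether $c$ already carries extra pendant structure, whether $B$ is an edge or a cycle, and in the cycle case on its length modulo a small constant): in every case at least one reduction from \Cref{sec:reducing_cactus_graph} is applicable, so the procedure terminates with a constant-size base graph whose $\EDN$ can be read off directly.

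For the complexity analysis I would observe that each reduction is local (it inspects only $B$ and the immediate neighbourhood of $c$) and strictly decreases the number of vertices of $G'$, so it takes constant time and at most $O(n)$ reductions occur. Combined with a linear-time construction of the block-cut tree, this yields an $O(n)$-time algorithm, which is stronger than the polynomial bound claimed in the statement.

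The step I expect to be the main obstacle is the exhaustiveness argument: showing that the finite library of reductions defined in \Cref{sec:reducing_cactus_graph} really does cover every possible leaf block together with every possible ``interface'' at its cut vertex, without leaving an irreducible configuration. This requires a careful enumeration, in particular for cycle blocks where both the length of the cycle and the presence of already-reduced pendant structure at the cut vertex interact to determine how many guards the cycle must contribute; getting the case split right (and matching it to a tool from \Cref{sec:toolbox} that proves optimality) is the delicate part of the proof.
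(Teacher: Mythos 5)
Your overall skeleton matches the paper's: decompose via the block-cut tree, repeatedly reduce a leaf block (a pendant subtree or a leaf cycle) by a rule that pairs an unconditional lower bound with a constructive upper bound, accumulate the constants, and argue exhaustiveness of the rule set by a case analysis on the leaf block; each reduction shrinks the graph, giving linearly many steps. You also correctly single out exhaustiveness as a delicate point -- the paper settles it with \Cref{lem:exhaustive_application}, \Cref{lem:exhaustive_application_cycle} and the search diagram of \Cref{fig:case_study}, and note that this forces a detour through an intermediate class of cactus multigraphs with loops and multiedges, handled by Reductions~\ref{reduction-m1} and~\ref{reduction-m2}.

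The genuine gap is your claim that the toolbox certifies each local replacement ``regardless of the rest of the graph,'' so that the reductions compose for free, with $\EDN(G)=\EDN(G')+k$ as the only invariant. That invariant is too weak to carry the induction. The lower bounds are indeed unconditional, but the upper bounds are conditional on two inductive hypotheses: first, $\EDN(G')=\EGC(G')$, which \Cref{lem:technique} needs to close the sandwich; second, and more substantially, the existence of an optimal strategy on $G'$ satisfying \Cref{prop:labelled_strategy} -- Reductions~\ref{reduction-c1}, \ref{reduction-c4} and~\ref{reduction-c5} can only be expanded over an edge whose edge states $L$, $R$, $N$ are nonempty, disjoint and dominating (\Cref{prop:nonempty_edge_states}). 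Consequently every upper-bound proof must end by re-establishing this property for the newly created edges, and the paper needs dedicated arguments (\Cref{sec:properties}, \Cref{obs:crrc_expansion}, \Cref{lem:rwwr_non_property}, including reordering which reduction is applied in certain constant cases) to keep this strategy invariant alive through the recursion. Without stating, maintaining, and restoring that invariant after each step, your bottom-up procedure still outputs a number, but the correctness of its upper-bound side is not justified.
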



\subsection{Preliminaries}\label{sec:preliminaries}

Let us now review all the standard concepts formally.
A graph is a \emph{cactus} if its every edge lies on at most one cycle.
For an undirected graph $G$ let a \emph{configuration} be a multiset of its vertices $\CONF=\{c_1,\dots,c_n \mid c_i \in V(G)\}$.
We will refer to the elements of configurations as \emph{guards}.
If a vertex is an element of a configuration, then it is \emph{occupied} (by a guard).
Two configurations $\CONF_1$ and $\CONF_2$ of $G$ are mutually \emph{traversable} if there is some set of pairs $\TRAN(\CONF_1,\CONF_2) = \{(v_1, u_1), (v_2, u_2), \dots, (v_n, u_n)\}$ such that $\CONF_1 = \{v_1,\dots,v_n\}$ and $\CONF_2 = \{u_1,\dots,u_n\}$ and $\{v_i,u_i\} \in E(G)$ for all $i$ from $1$ to $n$.
We perceive the guards as tokens which move through the graph.
The elements of $\TRAN(\CONF_1,\CONF_2)$ are called \emph{movements} and a single ordered pair among them is a \emph{move} of a guard.
A guard that moves in $\TRAN(\CONF_1,\CONF_2)$ to the same vertex where he started is called \emph{stationary}.
A \emph{strategy} in $G$ is a graph $S_G=(\CONFS,\FF)$ where $\CONFS$ is a set of configurations over $V(G)$ such that all of the configurations have the same size and $\FF \subseteq \CONFS^2$ describe possible transitions between the configurations.
The \emph{order} of a strategy is the number of guards in each of its configurations.
In papers on this topic it is often assumed that the strategy edges are given implicitly as $\FF = \big\{\{\CONF_1,\CONF_2\} \in \CONFS^2 \mid \hbox{$\CONF_1$ and $\CONF_2$ are mutually traversable in $G$}\big\}$.
For our purposes, we want to prescribe the strategy explicitly.
We introduce the notions for exact strategy prescription in \Cref{sec:upper_bounds}.

We call the strategy $S_G$ to be \emph{defending against vertex attacks} if for any $\CONF \in \CONFS$ the configuration $\CONF$ and its neighbors in $S_G$ cover all vertices of $G$, i.e., when a vertex $v \in V(G)$ is ``attacked'' one can always respond by changing to a configuration which has a guard at the vertex $v$.
Formally, $S_G = (\CONFS,\FF)$ is defending if
\[
  (\forall C \in \CONFS)\,(\forall v\in V(G))\,\big(v \in C \lor (\exists C' \in \CONFS) (\{C,C'\} \in \FF \land v \in C')\big).
\]

Note that every configuration in a strategy which defends against vertex attacks induces a dominating set in $G$ as otherwise, the attacker would win in the next round.

We investigate two variants of the game.
The variants differ in whether they allow multiple guards to occupy the same vertex.
Let an \emph{m-Eternal Guard Strategy} in $G$ be a strategy defending against vertex attacks in $G$.

\defProblemQuestion{\EDNproblem}
{An undirected graph $G=(V,E)$.}
{What is the minimum number of guards $\EDN$ such that there exists an m-Eternal Guard Strategy $S_G$ where each vertex is occupied by at most one guard that defends against vertex attacks in $G$?}

\defProblemQuestion{\EGCproblem}
{An undirected graph $G=(V,E)$.}
{What is the minimum number of guards $\EGC$ such that there exists an m-Eternal Guard Strategy $S_G$ that defends against vertex attacks in $G$?}

The open neighborhood of $u$ in $G$ will be denoted as $N_G(u)$.
By $P_n$ we denote a path with $n$ edges and $n + 1$ vertices.
By $G[U]$ we denote the subgraph of $G$ induced by the set of vertices $U \subseteq V(G)$.

\section{High-level Overview of the Proof}\label{sec:overview}

In order to solve the \EDNproblem and the \EGCproblem on cactus graphs, we use induction on the number of vertices.
Base cases will be presented in \Cref{def:base_case}.
In the induction step, we show how to reduce cactus graph $G$ to a smaller cactus graph $G'$ while showing lower bound and upper bound in the following ways.
Reduction from $G$ to $G'$ is done using \Cref{obs:identification,obs:lb_leaf,obs:lb_star,obs:lb_path}.
These directly show a lower bound $\EGC(G) \geq \EGC(G') + K$ for some constant $K$.
Then, we show an expansion from $G'$ to $G$.
We assume that $G'$ has an optimal defending strategy that holds several nice properties from the induction.
We show that a part of the graph $G'$ along with its strategy can be exchanged for a different one by showing that \Cref{def:interface_equivalent} holds for them.
Such parts are then exchanged using \Cref{def:expansion} which expands $G'$ into $G$ while showing that an upper bound devised by \Cref{obs:upper_bound} applies.
This gets us an upper bound $\EDN(G) \leq \EDN(G')+K$ (the same $K$ as in the lower bound).
Combining the lower and upper bound using \Cref{lem:technique} gets us the optimal number of guards for $G$.

The used reduction depends on a leaf component that the cactus graph contains by \Cref{obs:blockcut_decomposition}.
One case is that the subgraph is a tree and the second case is that there is a \emph{leaf cycle} -- a cycle with leaves which is connected to the rest of the graph via a single articulation.
We split the reductions into three groups.

The first group called \emph{leaf reductions} shown in \Cref{sec:reducing_trees} has a few simple reductions of leaves which are not incident to a leaf cycle.
These were shown to be sufficient to determine the $\EDN$ for any tree by Goddard, Hedetniemi, and Hedetniemi~\cite{eternal-security-in-graphs}.
We reintroduce these reductions in our framework and show more general results so that the reductions can be used over tree subgraphs of non-tree graph classes.
They also serve as an introductory example of how to use the tools from \Cref{sec:toolbox}.

Further reductions are more involved and require non-trivial manipulation with strategies.
It is beneficial to establish strategies with nice properties in the induction to allow a stronger induction step.
In \Cref{sec:properties}, we show the properties which are used in the two other groups of reductions.

The last two groups called \emph{cycle reductions} and \emph{constant component reductions} are shown in \Cref{sec:cycle_reductions,sec:constant_reductions}.
Cycle reductions concern substructures that appear on leaf cycles.
We fix a leaf cycle and use these reductions repeatedly on it.
Each reduction shortens the leaf cycle.
Eventually, the cycle is very short and is reduced by constant component reductions.
After these reductions, the leaf cycle is removed entirely and only zero, one, or two leaves are left in its place.
Such leaves are then processed either as tree leaves or leaf vertices adjacent to another leaf cycle.

\section{Reducing Trees}\label{sec:reducing_trees}

In this section, we intuitively present tools to achieve lower and upper bounds and which will be formally introduced in \Cref{sec:toolbox}.
We focus on tree reductions, which were first described by Goddard, Hedetniemi, and Hedetniemi~\cite{eternal-security-in-graphs} as a part of the linear algorithm for computing the m-eternal domination number $\EDN$ on trees.
We now show this set of reductions along with the proofs of their correctness in \Cref{lem:reduction-t1,lem:reduction-t2,lem:reduction-t3}.

For graph $G$, let us have a vertex $u \in V(G)$ which is adjacent to $\ell \ge 1$ leaves and has degree $d$.
Let $v$ be one of the leaves adjacent to $u$.
We define three leaf reductions of $G$ to $G'$ as follows.
See~\Cref{tab:leaf_reductions} for an illustration of respective bound proofs.

\begin{reduction}\customlabel{reduction-t1}{$t_1$}
  If $\ell=1$ and $d \le 2$, let $G' = G \setminus \{u,v\}$.
\end{reduction}
\begin{reduction}\customlabel{reduction-t2}{$t_2$}
  If $\ell>2$, let $G' = G \setminus \{v\}$.
\end{reduction}
\begin{reduction}\customlabel{reduction-t3}{$t_3$}
  If $\ell=2$ and $d=3$, let $G' = G \setminus \{\text{all leaves adjacent to $u$}\}$.
\end{reduction}

Reductions~\ref{reduction-t2} and~\ref{reduction-t3} can be joined to a single reduction which removes all leaves of a vertex with $\ell \ge 2$ and $d = \ell +1$ (used in \cite{eternal-security-in-graphs}).
However, Reduction~\ref{reduction-t2} may be used in a wider range of scenarios as it does not require a specific value of $d$.

Assume now, that we know the optimum number of guards for $G'$ (for both $\EGC$ and $\EDN$).
Our goal is to show two things.
By showing that $G$ always uses at least $K$ more guards than $G'$ we get a lower bound on the number of guards necessary for $G$.
By showing that there is a strategy for $G$ which uses at most $K$ more guards than an optimum strategy on $G'$ we get an upper bound on the number of guards on $G$.
Together, these bounds give us an optimum number of guards for $G$.
This concept is formally introduced in \Cref{lem:technique}.

\begin{table}[h]
  \caption[Leaf reductions]{%
    Leaf reductions;
    Lower bound side depicts clique reductions (removal of marked vertices and joining its neighborhood with a clique);
    Upper bound side labels vertices with Greek letters of states where they belong, and arrows show how one state transitions to another.
    The marked groups of vertices are created with \Cref{def:copy}.
  }\label{tab:leaf_reductions}
  \centering
  \begin{tabular}{|c|c|c|}
    \hline
    Reduction & Lower bound & Upper bound \\
    \hline
    \reductioncase{t1}{1}
    \hline
    \reductioncase{t2}{2}
    \hline
    \reductioncase{t3}{3}
    \hline
  \end{tabular}
\end{table}

Having the reductions in hand see \Cref{fig:trees-example} for an example of how the reductions are used to construct a strategy for a tree.
\begin{figure}[h]
  \centering
  \includegraphics{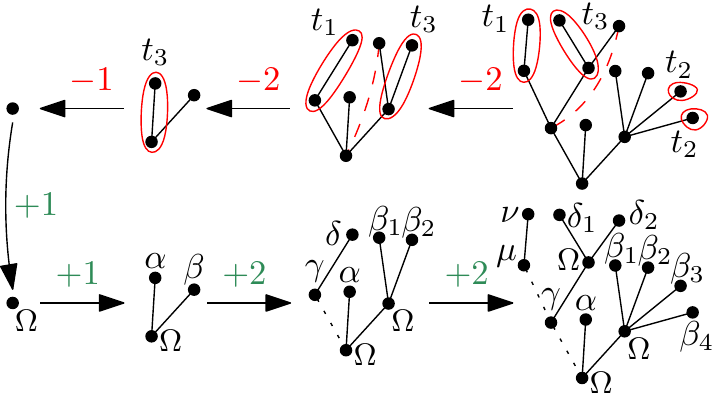}
  \caption[Example of reducing tree graphs]{%
    An example application of leaf reductions on a tree graph.
    Dotted lines signify that the strategy does not use that edge, and the strategies on subtrees are independent, which is caused by Reduction~\ref{reduction-t1}.
    Note how Reduction~\ref{reduction-t3} can be used even when there is no vertex $a$.}%
  \label{fig:trees-example}
\end{figure}

We now proceed to show the bounds obtained from these reductions.
Generally, the proofs contain lower bound and upper bound portions, see \Cref{tab:leaf_reductions} for accompanying illustrations.
Lower bounds can be shown quite easily -- delimit a connected part of a graph which is guaranteed to contain $K$ guards, remove it, and join its neighborhood with a clique.
Upper bounds are more tricky -- we assume some optimal strategy on $G'$, which has nice properties, and then we expand it to $G$ while preserving the properties.
The notation used in the following proofs is defined in \Cref{sec:toolbox}.

We say that graph $G$ is \emph{defended} with $k$ guards if $k = \EDN(G) = \EGC(G)$ and the strategy using $k$ guards is \emph{proper} in the sense of \Cref{prop:labelled_strategy}, which is defined in \Cref{sec:reducing_cactus_graph}.
This allows us the state the lemmas concisely.
Let us now see the proofs for the three leaf reductions.

\begin{lemma}\label{lem:reduction-t1}
  \standardReductionLemma{t1}{1}
\end{lemma}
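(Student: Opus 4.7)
The plan is to establish matching lower and upper bounds $\EGC(G) \geq \EGC(G') + 1$ and $\EDN(G) \leq \EDN(G') + 1$, and then combine them via \Cref{lem:technique}; together with the inductive assumption that $G'$ is defended with $\EDN(G') = \EGC(G')$ guards this will yield that $G$ is defended with exactly one more guard. In parallel I must maintain that the produced strategy on $G$ is proper in the sense of \Cref{prop:labelled_strategy}.

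For the lower bound I would apply the clique-reduction/identification technique depicted in the left column of \Cref{tab:leaf_reductions} and formalized by \Cref{obs:identification}. Since $v$ is a leaf whose only neighbor is $u$, every dominating configuration of $G$ must contain a guard on $u$ or on $v$. Given any defending strategy on $G$ of order $k$, I would erase one such guard from every configuration, and in every transition replace any move that enters or leaves $\{u,v\}$ through $u$ by a stationary stay at the unique remaining neighbor $w$ of $u$ (this exists only when $d=2$; if $d=1$ the pendant is an isolated edge and no rerouting is needed). The outcome is a defending strategy on $G'$ of order $k-1$, giving $\EGC(G) \geq \EGC(G') + 1$.

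For the upper bound I would start from an optimal proper \LAB on $G'$ (inductive hypothesis) and apply the copy construction of \Cref{def:copy}: every configuration $C$ of $G'$ is replaced by the twin pair $C \cup \{u\}$ and $C \cup \{v\}$, joined by a single swing of the extra guard across the edge $uv$. Every transition $C \to C'$ of the old strategy lifts to a transition between corresponding twins in which the new guard is stationary, and any attack on $u$ or $v$ is served by the swing inside each pair. The new guard never leaves $\{u,v\}$ and no old guard ever enters it, so no vertex carries more than one guard and the strategy is valid for both $\EDN$ and $\EGC$, yielding $\EDN(G) \leq \EDN(G') + 1$.

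The main obstacle will be checking that the extended strategy still satisfies the clauses of \Cref{prop:labelled_strategy}. Fortunately the pendant is attached to the rest of $G$ only at $w$, and the extra guard is permanently confined to $\{u,v\}$, so every clause can be verified locally: the swing pair is self-contained, and the remaining configurations inherit their properties directly from the original proper strategy on $G'$. Once this local check is done, \Cref{lem:technique} delivers $\EDN(G) = \EGC(G) = \EDN(G') + 1$ together with a proper strategy realizing this value, completing the proof.
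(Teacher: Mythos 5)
Your proposal is correct and follows essentially the same route as the paper: the lower bound is the leaf specialization of the clique reduction (\Cref{obs:lb_leaf}, which you re-derive inline via the guard-rerouting argument of \Cref{lem:lower_bound}), the upper bound dedicates one new guard to defend $\{u,v\}$ independently --- exactly the Cartesian product of the strategy graph on $G'$ with a single edge that the paper mentions after its proof --- and the two are combined via \Cref{lem:technique}. Your extra remark about verifying \Cref{prop:labelled_strategy} is harmless but vacuous here, since that property only constrains cycle edges and the pendant $\{u,v\}$ introduces none.
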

\begin{proof}
  As $\{u,v\}$ is a leaf and its neighbor, there is always at least one guard so we may apply \Cref{obs:lb_leaf} on $\{u,v\}$ to get a lower bound of $\EGC(G) \geq \EGC(G') + 1$.
  To get an upper bound $\EDN(G) \leq \EDN(G') + 1$, we dedicate one new guard to defend $\{u,v\}$ independently on the rest of the strategy.
  Putting the lower bound and upper bound together using \Cref{lem:technique} we get that $G$ is defended with $1$ more guard than $G'$.
\end{proof}

Note, that the final strategy graph after Reduction~\ref{reduction-t1} is a Cartesian product of the strategy graph on $G'$ and a graph with a single edge.
Cartesian product is a basis for \Cref{def:copy} where we introduce an operation which joins strategies even if the strategies are not entirely independent.
We shall use this operation along with a property shown in \Cref{lem:always_present_guard} -- that a strategy can be altered so that a vertex adjacent to multiple leaves is always occupied.

To ease notation, we shall reserve the prime symbol ($'$) to denote structures of the reduced instance such as the graph $G'$, defending strategy $\mathcal{B}'$, strategy graph $S'_{G'}$, its states (vertices) $\Omega'$ and transitions (edges) $\FF'$, etc.

\begin{lemma}\label{lem:reduction-t2}
  \standardReductionLemma{t2}{0}
\end{lemma}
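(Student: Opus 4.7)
The plan is to prove $\EDN(G)=\EGC(G)=\EDN(G')=\EGC(G')$, following the same pattern as in \Cref{lem:reduction-t1}: establish the lower bound $\EGC(G) \geq \EGC(G')$ and the upper bound $\EDN(G) \leq \EDN(G')$, and then conclude via \Cref{lem:technique}. I would also maintain the ``proper'' (labelled) nature of the strategy so the induction can be invoked cleanly in later reductions.

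For the lower bound, I would apply \Cref{obs:lb_leaf} to the deleted leaf $v$. Since the only neighbor of $v$ is $u$, which remains in $G'$, the ``clique-closure'' of the neighborhood is trivial. The key point is that $v$ itself is not required to contain a guard at any given time: because $u$ has $\ell-1 \geq 2$ leaves left in $G'$ and hence (as seen in the upper-bound step) an always-present guard at $u$, any attack on $v$ can be serviced by a single neighbor of $u$. This gives the ``$K=0$'' version of the leaf lower bound, i.e., $\EGC(G)\geq \EGC(G')+0 = \EGC(G')$.

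For the upper bound I take an optimal proper strategy $\mathcal{B}'$ for $G'$ with $k=\EDN(G')$ guards. Since in $G'$ the vertex $u$ still has $\ell-1 \geq 2$ leaves, \Cref{lem:always_present_guard} applies, so I may assume that in every configuration of $\mathcal{B}'$ the vertex $u$ is occupied. I then expand $\mathcal{B}'$ to a strategy $\mathcal{B}$ for $G$ using (the Cartesian-product style) \Cref{def:copy}: for each configuration $\CONF' \in \mathcal{B}'$, I introduce a twin configuration $\CONF^* = (\CONF'\setminus\{u\}) \cup \{v\}$ obtained by sliding the always-present guard from $u$ to the new leaf $v$; add a transition $\CONF'\leftrightarrow \CONF^*$ that corresponds to the single guard move on the edge $\{u,v\}$; and lift every transition of $\mathcal{B}'$ both at the $\CONF'$-level (with $v$ untouched) and at the $\CONF^*$-level (first move $v\to u$, then simulate the $\mathcal{B}'$ response). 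Every attack in $G$ is then defendable: attacks on $v$ are answered by the $\CONF'\to \CONF^*$ edge; attacks on any $w \in V(G')$ are answered either directly from $\CONF'$ via $\mathcal{B}'$, or from $\CONF^*$ by first passing through $\CONF'$. No additional guards are introduced, so $\EDN(G)\leq \EDN(G')$, matching the lower bound.

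The only genuinely delicate step is verifying that the expanded strategy still satisfies \Cref{prop:labelled_strategy} (properness), so that the induction hypothesis carries through subsequent reductions. I expect this to be the main obstacle, because the twin construction must not cause two guards to coincide (relevant for $\EDN$) and the guard labelling must be consistent across the new transitions. Both conditions are handled by the hypothesis of \Cref{def:copy} together with the always-present-guard property of \Cref{lem:always_present_guard}: the displaced guard at $u$ is unambiguous in every configuration, so the swap $u\leftrightarrow v$ preserves labels and injectivity of the placement. Once properness is checked, the conclusion is immediate from \Cref{lem:technique}.
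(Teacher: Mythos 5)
Your upper-bound construction has a genuine gap that breaks the strategy. You twin \emph{every} configuration $\CONF'$ into $\CONF^* = (\CONF'\setminus\{u\})\cup\{v\}$ by sliding the permanently-present guard off $u$ onto the new leaf $v$. But $u$ has at least two \emph{other} leaves remaining in $G'$, and \Cref{lem:always_present_guard} only guarantees that $u$ is always occupied, not that all of its leaves are. So there exist states $\CONF'$ in which some leaf $w_1$ of $u$ is unoccupied; in the corresponding $\CONF^*$ neither $w_1$ nor $u$ carries a guard, and since $u$ is the unique neighbor of $w_1$, no single-round transition out of $\CONF^*$ can place a guard on $w_1$. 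An attack on $w_1$ from $\CONF^*$ is therefore unanswerable. Your remark that such attacks are handled ``by first passing through $\CONF'$'' concedes exactly this point: \Cref{def:defending} requires the attacked vertex to be covered by a state at distance one in the strategy graph, so a two-hop response is not permitted. Relatedly, your claim that every $\mathcal{B}'$-transition lifts to the $\CONF^*$-level fails: if the $\mathcal{B}'$-transition moves the guard from $u$ to a neighbor $x$, the displaced guard now sitting on $v$ cannot reach $x$ in one move. The paper avoids both problems by twinning only the states in $\vertexStates'(w)$ for a \emph{sibling leaf} $w$ of $v$ (via $\GCPOS{S'_{G'}}{\vertexStates'(w)}{\{\alpha,\beta\}}$), placing the $\beta$-copy's guard on $v$ in place of $w$, and connecting $\alpha$ to $\beta$ by the rotation $\TRAN(\alpha,\beta)=\{(w,u),(u,v)\}$; since $v$ and $w$ have identical neighborhoods, every transition touching $w$ mirrors to one touching $v$, and $u$ stays occupied throughout.

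The lower bound also rests on the wrong tool. \Cref{obs:lb_leaf} performs a clique reduction of $H=G[\{u,v\}]$ — it deletes $u$ as well as $v$ and yields $\EGC(G)\ge\EGC(G'')+1$ for a graph $G''$ different from $G\setminus\{v\}$; there is no ``$K=0$ version'' of it, and your justification (that $v$ need not always hold a guard) argues in the upper-bound direction rather than converting a strategy on $G$ into one on $G'$. The correct one-line argument is \Cref{obs:identification}: identifying $v$ with $u$ turns any defending strategy on $G$ into one on $G\setminus\{v\}$ with the same number of guards, giving $\EGC(G)\ge\EGC(G')$.
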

\begin{proof}
  Lower bound of $0$ is obtained by using \Cref{obs:identification} to identify $v$ with $u$ so $\EGC(G) \geq \EGC(G')$.

  For upper bound, from induction we have a defending \LAB $\mathcal{B}'$ of $G'$.
  We wish to alter it so it defends vertex $v$ as well.
  We apply \Cref{lem:always_present_guard} to alter $\mathcal{B}'$ so that $u$ is occupied in each state of $\Omega'$.
  Let $w$ be a leaf adjacent to $u$ distinct from $v$.
  We partition all states (vertices) of the strategy $S'_{G'}$ as follows.
  A state $\alpha'$ belongs to $\vertexStates'(w)$ if in $\alpha'$ vertex $w$ is occupied.

  Now we perform graph Cartesian product of $S'_{G'}$ with a single edge $\{\alpha,\beta\}$ over subset $\vertexStates'(w)$ (\Cref{def:copy}).
  Written in short as $S_{G'} = \GCPOS{S'_{G'}}{\vertexStates'(w)}{\{\alpha,\beta\}}$.
  This splits all vertices of the strategy where $w$ is occupied into two.
  We denote the new sets as $\alpha$ and $\beta$.
  This got us a new strategy graph $S_{G'}$ over the reduced graph $G'$.
  Now we expand from $G'$ to $G$ while altering the strategy slightly.
  In $\beta$ we substitute the guard on $w$ with a guard on $v$.
  The guards shall transition between states of $\alpha$ and $\beta$ as $\TRAN(\alpha,\beta) = \{(w,u),(u,v)\}$ while the rest of them shall not move.
  As $w$ and $v$ are siblings it follows that we can transition from any $\gamma \in \Omega$ to $w$ the same way as to $v$ if they were swapped.
  This remains defending by \Cref{lem:copy_keeps_domination}.
  Hence, $\EDN(G) \leq \EDN(G')$ and by \Cref{lem:technique} we get that $G$ is defended with the same number of guards as $G'$.
\end{proof}

The shown strategy basically defends $v$ in the ``same way'' it defends $w$.
We can do this when one can transition from one to the other in a single step while the remaining guards remain stationary.
For a detailed explanation see \Cref{def:copy} and its lemmas that show its properties.

The previous reduction bounds were proven with an extensive explanation.
In the following proofs, we just use the tools to arrive at the result directly.
Note that a very similar argument could be used to obtain an arbitrary number of leaves.

\begin{lemma}\label{lem:reduction-t3}
  \standardReductionLemma{t3}{1}
\end{lemma}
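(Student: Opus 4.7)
The plan is to establish the matching bounds $\EGC(G) \geq \EGC(G')+1$ and $\EDN(G) \leq \EDN(G')+1$ and then combine them via \Cref{lem:technique}, exactly as in \Cref{lem:reduction-t1,lem:reduction-t2}. Let $v_1,v_2$ denote the two leaves adjacent to $u$ and let $a$ be the remaining neighbor of $u$ (the degenerate case where $a$ is absent, noted in \Cref{fig:trees-example}, is handled by the same construction).

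For the lower bound I would apply \Cref{obs:lb_leaf} to the set $\{u,v_1,v_2\}$. The decisive point is that consecutive attacks on $v_1$ and then $v_2$ each force a guard onto the attacked leaf, while both leaves are reachable only through $u$; hence no single guard residing in $\{u,v_1,v_2\}$ can answer two successive leaf attacks, so at least two guards must occupy this closed neighborhood at every moment. That is one more than the single guard required to cover $u$ in $G'$.

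For the upper bound I would start from a defending \LAB $\mathcal{B}'$ of $G'$ given by induction and spend one extra guard. Partition the states of $S'_{G'}$ into $\vertexStates'(u)$, the states in which $u$ is occupied, and its complement (in which, since $u$ is a leaf of $G'$, the neighbor $a$ must be occupied). Using \Cref{def:copy}, split each state of $\vertexStates'(u)$ into twin copies $\alpha$ and $\beta$: in $\alpha$ the extra guard sits on $v_1$, in $\beta$ on $v_2$, while the original guard on $u$ stays put in both. In every state of the complementary set, the extra guard instead sits on $u$. Transitions within the $\alpha$-world and within the $\beta$-world inherit from $\mathcal{B}'$ with the extra guard stationary; the fresh $\alpha\leftrightarrow\beta$ transitions respond to an attack on the opposite leaf by swapping the extra guard with the $u$-guard through $u$; and transitions crossing the partition move the extra guard between $u$ and the appropriate leaf, matching whether the underlying $\mathcal{B}'$-move places a guard onto or off of $u$. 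The strategy so obtained remains defending by \Cref{lem:copy_keeps_domination}.

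The main obstacle I anticipate is the transition bookkeeping between the split $\vertexStates'(u)$ copies and the unsplit states in its complement: I must check that every edge of $S'_{G'}$ lifts to a legal strategy edge in $G$ under the one-extra-guard budget and, in particular, that the coordinated swap through $u$ is realizable whenever the $\mathcal{B}'$-move keeps a guard on $u$. Once this case analysis is settled, \Cref{lem:technique} combines the two bounds to conclude that $G$ is defended with exactly one more guard than $G'$.
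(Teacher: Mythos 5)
Your upper bound is correct and is in substance the paper's own: the construction you describe (split $\vertexStates'(u)$ into leaf-indexed copies via \Cref{def:copy}, park the extra guard on $u$ in the complementary states, and swap through $u$ between the copies) is exactly \Cref{lem:extend_leaves} instantiated with $\ell = 2$, which is what the paper invokes in one line.

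The lower bound, however, has a genuine gap. The claim that at least two guards must occupy $\{u,v_1,v_2\}$ in every configuration is false: the vertex $a$ lies outside this set at distance $2$ from each leaf, so a guard initially on $a$ can step to $u$ while the attack on $v_1$ is being answered and then defend the subsequent attack on $v_2$. Formally, the hypothesis of \Cref{lem:ink} for the sequence $(v_1,v_2)$ requires $d(w,v_2)>2$ for every $w$ outside $H$, which fails for $w=a$. A concrete counterexample is the tree on $\{a,u,v_1,v_2\}$ with edges $\{a,u\},\{u,v_1\},\{u,v_2\}$: it is defended by two guards, and the reachable configuration $\{a,u\}$ has only one guard in $\{u,v_1,v_2\}$. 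If you instead enlarge $H$ to $N[u]\ni a$ (as in \Cref{obs:lb_star}, where the two-guard claim does hold), then the clique reduction of \Cref{lem:lower_bound} deletes $a$ as well, so the reduced graph is not $G'$ and you obtain $\EGC(G)\ge\EGC(G'')+2$ for a different graph $G''$; the step ``that is one more than the single guard required to cover $u$ in $G'$'' is not a deduction the framework supports, since \Cref{lem:lower_bound} only compares $G$ with its own clique reduction. The paper's route avoids all of this: apply \Cref{obs:lb_leaf} to $H=\{u,v_1\}$ alone (one guard is always present there), and observe that the clique reduction of $\{u,v_1\}$ joins $v_2$ to $a$, producing a graph isomorphic to $G'$ with $v_2$ playing the role of the now-leaf $u$; this yields $\EGC(G)\ge\EGC(G')+1$ directly, after which \Cref{lem:technique} closes the argument as you intend.
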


\begin{proof}
  Lower bound of $1$ is obtained by using \Cref{obs:lb_leaf} on vertices $\{u,v\}$, which results in a graph isomorphic to one that is created by removing all leaves adjacent to $u$ which gets us $\EGC(G) \geq \EGC(G')+1$.
  For upper bound, we apply \Cref{lem:extend_leaves} which adds the two leaves to $u$ using one extra guard which directly results in $\EDN(G) \leq \EDN(G') + 1$.
  Using \Cref{lem:technique} we get that $G$ is defended with $1$ more guard than $G'$.
\end{proof}


Note that the bounds devised for Reductions~\ref{reduction-t1}, \ref{reduction-t2}, and \ref{reduction-t3} do not require the graph to be a tree.
We may use these reductions in any graph class.
Hence, we may reduce any leaves in subtrees which appear as parts of other graphs.
In particular, Reduction~\ref{reduction-t2} may be also used to reduce the number of leaves adjacent to any vertex to $2$ because connections of $u$ to other vertices do not interfere with the reduction.
Note that in that case, we obtain lower bounds for the \EGCproblem and upper bounds for the \EDNproblem.

It was previously shown by Goddard, Hedetniemi, and Hedetniemi \cite{eternal-security-in-graphs} that these reductions (originally given in a slightly different form) are sufficient to solve any tree graph.
Note that this can be shown by rooting the tree and repeatedly applying Reductions~\ref{reduction-t1}, \ref{reduction-t2}, and~\ref{reduction-t3} on the parent of the deepest leaf.

Also note, that the reductions \ref{reduction-t1} and \ref{reduction-t3} do not require the rest of the graph (signified by vertex $a$) to be there at all, hence, these solve base cases where only a single edge or a star remain.

When the reductions are used on a tree we get a partitioning of vertices into subtrees which are defended independently.
These components constitute a \emph{neo-colonization}, a notion introduced by Goddard et al.~\cite{eternal-security-in-graphs} and often used in contemporary papers.

\section{The \MED Toolbox}\label{sec:toolbox}

This section gives tools to show lower and upper bounds for the \MED problem.
Before we present the approach in detail we show several key ideas and a detailed structure for the rest of this section.
Throughout this paper, we reserve \emph{prime} (e.g. $G'$ and $\alpha'$) to denote structures of the reduced instance.

\begin{observation}\label{obs:edn_lb}
  $\DN(G) \le \EGC(G) \le \EDN(G) \le 2\cdot\DN(G)$ for any graph $G$.
\end{observation}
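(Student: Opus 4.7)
The leftmost inequality $\DN(G) \le \EGC(G)$ is immediate from the defending-strategy condition spelled out earlier: each configuration $C$ satisfies that every vertex $v$ is either in $C$ or lies in a traversable neighbor configuration $C'$, which forces $v$ to have a neighbor in $C$ (mutual traversability matches guards along edges). Hence $C$ is itself a dominating set, so $|C| \ge \DN(G)$. The middle inequality $\EGC(G) \le \EDN(G)$ is essentially by definition: any feasible strategy for the \EDNproblem is also a feasible strategy for the \EGCproblem since the distinct-vertex requirement is strictly stronger, so the minimum over the smaller feasible set can only be larger.

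The substantive bound is $\EDN(G) \le 2\DN(G)$, which I plan to prove by constructing an explicit strategy from a minimum dominating set $D$. Partition $V(G)$ into vertex-disjoint \emph{clouds} $\{C_d\}_{d \in D}$ by assigning each vertex $v$ to a fixed dominator $d(v) \in D$ that either equals $v$ or is adjacent to $v$, and setting $C_d := \{v : d(v) = d\}$. Every cloud $C_d$ contains $d$ and is entirely dominated by $d$. Allocate to each cloud a dedicated pair of guards, always positioned at two distinct vertices of $C_d$: an \emph{anchor} at $d$ itself and a \emph{mobile} guard somewhere in $C_d \setminus \{d\}$ (or a single anchored guard if $|C_d| = 1$). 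This uses at most $2|D|$ guards, and disjointness of the clouds ensures that no two guards from different pairs ever share a vertex.

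When vertex $v$ is attacked, I respond using only the pair associated with the cloud containing $v$: if $v = d(v)$, the anchor stays; otherwise the anchor moves to $v$ and the mobile guard moves to $d(v)$. Both moves follow edges because $d(v)$ dominates its entire cloud, and the two guards simply swap roles. All other pairs remain stationary, so every $d' \in D$ remains occupied and the next configuration is again a dominating set. The main subtlety I anticipate is the degenerate case where the mobile guard already occupies $v$ (in which case nobody needs to move) and checking that the swap respects the distinct-vertex constraint, which it does because cross-pair positions lie in disjoint clouds while the two positions inside a single pair are exchanged simultaneously. Once this invariant is maintained under arbitrary attack sequences, the construction yields a valid strategy for the \EDNproblem with at most $2\DN(G)$ guards, closing the chain.
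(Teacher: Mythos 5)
Your proof is correct and follows essentially the same route as the paper: the first two inequalities are argued identically (configurations are dominating sets; every distinct-vertex strategy is a multiset strategy), and for $\EDN(G) \le 2\DN(G)$ the paper simply cites the known construction of defending the neighborhood of each dominating vertex independently with a simple star strategy, which is exactly the cloud-plus-anchor/mobile-pair construction you spell out explicitly. The only difference is that you supply the details the paper delegates to Klostermeyer and Mynhardt, and your verification of the swap invariant and the distinct-vertex constraint is sound.
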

\begin{proof}
  Every \EDNinstance strategy can be applied as an \EGCinstance strategy so $\EDN \ge \EGC$.
  Every configuration in each of these strategies must induce a dominating set.
  Therefore, they are all lower bound by the domination number $\DN$.

  It is also known that an \EDNinstance strategy can be constructed by defending neighborhood of each vertex in the dominating set independently of each other (with a simple strategy for stars) that uses at most $2\cdot\DN(G)$ guards as shown by Klostermeyer and Mynhardt.~\cite{survey-article}.
\end{proof}

We now show the lemma which sums up how the bounds of the optimal strategies are obtained.

\begin{lemma}\label{lem:technique}
  Let us assume that for graphs $G$, $G'$, and an integer constant $k$
  \begin{align}
    \label{eq:ub} \EDN(G) &\leq \EDN(G') + k,\\
    \label{eq:lb} \EGC(G) &\geq \EGC(G') + k,\\
    \label{eq:ih} \EDN(G') &= \EGC(G').
  \end{align}
  Then $\EDN(G) = \EGC(G) = \EDN(G')+k = \EGC(G')+k $.
\end{lemma}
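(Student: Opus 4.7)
The plan is to observe that all four quantities are squeezed between the same value by combining the three given hypotheses with the universal inequality from \Cref{obs:edn_lb}. Concretely, I would write a single chain of inequalities that starts from $\EGC(G') + k$ and returns to it, forcing equality at every step.

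First I would invoke \Cref{obs:edn_lb} to recall that $\EGC(G) \le \EDN(G)$ for any graph, which is the only external fact needed. Then I would chain the bounds as
\[
\EGC(G') + k \;\le\; \EGC(G) \;\le\; \EDN(G) \;\le\; \EDN(G') + k \;=\; \EGC(G') + k,
\]
where the first inequality is \eqref{eq:lb}, the second is the observation, the third is \eqref{eq:ub}, and the final equality is the inductive hypothesis \eqref{eq:ih}. Since the leftmost and rightmost terms coincide, every inequality in the chain must be an equality.

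From this all four claimed identities drop out immediately: $\EGC(G) = \EGC(G') + k$ and $\EDN(G) = \EDN(G') + k$, while $\EDN(G)=\EGC(G)$ follows from the middle equality. Finally, $\EDN(G')+k = \EGC(G')+k$ is just \eqref{eq:ih} plus $k$. There is no real obstacle here — the statement is essentially a bookkeeping lemma whose only subtlety is remembering to cite \Cref{obs:edn_lb} to glue the upper-bound and lower-bound hypotheses together; the induction hypothesis \eqref{eq:ih} is what ensures the two outer ends of the chain actually match.
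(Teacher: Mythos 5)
Your proof is correct and is essentially the paper's own argument: both form the same cyclic chain of inequalities through $\EDN(G)$, $\EDN(G')+k$, $\EGC(G')+k$, and $\EGC(G)$, closed by $\EGC(G)\le\EDN(G)$ from \Cref{obs:edn_lb}, merely written starting at a different point of the cycle. No gap.
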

\begin{proof}
  Given the assumptions, we have
  \[
    \EDN(G) \putabove{\leq}{(\ref{eq:ub})} \EDN(G') + k \putabove{=}{(\ref{eq:ih})} \EGC(G') + k \putabove{\leq}{(\ref{eq:lb})} \EGC(G) \putabove{\leq}{Obs.~\ref{obs:edn_lb}} \EDN(G).
  \]
  As the first and the last term is identical all these values are equal.
\end{proof}

Hence, it suffices to prove that for $G$ and its reduction $G'$ we have $\EDN(G) \leq \EDN(G')+k$ and $\EGC(G') \leq \EGC(G)-k$.
If we already have the optimal strategy for $G'$, then our constructive upper bounds together with \Cref{lem:technique} give us an optimal strategy for $G$.

We present the tools for obtaining lower bounds in \Cref{sec:lower_bounds}, terminology and new concepts for upper bounds in \Cref{sec:upper_bounds}, and tools which use the new concepts to obtain upper bounds in \Cref{sec:upper_bound_tools}.
See \Cref{fig:toolbox} for a detailed section overview.

\begin{figure}[h]
    \centering
    \def\isusedin(#1,#2,#3,#4,#5){
    \path [draw=#5, -latex',rounded corners=1pt] ($(#1.west)+(0,-0.04)$) -| ($(#1.west)+(-#3*0.05,-0.1)+(-#4*0.015,0)$) |- ($(#2.west)+(0,0.04)$);
}
\def\isusedfurther(#1,#2){
    \path [draw=#2, -latex',rounded corners=3pt] (#1.east) -- ($(#1.east)+(0.3,0)$);
}
\begin{tikzpicture}[
    node distance=1.0em,
    title/.style={font=\fontsize{6}{6}\color{black}\ttfamily, anchor=west}
    ]
    \def\shift{3pt}

    \def\copycolor{red!90!purple}
    \def\addcolor{ red!50!orange}
    \def\grpcolor{ red!10!orange}
    \def\propercolor{gray}
    \def\idcolor{  blue!90!purple}
    \def\lbcolor{  blue!60!purple}
    \def\starcolor{blue!40!purple}
    \def\pathcolor{blue!10!purple}

    \node (lb) at (0,0) [title] { \Cref{sec:lower_bounds} Lower bounds};
    \node (identification) [below=of lb.west, title, xshift=\shift] {\Cref{obs:identification} Vertex identification};
    \node (subst) [below=of identification.west, title] {\Cref{obs:subtitute_guards} Substitute guard};
    \node (cliquered)    [below=of subst.west, title] {\Cref{def:clique_reduction} Clique reduction};
    \node (lblem) [below=of cliquered.west, title, xshift=\shift] {\Cref{lem:lower_bound} Lower bound lemma};
    \node (ink)    [below=of lblem.west, title] {\Cref{lem:ink} Ink lemma};
    \node (lbleaf) [below=of ink.west, title, xshift=\shift] {\Cref{obs:lb_leaf} Leaf lower bound};
    \node (lbstar) [below=of lbleaf.west, title] {\Cref{obs:lb_star} Star lower bound};
    \node (lbpath) [below=of lbstar.west, title] {\Cref{obs:lb_path} Path lower bound};
    \node [draw=black, fit={(lb) (lbpath) (identification)}] {};
    \isusedin(identification,lblem,3,0,\idcolor)
    \isusedin(subst,lblem,2,0,\idcolor)
    \isusedin(ink,lbleaf,2,0,\lbcolor)
    \isusedin(lblem,lbleaf,4,0,\lbcolor)
    \isusedin(ink,lbstar,2,1,\starcolor)
    \isusedin(lblem,lbstar,4,1,\starcolor)
    \isusedin(ink,lbpath,2,2,\pathcolor)
    \isusedin(lblem,lbpath,4,2,\pathcolor)
    \isusedfurther(identification,\idcolor)
    \isusedfurther(lbleaf,\lbcolor)
    \isusedfurther(lbstar,\starcolor)
    \isusedfurther(lbpath,\pathcolor)

    \node (ub) at (6.2,0) [title] {\Cref{sec:upper_bounds} Upper bounds};
    \node (defs) [node distance=1.4em, below=of ub.west, title, xshift=\shift, align=left] {\Cref{def:states,def:interface,def:transition,def:partial,def:defended_graph,def:partial_subgraph}:\\ State, Movement, Interface, Transition,\\ (Partial) Defended graph and subgraph};
    \node (symmetry) [node distance=1.4em, below=of defs.west, title, xshift=1*\shift] {\Cref{prop:strategy_symmetry} Symmetry};
    \node (compatible) [below=of symmetry.west, title] {\Cref{def:compatible} Compatible};
    \node (cutting) [below=of compatible.west, title] {\Cref{def:cut} Cutting};
    \node (composing) [below=of cutting.west, title, xshift=0*\shift] {\Cref{def:composing} Composing};
    \node (compatiblecomp) [below=of composing.west, title, xshift=1*\shift] {\Cref{lem:compatible_composition} Composing compatible};
    \node (intequiv) [below=of compatiblecomp.west, title, xshift=0*\shift] {\Cref{def:interface_equivalent} Interface equivalent};
    \node (trancompatible) [below=of intequiv.west, title, xshift=1*\shift] {\Cref{lem:transfered_compatibility} Transferred compatibility};
    \node (expansion) [below=of trancompatible.west, title] {\Cref{def:expansion} Expansion};
    \node (eqconst) [below=of expansion.west, title] {\Cref{lem:equivalency_constant} Equivalency constant};
    \node (upperbound) [below=of eqconst.west, title, xshift=1*\shift] {\Cref{obs:upper_bound} Upper bound};
    \node [draw=black, fit={(ub) (upperbound) (defs)}] {};
    \isusedin(compatible,cutting,4,0,        green!50!black)
    \isusedin(compatible,compatiblecomp,4,1, green!55!black)
    \isusedin(composing,compatiblecomp,2,0,  green!55!black)
    \isusedin(compatible,trancompatible,4,2, green!60!black)
    \isusedin(intequiv,trancompatible,5,1,   green!60!black)
    \isusedin(compatiblecomp,expansion,4,0,  green!65!black)
    \isusedin(trancompatible,expansion,4,0,  green!65!black)
    \isusedin(intequiv,eqconst,5,2,          green!70!black)
    \isusedin(eqconst,upperbound,4,0,        green!75!black)
    \isusedin(expansion,upperbound,5,0,      green!75!black)

    \node (leafgroup) at (-0.7,-3.9) [title] {\Cref{sec:upper_bound_tools} Tools for altering strategies};
    \node (copy) [below=of leafgroup.west, title, xshift=1*\shift] {\Cref{def:copy} Cartesian product over subset};
    \node (leafaddition) [below=of copy.west, title, xshift=1*\shift] {\Cref{lem:extend_leaves} Leaves addition};
    \node (groupstate) [below=of leafaddition.west, title, xshift=1*\shift] {\Cref{def:group_state} Group state};
    \node [draw=black, fit={(leafgroup) (leafaddition) (groupstate) (copy)}] {};

    \isusedin(copy,leafaddition,2,0,\addcolor)
    \isusedin(copy,groupstate,2,1,\grpcolor)
    \isusedfurther(copy,\copycolor)
    \isusedfurther(leafaddition,\addcolor)
    \isusedfurther(groupstate,\grpcolor)
\end{tikzpicture}
    \caption[Overview diagram of \Cref{sec:toolbox}]{%
      Overview of \Cref{sec:toolbox}.
      Boxes represent respective subsections;
      Arrows on the left side show which notions are used to prove other notions;
      Right arrows show which notions are frequently used in \Cref{sec:reducing_cactus_graph} to obtain results for cactus graphs.
    }%
    \label{fig:toolbox}
\end{figure}

\subsection{Lower Bounds}\label{sec:lower_bounds}

We start this section with a few elementary observations about strategies.
Then, we show a pair of lemmas which are the main tools in obtaining lower bounds.
Last, using these lemmas, we obtain three lower bound observations which we use frequently in \Cref{sec:reducing_cactus_graph}.

We say that $G'$ is a result of \textit{identifying} $u$ with $v$ in $G$ if it is a result of removing $u$ while adding the edges so that $N_{G'}(v) = N_G(u) \cup N_G(v)$.


\begin{observation}[Vertex identification]\label{obs:identification}
  Let $G$ be a graph and $u$ and $v$ be its two distinct vertices.
  Then for a graph $G'$, which is a result of identifying $u$ with $v$ in $G$, $\EGC(G') \le \EGC(G)$.
\end{observation}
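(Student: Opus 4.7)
The plan is to transfer any defending strategy on $G$ to one on $G'$ by collapsing $u$ onto $v$. Let $\pi:V(G)\to V(G')$ be the map with $\pi(u)=v$ and $\pi(x)=x$ for $x\neq u$, extended to configurations by taking multiset images. Given an optimal \EGCinstance strategy $S_G=(\CONFS,\FF)$ on $G$ with $\EGC(G)$ guards, I would define $S_{G'}$ to have configurations $\{\pi(\CONF):\CONF\in\CONFS\}$ and transitions $\{\{\pi(\CONF_1),\pi(\CONF_2)\}:\{\CONF_1,\CONF_2\}\in\FF\}$. Since $\pi$ preserves multiset cardinality, each image configuration still contains $\EGC(G)$ guards, so exhibiting $S_{G'}$ as a valid defending strategy on $G'$ immediately gives $\EGC(G')\leq\EGC(G)$.

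The first step is to check that transitions remain valid, i.e., that $\pi(\CONF_1)$ and $\pi(\CONF_2)$ are mutually traversable in $G'$ whenever $\CONF_1,\CONF_2$ are mutually traversable in $G$. I would push each move $(v_i,u_i)\in\TRAN(\CONF_1,\CONF_2)$ through $\pi$ and verify by cases. If neither endpoint is $u$, the move and its underlying edge survive identification verbatim. If exactly one endpoint equals $u$ (say $v_i=u$, $u_i\neq u$), then $u_i\in N_G(u)\subseteq N_{G'}(v)$ by the definition of identification, so $(v,u_i)$ is a legitimate move in $G'$. The remaining cases (both endpoints equal to $u$, or the move is exactly $(u,v)$ or $(v,u)$) collapse to a stationary self-move at $v$, which the paper already treats as an allowed movement via its notion of stationary guards.

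Next I would verify the defending property for $S_{G'}$. Pick any image configuration $\pi(\CONF)$ and any $y\in V(G')$. Since $V(G')=V(G)\setminus\{u\}$ we have $y\in V(G)$ and $y\neq u$, so the defending property of $S_G$ yields either $y\in\CONF$ or a neighbor $\CONF''$ of $\CONF$ in $\FF$ with $y\in\CONF''$. Because $\pi$ fixes $y$, this gives $y\in\pi(\CONF)$ or $y\in\pi(\CONF'')$, with $\{\pi(\CONF),\pi(\CONF'')\}$ already guaranteed to be a transition of $S_{G'}$ by the previous paragraph. Hence $S_{G'}$ defends $G'$ with $\EGC(G)$ guards, completing the inequality.

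I do not anticipate a real obstacle: the argument is a direct pushforward, and the only minor subtlety lies in the case analysis where moves touching $\{u,v\}$ degenerate into stationary moves at $v$. The \EGCproblem version is the natural target here because it tolerates multiple guards on one vertex, which is essential since two guards at $u$ and $v$ in $\CONF$ both get sent to $v$ in $\pi(\CONF)$; this is precisely why the analogous statement for $\EDN$ is not claimed.
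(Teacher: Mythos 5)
Your proposal is correct and follows essentially the same route as the paper: push the optimal strategy of $G$ forward by substituting $u$ and $v$ with the merged vertex, then check that traversability and the defending property survive. Your version is merely more explicit about the case analysis of moves touching $u$ or $v$ (and about why multiple guards on one vertex forces the $\EGC$ rather than $\EDN$ formulation), which the paper states in one line.
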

\begin{proof}
  Let $v' \in V(G')$ be the vertex created by identifying $u$ with $v$ in $G$.
  Let $S_G$ be an optimal strategy of $G$.
  Let $S_{G'}$ be a strategy on $G'$ which is the same as $S_G$ except that in every configuration each $u$ and $v$ is substituted by $v'$.

  Any pair of traversable configurations in $S_G$ is still traversable in $S_{G'}$ as in every movement $u$ and $v$ can be replaced by $v'$.
  Any attack on $V(G') \setminus \{v'\}$ is defended by a configuration in $S_{G'}$ which was created from a respective configuration of $S_G$, and $v'$ is defended by a configuration which defended $u$ in $G$.
\end{proof}

\begin{observation}\label{obs:subtitute_guards}
  If a graph has a clique on distinct vertices $u,v,w$, then guard movements $(u,v)$ and $(v,w)$ can be substituted with movement $(u,w)$ and a stationary guard on $v$.
\end{observation}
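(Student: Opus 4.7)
The plan is to show that the substitution is a local rewriting of guard movements that preserves both validity (every move uses an existing edge) and the resulting configuration (the same multiset of occupied vertices is produced). First, I would verify validity: since $\{u,v,w\}$ induces a clique, all three edges $uv$, $vw$, $uw$ lie in $E(G)$, so the replacement move $(u,w)$ travels along a genuine edge of $G$, and leaving a guard on $v$ stationary is trivially valid in the sense used elsewhere in the paper.

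Next, I would track the multiset of positions. Start from some configuration $\CONF_1$ containing (at least) one guard at $u$ and one at $v$. Applying the pair of movements $(u,v)$ and $(v,w)$ as part of some $\TRAN(\CONF_1,\CONF_2)$ removes one token from $u$ and one from $v$ while adding one token to $v$ and one to $w$; among these two guards the net effect is therefore to replace a $u$-token with a $w$-token while the occupancy of $v$ is unchanged. Applying instead the single move $(u,w)$ together with a stationary guard at $v$ also removes a $u$-token, adds a $w$-token, and leaves the occupancy of $v$ unchanged. Since configurations are multisets of vertices, the two prescriptions produce the same $\CONF_2$, and every other pair in $\TRAN(\CONF_1,\CONF_2)$ is untouched by the substitution, so the two movement sets can be swapped without altering the resulting configuration.

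Since this is pure bookkeeping, there is no real obstacle; the only subtlety is the multiset semantics of configurations, which is precisely what makes the stationary-guard interpretation at $v$ indistinguishable from the pass-through interpretation used in the original pair $(u,v),(v,w)$.
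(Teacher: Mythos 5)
Your argument is correct and matches the paper's intent exactly: the paper states this observation without proof, treating it as immediate, and your bookkeeping (the edge $\{u,w\}$ exists by the clique assumption, and both movement sets effect the same net change on the configuration multiset --- one token leaves $u$, one arrives at $w$, and $v$ stays occupied) is precisely the implicit justification. Nothing is missing.
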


Along with vertex identification, the following reduction is the main tool for obtaining lower bounds.

\begin{definition}[Clique reduction]\label{def:clique_reduction}
  Let $G$ be a graph and $H$ be its non-empty induced connected subgraph.
  By \emph{clique reduction} of $H$ in $G$ we mean the creation of a new graph $G'$ that is the result of removing $H$ from $G$ and mutually connecting all neighbors of $H$ in $G \setminus H$ by an edge.
\end{definition}

See \Cref{fig:mll} for an illustration of a clique reduction.
Using \Cref{obs:identification,obs:subtitute_guards} we now show that the clique reduction implies a lower bound on $G$ which can be later used to show tight strategy lower bounds.

\begin{lemma}[Lower bound lemma]\label{lem:lower_bound}
  Let $G$ be a graph and $H$ be its non-empty induced connected subgraph such that in at least one optimal m-eternal guard strategy, there are always at least $k$ guards present on $H$.
  Let $G'$ be the result of a clique reduction of $H$ in $G$.
  Then
  \[
    \EGC(G') \le \EGC(G) - k.
  \]
\end{lemma}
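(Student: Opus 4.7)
The plan is to first collapse $H$ to a single vertex via \Cref{obs:identification}, and then to trade that vertex together with $k$ of its guards for the clique that the clique reduction installs on $N_G(H)\setminus H$ in $G'$; movements through the collapsed vertex are substituted by direct clique edges using \Cref{obs:subtitute_guards}.

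Concretely, fix an optimal strategy $S_G$ on $G$ with $n=\EGC(G)$ guards and at least $k$ guards on $H$ in every configuration. Iterating \Cref{obs:identification} on the vertices of $H$ produces a graph $G''$ with $V(G'')=(V(G)\setminus H)\cup\{h\}$ and $N_{G''}(h)=N_G(H)\setminus H$, together with a defending strategy $S_{G''}$ on $n$ guards in which every configuration places at least $k$ guards on $h$. The graph $G'$ is then obtained from $G''$ by deleting $h$ and turning $N_{G''}(h)$ into a clique; I may assume $N_{G''}(h)\neq\emptyset$, since otherwise $H$ is a union of connected components and the conclusion follows by splitting. Pick any $p\in N_{G''}(h)$. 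For each $C''\in S_{G''}$ with $k_{C''}\ge k$ guards on $h$, let $C'$ keep all guards of $C''$ on $V(G)\setminus H$ and add $k_{C''}-k$ extra guards on $p$, so that $|C'|=n-k$. For each transition $C_1''\to C_2''$ let $s$, $j$, $m$ denote the numbers of guards staying on, leaving, or arriving at $h$; the invariant forces $s+j\ge k$ and $s+m\ge k$. I realize the required removal of exactly $k$ guards by combining $\min(s,k)$ stationary removals with $k-\min(s,k)$ applications of \Cref{obs:subtitute_guards}, each pairing a move $u\to h$ with a move $h\to v$ into a single clique move $u\to v$ in $G'$; surviving through-$h$ moves translate to $u\to p$ or $p\to v$, using the clique edges of $G'$. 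Feasibility follows from $j,m\ge k-s$.

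It remains to verify that each $C'$ dominates $V(G')$. Vertices outside $N_G(H)\setminus H$ inherit their dominators from $C''$ unchanged. For $v\in N_G(H)\setminus H$ with $v\notin C''$, the parked guard on $p$ dominates $v$ through the clique whenever $k_{C''}>k$. The critical case is $k_{C''}=k$, where no guard is parked: since $S_{G''}$ must defend an attack on $v$ without dropping $h$'s count below $k$, the defender's response either moves some $u\in N_G(v)\setminus H$ onto $v$, in which case $u\in C'$ dominates $v$ in $G'$ directly, or combines $h\to v$ with $u'\to h$ for some $u'\in N_G(H)\setminus H\setminus\{v\}$, in which case $u'\in C'$ dominates $v$ in $G'$ via the clique. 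In either case $v$ is dominated, and together with the validity of the constructed transitions this shows that $S_{G'}$ defends $G'$ with $n-k$ guards, giving $\EGC(G')\le\EGC(G)-k$.

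The main obstacle is precisely the $k_{C''}=k$ case: with no excess guard to park on $p$, a naive construction loses domination of some clique vertex. Extracting the needed dominator from the mandatory structure of a valid defender response in $S_{G''}$ is what upgrades the straightforward saving of $k-1$ guards, where one clique guard replaces the collapsed $h$, to the stronger saving of $k$ claimed by the lemma.
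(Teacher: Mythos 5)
Your proof is correct and follows essentially the same route as the paper's: collapse $H$ by vertex identification, use \Cref{obs:subtitute_guards} to turn $(u,h),(h,w)$ move pairs into direct clique moves of $G'$, and account transition-by-transition for exactly $k$ removable guards. The only cosmetic difference is that the paper identifies $H$ directly into an existing neighbor $v\in N_G(H)\setminus H$, so that $v$ plays the role of your parking vertex $p$ and remains in $G'$, whereas you collapse to a fresh vertex $h$ and then delete it. One small caution: your closing step checks that each $C'$ \emph{dominates} $G'$, but valid transitions plus domination do not in general imply the defending property (which concerns adjacency in the strategy graph, not in $G'$); here defense follows more directly from what you already established, namely that every transition of $S_{G''}$ carries over to $S_{G'}$ and that $v\in C_2''$ with $v\neq h$ implies $v\in C_2'$, so the detour through domination is unnecessary.
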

\begin{proof}
  If there is no neighbor of $H$ in $G$, then clique reduction removes $H$ and adds no edges.
  We can remove all the guards which were standing on $H$ so $G'$ is clearly defended by $\EGC(G) - k$ guards.

  Otherwise, there is a neighbor of $H$ in $G \setminus H$, say $v$.
  Let $S_G = (\CONFS, \FF)$ be an optimal strategy on $G$.
  We use \Cref{obs:identification} to identify all vertices $V(H)$ with $v$ in $G$ to obtain a subgraph of $G'$ along with a strategy $S_{G'}$.
  Note that in $G'$ each configuration of $S_{G'}$ has at least $k$ guards on $v$ because before identification $H$ always contained $k$ guards.
  Also, configurations which defend $v$ in $S_{G'}$ have at least $k+1$ guards on $v$ by the same argument.

  Let $S^-_{G'}$ be a strategy which is the same as $S_{G'}$, except it has $k$ less guard on $v$ in each configuration.
  We see that each configuration which defended $v$ in $S_G$ had at least $k+1$ guards on $v$ so it defends $v$ in $S^-_{G'}$.
  Guards which defended $V(G) \setminus (V(H) \cup \{v\})$ remain unchanged.
  It remains to check whether configurations which were traversable in $S_G$ remain traversable in $S^-_{G'}$.

  Our goal is to show that there exists a set of movements between each pair of configurations of $S_{G'}$ which have $k$ stationary guards on $v$ which are not needed for defending $G'$.
  Such guards then may be removed to obtain $S^-_{G'}$ and the remaining movements show that the respective configurations are traversable.

  Each movement $(u,h)$ and $(h,w)$ such that $h \in V(H)$ in $S_G$ has its respective pair of movements $(u,v)$, $(v,w)$ in $S_{G'}$.
  As $u$ and $w$ are neighbors of $V(H)$ then there is an edge $\{u,w\} \in E(G')$ added by the construction of $G'$.
  By \Cref{obs:subtitute_guards} we may substitute movements $(u,v)$, $(v,w)$ with $(u,w)$ and a stationary guard on $v$.

  Assume there are less than $k$ stationary guards on $v$ in $S_{G'}$ after applying the substitution exhaustively.
  Then there must be at most $k-1$ stationary guards and at least one guard which leaves $v$ or at least one guard which arrives to $v$, but there may not be both (one leaving and one arriving) as they would form $(u,v)$, $(v,w)$ pair and the substitution could be applied.
  When the guard is leaving or arriving there are at most $k-1$ guards in the final or starting configuration, respectively, which is a contradiction because there are at least $k$ guards on $v$.

  Removing $k$ guards from $v$ in $S_{G'}$ yields $S^-_{G'}$ where each configuration pair remains traversable, which concludes the proof.
\end{proof}

To use \Cref{lem:lower_bound} we need to show that an induced subgraph $H$ of $G$ is always occupied by at least $k$ guards.
To do that we have the following lemma.

\begin{lemma}[Ink lemma]\label{lem:ink}
  Let $H$ be an induced subgraph of $G$.
  Let $(v_1, v_2, \dots, v_k)$ be a sequence of vertices in $H$ such that it holds $d(u, v_i) > i$ for every $i$ and for every $u \in V(G) \setminus V(H)$, and also for every $j < i$ it holds that $d(v_j, v_i) > i - j$.
  Then there are at least $k$ guards on $H$ in every defending m-eternal guard configuration.
\end{lemma}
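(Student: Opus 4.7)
The plan is to construct an adversarial attack sequence that forces $k$ distinct guards of any given defending configuration to have started inside $V(H)$. Fix a strategy $S_G$ defending $G$ and an arbitrary configuration $C$ in it; the goal is to show that at least $k$ guards of $C$ lie in $V(H)$. Consider a run starting from $C$ in which the attacker attacks $v_1, v_2, \ldots, v_k$ in this order. For each $i \in \{1, \ldots, k\}$ let $g_i$ denote the guard (token) that the defender places on $v_i$ during round $i$, and let $p_i \in V(G)$ denote its position in $C$.

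The first step bounds the initial positions. Since each guard traverses at most one edge per round, after $i$ rounds the guard $g_i$ is at distance at most $i$ from $p_i$ in $G$. Since $g_i$ occupies $v_i$ at the end of round $i$, this yields $d(p_i, v_i) \le i$. Combined with the hypothesis $d(u, v_i) > i$ for every $u \in V(G) \setminus V(H)$, this forces $p_i \in V(H)$.

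The second step shows that the tokens $g_1, \ldots, g_k$ are pairwise distinct. Suppose $g_i = g_j$ for some $j < i$. Then this token stands on $v_j$ at the end of round $j$ and on $v_i$ at the end of round $i$, and therefore has only $i - j$ moves available in between, giving $d(v_j, v_i) \le i - j$. This contradicts the hypothesis $d(v_j, v_i) > i - j$. Consequently $C$ contains $k$ pairwise distinct tokens whose starting positions $p_1, \ldots, p_k$ all lie in $V(H)$, so at least $k$ guards of $C$ lie on $H$, as claimed.

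The main obstacle is purely notational: making the correspondence between an attack, the defender's response inside the strategy graph, and a specific persistent token in the multiset $C$ precise. Once one adopts the standard view that guards are persistent tokens whose trajectories traverse at most one edge per round, both steps above are immediate applications of the triangle inequality to the number of moves that each token has made, and no further case analysis is required.
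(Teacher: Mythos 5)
Your proof is correct and follows essentially the same route as the paper's: both play the attack sequence $(v_1,\dots,v_k)$ against an arbitrary configuration, use $d(u,v_i)>i$ to force each responding guard to have started in $H$, and use $d(v_j,v_i)>i-j$ to rule out reusing a guard that already answered an earlier attack. Your token-based phrasing is a slightly more explicit rendering of the same argument.
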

\begin{proof}
  Let $\CONF$ be any fixed configuration of a defending strategy.
  We show that $\CONF$ contains at least $k$ guards on $H$.

  Assume that the attacker performed a sequence of attacks $(v_1, v_2, \dots, v_k)$ one by one.
  At the $i$-th step of the attack sequence, the following is true.
  Guards who were standing on $V(G) \setminus V(H)$ at the beginning of the attack sequence are more than $i$ edges far from $v_i$ so they cannot reach $v_i$ in time to defend it.
  Similarly, any guard that defended $v_j$ with $j < i$ can not defend the attack on $v_i$ as their distance from $v_i$ is more than $i - j$ at the time they defended $v_j$.
  Therefore, none of the guards can reach $v_i$ in time and we need an additional guard placed on $H$.

  In total, we need $k$ guards on $H$ in a configuration to be able to defend the attack sequence.
  This is true for any configuration so every defending strategy must have $k$ guards on $H$ in every configuration.
\end{proof}

The operation in \Cref{lem:lower_bound} together with the lower bound obtained from \Cref{lem:ink} allows us to make a graph smaller while showing that the removed part required some minimum number of guards.
See an example usage of \Cref{lem:lower_bound,lem:ink} in \Cref{fig:mll}.

\begin{figure}[h]
  \centering
  \includegraphics{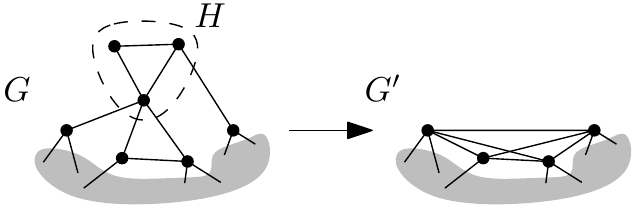}
  \caption[Example application of \Cref{def:clique_reduction}]{%
    A graph $G$ with an induced subgraph $H$.
    Graph $G'$ is obtained by a clique reduction from \Cref{def:clique_reduction}.
    By \Cref{lem:ink} we have that every configuration contains at least $1$ guard on $H$.
    Hence, by \Cref{lem:lower_bound} we have $\EGC(G) \ge \EGC(G') + 1$.
  }%
  \label{fig:mll}
\end{figure}

\begin{observation}\label{obs:lb_leaf}
  In graph $G$, let $v$ be a leaf vertex and let $u$ be its neighbor, and let $H = G[\{u,v\}]$.
  By \Cref{lem:ink} with a sequence $(v)$ we obtain that $1$ guard is on $H$.
  In other words, the closed neighborhood of every leaf must contain at least one guard otherwise an attack on the leaf could not be defended.
  Let $G'$ be a graph obtained by using the operation of \Cref{lem:lower_bound} on $H$, this gives us $\EGC(G) \ge \EGC(G') + 1$.
\end{observation}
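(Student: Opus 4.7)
The statement is essentially an immediate corollary of the two main tools just established (\Cref{lem:ink} and \Cref{lem:lower_bound}), so my plan is to verify the hypotheses of each in turn and combine them.

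First I would check the hypothesis of \Cref{lem:ink} for the induced subgraph $H = G[\{u,v\}]$ with the one-element sequence $(v_1) = (v)$. The condition to verify is that $d(w, v) > 1$ for every $w \in V(G) \setminus V(H)$. Since $v$ is a leaf, its unique neighbor is $u$, and $u \in V(H)$; hence every vertex outside $H$ has distance at least $2$ from $v$. The sequence has length one, so the pairwise condition $d(v_j, v_i) > i - j$ for $j < i$ is vacuous. Thus \Cref{lem:ink} yields that every defending configuration places at least $1$ guard on $H$, i.e., the closed neighborhood $\{u,v\}$ of the leaf must always contain a guard (which matches the intuitive fact that otherwise the next attack on $v$ would not be defendable).

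Next I would apply \Cref{lem:lower_bound} to $H$ with $k = 1$. The subgraph $H$ is a non-empty induced connected subgraph of $G$ (it is a single edge), and by the previous paragraph at least one optimal strategy has at least $1$ guard on $H$ at all times, satisfying the hypothesis of the lemma. Letting $G'$ be the result of the clique reduction of $H$ in $G$ (which removes $u$ and $v$ and turns the former neighbors of $u$ in $G \setminus H$ into a clique), the lemma immediately gives $\EGC(G') \le \EGC(G) - 1$, which rearranges to $\EGC(G) \ge \EGC(G') + 1$, as claimed.

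There is essentially no obstacle beyond carefully unpacking the definitions; the only point worth being explicit about is that "the neighborhood of $H$ in $G \setminus H$" is exactly $N_G(u) \setminus \{v\}$ (since $v$ has no neighbor outside $H$), so the clique reduction here coincides with deleting $u$ and $v$ and making $N_G(u) \setminus \{v\}$ into a clique, which is the form in which the reduction will be used later.
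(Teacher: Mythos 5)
Your proposal is correct and follows exactly the route the paper intends: the observation is stated as an immediate consequence of \Cref{lem:ink} (with the length-one sequence $(v)$) followed by \Cref{lem:lower_bound} with $k=1$, and the paper gives no further proof beyond that. Your verification of the distance hypothesis of the Ink lemma (every vertex outside $H=G[\{u,v\}]$ is at distance at least $2$ from the leaf $v$, and the pairwise condition is vacuous) is the only detail the paper leaves implicit, and you have filled it in correctly.
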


\begin{observation}\label{obs:lb_star}
  In graph $G$, let $u$ be a vertex which is adjacent to at least two leaves $\{v_1,v_2,\dots\}$, and let $H=N[u]$ denote the closed neighborhood of $u$.
  By \Cref{lem:ink} with a sequence $(v_1,v_2)$ we obtain that $2$ guards are on $H$.
  In other words, the closed neighborhood of $u$ must contain at least $2$ guards otherwise two consecutive attacks on different leaves adjacent to $u$ could not be defended.
  Let $G'$ be a graph obtained by using the operation of \Cref{lem:lower_bound} on $H$, this gives us $\EGC(G) \ge \EGC(G') + 2$.
\end{observation}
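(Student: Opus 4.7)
The statement is essentially a direct corollary of the two main technical results of \Cref{sec:lower_bounds}, namely \Cref{lem:ink} and \Cref{lem:lower_bound}, so my plan is mostly verification of the hypotheses of those lemmas for the chosen induced subgraph $H = N[u]$ and the vertex sequence $(v_1,v_2)$.

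First I would verify the distance conditions required by \Cref{lem:ink}. Since each $v_i$ is a leaf whose only neighbor is $u$, every path from any vertex $w$ to $v_i$ must pass through $u$, so $d(w,v_i) = d(w,u) + 1$. For $i=1$ and any $w \in V(G) \setminus V(H)$ we have $w \notin N[u]$, hence $d(w,u) \ge 2$ and therefore $d(w,v_1) \ge 3 > 1$; in fact this is even stronger than needed. For $i=2$, the same argument gives $d(w,v_2) \ge 3 > 2$, since $w \notin N[u]$ still forces $d(w,u) \ge 2$. The cross-condition for $j=1,i=2$ asks $d(v_1,v_2) > 1$, which holds because $v_1,v_2$ are distinct leaves both attached to $u$, so $d(v_1,v_2) = 2$. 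Thus the sequence $(v_1,v_2)$ satisfies every hypothesis of \Cref{lem:ink}, and we conclude that every defending configuration must carry at least $2$ guards on $H$.

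Next I would observe that $H = N[u]$ is a non-empty induced connected subgraph of $G$ (it is a star centered at $u$), so \Cref{lem:lower_bound} applies with $k=2$. Letting $G'$ be the clique reduction of $H$ in $G$ yields $\EGC(G') \le \EGC(G) - 2$, equivalently $\EGC(G) \ge \EGC(G') + 2$, which is exactly the claim.

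I do not anticipate a genuine obstacle here; the only place where one must be slightly careful is the second distance condition $d(w,v_2) > 2$, because a casual reading might suggest $d(w,v_2)$ could be as small as $2$. This is avoided precisely because $v_2$ is a leaf, so any walk from $w \notin H$ to $v_2$ is forced through $u$, giving the extra $+1$ in the distance. Once this is noted, the remainder of the argument is a one-line invocation of the two lemmas from the toolbox, exactly parallel to the proof of \Cref{obs:lb_leaf}.
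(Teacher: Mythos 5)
Your proof is correct and matches the paper's intended argument exactly: the observation is justified in the paper precisely by invoking \Cref{lem:ink} with the sequence $(v_1,v_2)$ and then \Cref{lem:lower_bound} with $k=2$, and you have simply filled in the routine verification of the distance hypotheses. One negligible quibble: $G[N[u]]$ need not literally be a star (non-leaf neighbors of $u$ may be adjacent to one another), but it is certainly non-empty and connected, which is all that \Cref{lem:lower_bound} requires.
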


\begin{observation}\label{obs:lb_path}
  Let us have graph $G$ and its induced subgraph $H$ that is isomorphic to a path on three vertices.
  We label these three vertices of $H$ as $u_1,u_2,u_3$ (in order).
  By \Cref{lem:ink} with a sequence $(u_2)$ we have the lower bound of $1$ on the number of guards on $H$.
  Let $G'$ be a graph obtained by using the operation of \Cref{lem:lower_bound} on $\{u_1,u_2,u_3\}$.
  This gives us $\EGC(G) \ge \EGC(G') + 1$.
\end{observation}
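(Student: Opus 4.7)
The plan is a direct two-step application of the tools already established in this section, mirroring the structure used in \Cref{obs:lb_leaf} and \Cref{obs:lb_star}. First I would invoke the Ink Lemma (\Cref{lem:ink}) to certify that at least one guard must always occupy a vertex of $H$. Then I would feed this bound into the Lower Bound Lemma (\Cref{lem:lower_bound}), whose clique-reduction construction produces $G'$ and delivers the inequality $\EGC(G) \ge \EGC(G') + 1$.

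For the first step, I would take the singleton sequence $(v_1) = (u_2)$ with $k=1$. The second hypothesis of \Cref{lem:ink} (pairwise distance between the $v_j$'s) is vacuous since the sequence has length one. The remaining hypothesis requires $d(u,u_2) > 1$ for every $u \in V(G) \setminus V(H)$; since $H$ is an induced $P_2$ in which $u_2$ is the middle vertex, the neighbors of $u_2$ inside $H$ are exactly $u_1$ and $u_3$, and the hypothesis amounts to the (implicit) requirement that $u_2$ has no further neighbors outside $H$. Under this condition the Ink Lemma yields that in every defending m-eternal guard configuration, at least one guard is located on $V(H)$.

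For the second step, $H$ is a non-empty, induced, connected subgraph of $G$, and by the previous paragraph every optimal strategy maintains $k=1$ guard on $H$ at all times. Therefore \Cref{lem:lower_bound} applies with this value of $k$, producing $G'$ via the clique reduction of \Cref{def:clique_reduction} and the bound $\EGC(G') \le \EGC(G) - 1$, which rearranges to the statement.

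There is no substantive obstacle here: the observation essentially stitches together two lemmas that have already done the real work, in precisely the same pattern as \Cref{obs:lb_leaf,obs:lb_star}. The only subtlety worth flagging is the implicit assumption that the middle vertex $u_2$ has no external neighbors in $G$, which is needed for the Ink Lemma's distance hypothesis; every downstream invocation of this observation in \Cref{sec:reducing_cactus_graph} should be checked against that requirement.
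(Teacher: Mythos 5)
Your proposal matches the paper's argument exactly: the observation is proved in-line by applying \Cref{lem:ink} with the singleton sequence $(u_2)$ to certify one guard on $H$, and then feeding $k=1$ into \Cref{lem:lower_bound} to obtain $\EGC(G) \ge \EGC(G') + 1$. Your flagged subtlety --- that the Ink Lemma's hypothesis $d(u,u_2)>1$ for $u \in V(G)\setminus V(H)$ forces the middle vertex to have no neighbors outside $H$ --- is a fair and worthwhile caveat that the paper leaves implicit (and which is satisfied in its only use, on three consecutive $\white$ vertices of a leaf cycle in Reduction~\ref{reduction-c4}).
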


\subsection{Upper Bounds}\label{sec:upper_bounds}

This section introduces notation to describe strategies which are used to achieve upper bounds for $\EDN$.
We assume that we have a graph $G$ and its reduced copy $G'$.
The main idea is that a strategy for $G'$ can be locally changed to obtain a strategy for $G$.
To accommodate this local change, we show how to cut and compose parts of the graph while preserving its strategy.
At the end of this section, we present a set of sufficient rules that allow such a local change.
Then, in \Cref{sec:upper_bound_tools}, we present tools which we use to obtain upper bounds.

In our constructions, we need to have control over the movements of the guards.
We also need a way to represent only part of the strategy over an induced subgraph of $G$.
To do so, we introduce states (labelled configurations) and \LAB that prescribes the guard movements on state transition.

\begin{definition}[States]\label{def:states}
  Let \emph{states} be a set of labels $\Omega$ and let \emph{state vertex mapping $P$ of $\Omega$ to $V(G)$} be $P \colon \Omega \to 2^{V(G)}$, i.e., a state $\alpha \in \Omega$ represents a subset of vertices $P(\alpha) \subseteq V(G)$ (also called guards) of a graph $G$.
  Let $\vertexStates(v) = \{ \beta \mid \beta \in \Omega, v \in P(\beta) \}$ (states that contain $v$) for every $v \in V(G)$.
\end{definition}

We will use Greek letters such as $\alpha,\beta,\gamma,\delta,\varphi$ to signify states or sets of states.
\emph{Move} of a guard is still an ordered pair of vertices $(u,v)$ such that $\{u,v\} \in E(G)$ or $u=v$ (\emph{stationary} guard).

Building towards a comprehensive definition of a \LAB, we first build a more general concept -- partial \LAB.
This will allow us to do cutting and composing with a well-defined strategy over a subgraph.

\begin{definition}[Interface]\label{def:interface}
  Let an \emph{interface} $R$ of a graph $G$ with respect to its supergraph $H$ be a subset of vertices such that
  \[
    R = \big\{u \mid (\exists v)\, u,v\in V(H), \{u,v\}\in E(H), u\in V(G), v\not\in V(G)\big\},
  \]
  i.e., those vertices of $G$ which have a neighbor in $V(H) \setminus V(G)$ in $H$ .
\end{definition}
\begin{figure}[h]
  \centering
  \includegraphics{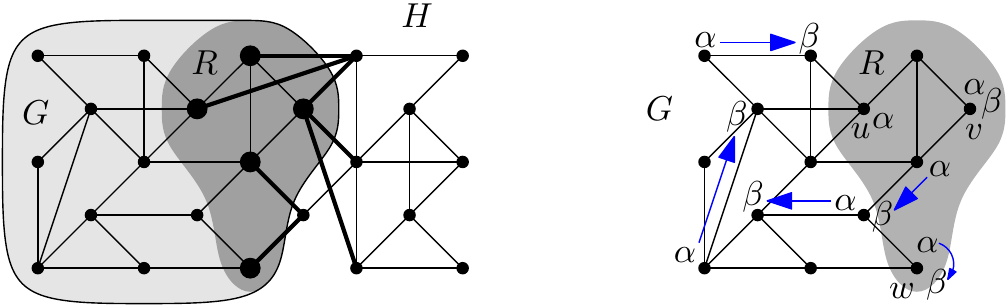}
  \caption[Graph interface and transitions]{%
    Left: An interface $R$ of $G$ with respect to $H$.
    Bold edges signify the cut between $V(G)$ and $V(H)\setminus V(G)$ that is responsible for the vertices in $R$;
    Right: Transition $\TRAN(\alpha,\beta)$ from $\alpha$ to $\beta$.
    Arrows signify movements of the transition.
    All vertices in states $\alpha$ and $\beta$ must be paired up with a movement if they are not in the interface $R$.
    In the interface, a guard moves from $u$ to the rest of the graph outside of $G$ so a movement is missing for $u$.
    Note the difference between $v$ and $w$: in $w$ the same guard stays on the vertex, in $v$ (as there is no $(v,v)$ movement) the guard on $v$ moves out and a different one moves to $v$.
  }%
  \label{fig:example_interface}
\end{figure}

See \Cref{fig:example_interface} for an example of an interface.
The interface marks the vertices where the strategy may be incomplete.
The transitions between states incorporate the interface by allowing the moves to be incomplete in the following way.

\begin{definition}[Transition]\label{def:transition}
  For states $\alpha$ and $\beta$ and a graph $G$ with an interface $R$, let a \emph{transition} (from $\alpha$ to $\beta$) denoted by $\TRAN(\alpha,\beta)$ be a set of moves such that
  \begin{itemize}
    \item $\TRAN(\alpha,\beta) \subseteq \big\{(u,v) \mid u\in P(\alpha), v\in P(\beta), \{u,v\} \in E(G) \vee u = v\big\}$,
    \item for each $u \in P(\alpha)\setminus R$ there exists exactly one $(u,v) \in \TRAN(\alpha,\beta)$,
    \item for each $v \in P(\beta)\setminus R$ there exists exactly one $(u,v) \in \TRAN(\alpha,\beta)$,
    \item for each $u \in P(\alpha)\cap R$ there exists at most one $(u,v) \in \TRAN(\alpha,\beta)$, and
    \item for each $v \in P(\beta)\cap R$ there exists at most one $(u,v) \in \TRAN(\alpha,\beta)$.
  \end{itemize}
\end{definition}

See \Cref{fig:example_interface} for an example of a transition and how it interacts with an interface.
Note that if the interface $R$ is empty, then the transition yields a bijection between guards of the states, which gives an exact prescription on how they move between the two states.

Transition gives us that each guard can be in relation with at most one other guard.
In our case, there is at most one guard on each vertex.
Hence, we may use the standard relation terminology for the set of pairs defined by a transition.

\begin{definition}[Partial \LAB]\label{def:partial}
  A \emph{partial \LAB} is $(G, S_G, P, \TRAN, R)$ where $G$ is a graph, $S_G=(\Omega,\FF)$ is a \emph{strategy graph} such that $\Omega$ is a set of vertices (states) and $\FF$ is a set of edges, $P$ is a state vertex mapping of $\Omega$ to $V(G)$, $R \subseteq V(G)$ is an interface of $G$, and $\TRAN$ maps orientations $(\alpha,\beta)$ and $(\beta,\alpha)$ of every edge $\{\alpha,\beta\} \in \FF$ to transitions $\TRAN(\alpha,\beta)$ and $\TRAN(\beta,\alpha)$, respectively.
\end{definition}
For various purposes, it may be beneficial to think of the strategy graph $S_G$ as an oriented graph (allowing non-symmetric transitions, see \Cref{prop:strategy_symmetry}), or even multigraph (allowing multiple different transitions between the same set of states).

The \LAB may defend against attacks indefinitely if it is in accordance with the following definition.
\begin{definition}[Defending]\label{def:defending}
  A partial \LAB $(G, (\Omega,\FF), P, \TRAN, R)$ is \emph{defending} $G$ if for every state $\alpha \in \Omega$ and each vertex $v\in V(G)$ there is $\beta \in \Omega$ such that $\{\alpha,\beta\} \in \FF$ and $v\in P(\beta)$.
\end{definition}
In other words, \Cref{def:defending} says that for each vertex of the graph every state is either occupying it or a state which occupies it is reachable with only one transition.
This directly leads to the following observation.
Recall that $\vertexStates(u)$ denotes a set of states that contain $u$.

\begin{observation}\label{obs:defending_dominating_set}
  A strategy is defending a graph if for every $u \in V(G)$ set $\vertexStates(u)$ is a dominating set of the strategy graph $S_G$.
\end{observation}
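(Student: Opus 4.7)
The plan is to unfold both conditions into their respective first-order statements and check they coincide. Recall that $D \subseteq \Omega$ dominates the strategy graph $S_G = (\Omega, \FF)$ exactly when every $\alpha \in \Omega$ either lies in $D$ or has some neighbour $\beta$ with $\{\alpha, \beta\} \in \FF$ and $\beta \in D$. Substituting $D = \vertexStates(u)$ and expanding the definition $\vertexStates(u) = \{\beta \in \Omega : u \in P(\beta)\}$ from \Cref{def:states}, the hypothesis that $\vertexStates(u)$ is dominating for every $u \in V(G)$ expands to: for each $\alpha \in \Omega$ and each $u \in V(G)$, either $u \in P(\alpha)$, or there exists $\beta \in \Omega$ with $\{\alpha, \beta\} \in \FF$ and $u \in P(\beta)$.

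This is essentially the condition for being defending in \Cref{def:defending}. The only alignment needed is that \Cref{def:defending} lists only the second disjunct (existence of a neighbour $\beta$), so the first disjunct $u \in P(\alpha)$ must be absorbed into it. I would reconcile this by appealing to the implicit self-loop $\{\alpha, \alpha\} \in \FF$ carrying the all-stationary transition $\TRAN(\alpha, \alpha) = \{(v, v) : v \in P(\alpha)\}$; with this convention, taking $\beta = \alpha$ whenever $u \in P(\alpha)$ shows directly that the two conditions coincide. The reverse direction is the same unfolding run backwards: a defending strategy satisfies the per-$u$ domination condition verbatim, since for any $\alpha$ and $u$ the neighbour $\beta$ supplied by \Cref{def:defending} lies in $\vertexStates(u)$ by construction.

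The main obstacle is therefore not combinatorial but purely notational: one must fix the convention for stationary transitions so that $\alpha \in \vertexStates(u)$ counts as a legitimate way to `dominate' $\alpha$ itself. Once this bookkeeping is settled, the observation is a direct translation between two equivalent formalisms --- the move-oriented formulation of \Cref{def:defending} and the graph-theoretic formulation in terms of dominating sets of $S_G$ --- and no further argument beyond unfolding definitions is needed.
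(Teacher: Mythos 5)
Your proposal is correct and takes essentially the same route as the paper, which states the observation without proof as a direct unfolding of \Cref{def:defending}. The one wrinkle you flag --- the missing disjunct $u \in P(\alpha)$ --- is resolved in the paper not via self-loops but via its prose gloss of \Cref{def:defending} (``every state is either occupying it or a state which occupies it is reachable with only one transition''), which matches the formal definition of defending against vertex attacks given in the preliminaries; either reconciliation is adequate.
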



\begin{definition}[Labelled strategy]\label{def:defended_graph}
  A \emph{\LAB} is $\DG = (G, S_G, P, \TRAN)$ such that $(G, S_G, P, \TRAN, \emptyset)$ is a partial \LAB.
\end{definition}

Note, that all the states in the \LAB must contain the same number of guards because the transitions are bijections.
When the strategy is optimal the number of guards corresponds to $\EDN$.

Partial \LABS can have several nice properties, which we present now.
When the strategy graph is unoriented it is natural to require symmetry of transitions.

\begin{property}[Symmetry]\label{prop:strategy_symmetry}
  A partial \LAB $\mathcal{B} = (G, (\Omega,\FF), P, \TRAN, R)$ is \emph{symmetrical} if and only if $\TRAN(\alpha,\beta)$ is a converse relation to $\TRAN(\beta,\alpha)$ (i.e., $\TRAN(\alpha,\beta) = \{(a,b) \mid (b,a) \in \TRAN(\beta,\alpha)\}$) for every $\{\alpha,\beta\} \in \FF$.
\end{property}
\begin{lemma}\label{lem:symmetry}
  For each partial \LAB $\mathcal{B} = (G, (\Omega,\FF), P, \TRAN, R)$ there exists a symmetrical partial \LAB $\mathcal{B}' = (G, (\Omega,\FF), P, \TRAN', R)$.
\end{lemma}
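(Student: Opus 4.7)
The plan is to construct $\mathcal{B}'$ edgewise from $\mathcal{B}$, keeping the graph, state set, strategy graph, state vertex mapping, and interface unchanged, and only redefining $\TRAN$ so that the two orientations of each strategy-edge become converses of one another. For each unordered edge $\{\alpha,\beta\} \in \FF$ I would pick one of its two ordered orientations as the \emph{reference} orientation (any fixed choice will do, e.g.\ via an arbitrary linear order on $\Omega$), set $\TRAN'$ equal to $\TRAN$ on that orientation, and define $\TRAN'$ on the opposite orientation to be the converse relation of the reference transition. Concretely, if $(\alpha,\beta)$ is chosen as reference, put
\[
  \TRAN'(\alpha,\beta) := \TRAN(\alpha,\beta), \qquad
  \TRAN'(\beta,\alpha) := \{(v,u) \mid (u,v) \in \TRAN(\alpha,\beta)\}.
\]
Then \Cref{prop:strategy_symmetry} is immediate by construction.

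The only real content is to verify that the converse of a valid transition from $\alpha$ to $\beta$ is a valid transition from $\beta$ to $\alpha$ in the sense of \Cref{def:transition}, so that $(G, (\Omega,\FF), P, \TRAN', R)$ is again a partial \LAB. This is a direct check bullet by bullet: the condition $(u,v)$ with $u \in P(\alpha)$, $v \in P(\beta)$, and $\{u,v\} \in E(G)$ or $u=v$ is symmetric in $u$ and $v$ (since edges of $G$ are unordered and equality is symmetric), so converting $(u,v)$ into $(v,u)$ preserves it with the roles of $\alpha$ and $\beta$ swapped. Similarly, the ``exactly one outgoing pair at $u \in P(\alpha) \setminus R$'' condition for $\TRAN(\alpha,\beta)$ becomes ``exactly one incoming pair at $u \in P(\alpha) \setminus R$'' for the converse, which is exactly the condition required of $\TRAN'(\beta,\alpha)$; the corresponding statements for $P(\beta)\setminus R$ and for interface vertices (where ``exactly one'' is replaced by ``at most one'') transfer in the same way.

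Finally, the underlying strategy graph, the state vertex mapping $P$, and the interface $R$ are unchanged, so $\mathcal{B}'$ is indeed a partial \LAB on the same data as $\mathcal{B}$, and symmetry holds by our definition. I do not expect any real obstacle here: the lemma is essentially a bookkeeping statement, and the only thing to be careful about is that one must commit to a single reference orientation per edge before overwriting the reverse direction, since otherwise one might inadvertently overwrite a transition one has just declared to be the converse of another.
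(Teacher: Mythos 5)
Your proposal is correct and matches the paper's own argument: both fix an arbitrary reference orientation for each strategy-edge and replace the reverse transition by the converse relation, justified by the symmetry of \Cref{def:transition} under swapping $u$ with $v$ and $\alpha$ with $\beta$. Your bullet-by-bullet verification is just a slightly more explicit rendering of the same observation.
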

\begin{proof}
  For each pair of states $\{\alpha,\beta\} \in \FF$ fix an arbitrary orientation $(\alpha,\beta)$ and take the $\TRAN(\alpha,\beta)$ with an interface $R$ which gives us $\TRAN(\alpha,\beta)$.
  Note that by swapping $u$ with $v$ and $\alpha$ with $\beta$ in \Cref{def:transition} we obtain the same definition but for $\TRAN(\beta,\alpha)$.
  We substitute the transition $\TRAN(\beta,\alpha)$ for this newly found transition.
  Performing this substitution for every pair of states in $\FF$ gives us the desired $\TRAN'$.
\end{proof}
By \Cref{lem:symmetry}, we will always assume that the partial \LAB is symmetrical.
We also use this property to infer transitions.
We show only one direction of the transition mapping and let the other direction be the converse transition given by symmetry.

It is not easy to grasp the \LAB description only from the formal notation so we shall draw many auxiliary pictures.
Vertices and edges shall be depicted by small circles (or squares) and line segments, respectively;
vertices may be labelled by their letter name;
by Greek letters we signify the states which contain respective vertices;
guard moves in transitions are depicted by differently styled arrows on edges which point between the state labels
(Note that the arrows are always shown only in one direction because we assume \Cref{prop:strategy_symmetry}.);
the interface vertices are marked by gray-filled areas;
see \Cref{fig:example_defended_graph} for an example of a \LAB.

\begin{figure}[h]
  \centering
  \includegraphics{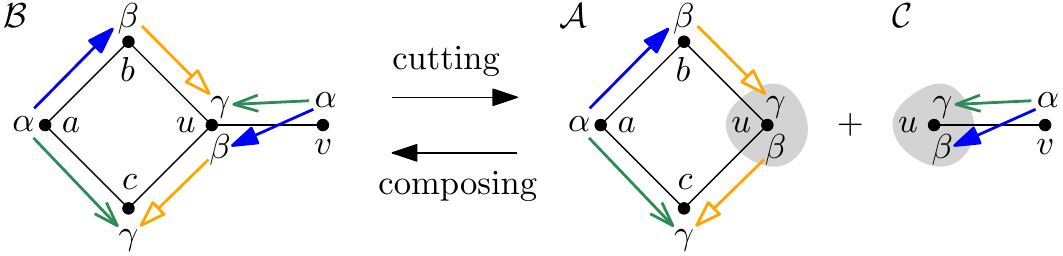}
  \caption[Example (partial) \LAB, cutting, and composing]{%
    Example of a \LAB $\mathcal{B}$ and two partial \LABS $\mathcal{A}$ and $\mathcal{C}$;
    the full formal description of the \LAB $\mathcal{B} = (G,(\Omega,\FF),P,\TRAN)$ is $G=(V,E)$,
    $V=\{a,b,c,u,v\}$,
    $E=\big\{\{a,b\},\{b,u\},\{u,v\},\{a,c\},\{c,u\}\big\}$,
    $\Omega=\{\alpha,\beta,\gamma\}$,
    $\FF=\big\{\{\alpha,\beta\},\{\alpha,\gamma\},\{\beta,\gamma\}\big\}$,
    $P(\alpha)=\{v,a\}$,
    $P(\beta)=\{u,b\}$,
    $P(\gamma)=\{u,c\}$,
    ($R=\emptyset$),
    $\TRAN(\alpha,\beta)=\{(a,b),(v,u)\}$,
    $\TRAN(\alpha,\gamma)=\{(a,c),(v,u)\}$,
    $\TRAN(\beta,\gamma)=\{(b,u),(u,c)\}$;
    the formal descriptions of partial \LABS $\mathcal{A}$ and $\mathcal{C}$ are similar while restricted to their subgraph and they contain an interface $R=\{u\}$.}%
  \label{fig:example_defended_graph}
\end{figure}

We will need to cut part of the \LAB and put something slightly different in its place.
To tackle that we put forward the following notions.

\begin{definition}[Partial labelled substrategy]\label{def:partial_subgraph}
  The \emph{partial labelled substrategy} $\mathcal{B'}$ of a \LAB $\mathcal{B}=(G,(\Omega,\FF),P,\TRAN)$ for some induced subgraph $G'$ of $G$ is a partial \LAB $\mathcal{B}' = (G',(\Omega,\FF),P',\TRAN',R)$ where $R$ is an interface of $G'$ with respect to $G$, for all $\alpha \in \Omega$ it holds $P'(\alpha) = P(\alpha) \cap V(G')$, and for all $\beta, \gamma \in \Omega$ it holds $\TRAN'(\beta, \gamma) = \{ (a, b) \mid (a,b) \in \TRAN(\beta, \gamma) \land a, b \in V(G')\}$.
\end{definition}

It is not immediately obvious that the \Cref{def:partial_subgraph} made a partial labelled substrategy in a way that it constitutes a partial \LAB; so we show that next.

\begin{observation}\label{obs:subgraph_is_partial_graph}
  A partial labelled substrategy $\mathcal{B'} = (G',(\Omega,\FF),P',\TRAN',R)$ of a \LAB $\mathcal{B}=(G,(\Omega,\FF),P,\TRAN)$ is a partial \LAB.
\end{observation}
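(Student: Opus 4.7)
The plan is to verify that $\mathcal{B}' = (G',(\Omega,\FF),P',\TRAN',R)$ satisfies every item of \Cref{def:partial}. Most items are essentially immediate from the construction in \Cref{def:partial_subgraph}: $G'$ is a graph (being an induced subgraph of $G$), the strategy graph $(\Omega,\FF)$ is inherited verbatim from $\mathcal{B}$, the mapping $P'$ sends each state into $2^{V(G')}$ by definition, and $R$ is the interface of $G'$ with respect to its supergraph $G$. The real work is to show that for each edge $\{\alpha,\beta\}\in\FF$ the set $\TRAN'(\alpha,\beta)$ defined by restriction is a transition in the sense of \Cref{def:transition}.

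I would check the five bullets of \Cref{def:transition} in order. For the subset condition, any $(a,b)\in\TRAN'(\alpha,\beta)$ is by construction in $\TRAN(\alpha,\beta)$ with both endpoints in $V(G')$, so $a\in P(\alpha)\cap V(G')=P'(\alpha)$, $b\in P'(\beta)$, and either $a=b$ or $\{a,b\}\in E(G)$; since $G'$ is induced, in the latter case $\{a,b\}\in E(G')$. The ``at most one'' conditions for interface vertices are automatic because $\TRAN'(\alpha,\beta)\subseteq\TRAN(\alpha,\beta)$ and $\mathcal{B}$ already has at most one pair per guard (in fact exactly one, since its interface is empty).

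The heart of the proof is the existence-and-uniqueness requirement for $u\in P'(\alpha)\setminus R$ (and symmetrically for $\beta$). Since $\mathcal{B}$ is a full $\LAB$ with empty interface, $u\in P(\alpha)$ has exactly one partner $(u,v)\in\TRAN(\alpha,\beta)$. I would then argue $v\in V(G')$, so that $(u,v)$ survives restriction: if $u=v$ this is trivial, otherwise $\{u,v\}\in E(G)$, and were $v\notin V(G')$, then $u$ would witness the defining condition of the interface, forcing $u\in R$, contradicting $u\notin R$. Uniqueness in $\TRAN'(\alpha,\beta)$ follows from uniqueness in $\TRAN(\alpha,\beta)$.

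The only conceptual obstacle is precisely this existence step, which is not purely syntactic: it relies on the interface $R$ being chosen to absorb exactly those vertices whose edges could leave $V(G')$ under the original strategy. Once one observes that this is the content of \Cref{def:interface}, the verification collapses into the short case analysis above, and the remaining bullets of \Cref{def:transition,def:partial} follow by inspection.
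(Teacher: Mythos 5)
Your proposal is correct and follows essentially the same route as the paper: both arguments hinge on the single non-trivial observation that a guard whose move in $\TRAN$ leaves $V(G')$ must, by \Cref{def:interface}, lie in the interface $R$, so only interface guards can lose their move under restriction, which is exactly what \Cref{def:transition} permits. Your write-up merely states this in contrapositive form and checks the remaining bullets of \Cref{def:transition} somewhat more explicitly than the paper does.
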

\begin{proof}
  $G'$ is a graph, $\Omega$ is a set of labels, and $R$ is a subset of $V(G')$ which is in accordance to \Cref{def:partial}.
  $P'$ was created by restricting $P$ to the vertices of $V(G')$.
  We only removed some guards from the mapping so this is okay by \Cref{def:partial}.
  Last, the only guards which are not included in $\TRAN'$ are those whose moves in $\TRAN$ went outside of $G'$.
  Assume such guard on vertex $u$ with a move $(u,v)$ where $v \not\in G'$, hence, $u \in R$ by \Cref{def:interface}.
  As stated in \Cref{def:transition} any guard in $R$ does not have to be included in a move so any partial labelled substrategy is a partial \LAB.
\end{proof}

Now we present conditions which are necessary to be able to combine two partial \LABS into one \LAB.
Based on that, we show how to split a \LAB into two partial \LABS.
\Cref{fig:example_defended_graph} shows an example of the following operations.

\begin{definition}[Compatible]\label{def:compatible}
  Two partial \LABS $\mathcal{B}_1$ and $\mathcal{B}_2$ (denoted as $\mathcal{B}_i = (G_i, (\Omega_i,\FF_i), P_i, M_i, R_i)$) are called \emph{compatible} if the following conditions hold true.
  \begin{itemize}
    \item $R_1 = R_2 = V(G_1) \cap V(G_2)$, i.e., their graphs overlap exactly in the interface,
    \item $(\Omega_1,\FF_1) = (\Omega_2,\FF_2)$, i.e., the strategy graphs are the same,
    \item $M_1(\alpha,\beta) \cup M_2(\alpha,\beta)$ is a bijection between $P_1(\alpha) \cup P_2(\alpha)$ and $P_1(\beta)  \cup P_2(\beta)$ for every $\alpha, \beta \in \Omega_1$.
  \end{itemize}
\end{definition}

The conditions for compatible partial \LABS ensure that the interfaces overlap in a way that a composed function will be a bijection which allows us to cut and compose them in the following way.

\begin{definition}[Cut]\label{def:cut}
  Let us have a \LAB $\mathcal{B}$ and a vertex cut $R$ which partitions the vertices of $G(\mathcal{B})$ into $R$, $A$ and $C$ in such a way that there are no edges between $A$ and $C$.
  We say $\mathcal{B}$ is \emph{cut} along $R$ into two partial labelled substrategies $\mathcal{A}$ and $\mathcal{C}$ where $\mathcal{A}$ is a partial labelled substrategy induced by $V(G(\mathcal{A})) = R \cup A$ and $\mathcal{C}$ is a partial labelled substrategy induced by $V(G(\mathcal{C})) = R \cup C$ such that $\mathcal{A}$ and $\mathcal{C}$ are compatible.
\end{definition}


\begin{definition}[Composing]\label{def:composing}
  By \emph{composing} two partial \LABS $\mathcal{B}_1$ and $\mathcal{B}_2$ ($\mathcal{B}_i = (G_i, (\Omega_i,\FF_i), P_i, M_i, R_i)$) we mean getting $(G^*, (\Omega,\FF), P^*, \TRAN^*)$ where $G^* = \big(V(G_1) \cup V(G_2), E(G_1) \cup E(G_2)\big)$, $\Omega = \Omega_1 \cup \Omega_2$, $\FF = \FF_1 \cup \FF_2$, $\forall \gamma\in \Omega$ we have $P^*(\gamma) = P_1(\gamma) \cup P_2(\gamma)$, and $\TRAN^*(\alpha,\beta) = M_1(\alpha,\beta)\cup M_2(\alpha,\beta)$ for every $\alpha,\beta \in \Omega$.
\end{definition}

\begin{lemma}\label{lem:compatible_composition}
  Composing two compatible partial \LABS yields a \LAB.
\end{lemma}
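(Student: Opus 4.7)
The plan is to unpack the definition of a labelled strategy and verify each required condition of $(G^*, (\Omega,\FF), P^*, \TRAN^*)$ directly from compatibility. A labelled strategy is by Definition~\ref{def:defended_graph} a partial labelled strategy with an empty interface, so I would show (i) that the tuple produced by Definition~\ref{def:composing} is syntactically a partial labelled strategy in the sense of Definition~\ref{def:partial}, and (ii) that this partial labelled strategy can be viewed as having empty interface. Since we treat $G^*$ as a standalone graph (not embedded into a larger supergraph), its interface in the sense of Definition~\ref{def:interface} is automatically $\emptyset$, so (ii) is immediate.

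For (i), I would verify the three structural components separately. The pair $(\Omega,\FF) = (\Omega_1,\FF_1) = (\Omega_2,\FF_2)$ is a single strategy graph by the second bullet of Definition~\ref{def:compatible}. The map $P^*(\gamma) = P_1(\gamma) \cup P_2(\gamma)$ is a well-defined state vertex mapping into $V(G^*) = V(G_1)\cup V(G_2)$. What needs checking is that $|P^*(\alpha)|$ is constant over $\alpha\in\Omega$; this follows from the bijection condition, since $\TRAN^*(\alpha,\beta)$ being a bijection between $P^*(\alpha)$ and $P^*(\beta)$ for every edge $\{\alpha,\beta\}\in\FF$ forces the state sizes to agree on connected components of $S_G$, and without loss of generality we may take $S_G$ connected.

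The main technical step is verifying that $\TRAN^*(\alpha,\beta) = M_1(\alpha,\beta)\cup M_2(\alpha,\beta)$ satisfies Definition~\ref{def:transition} with interface $R = \emptyset$. The inclusion $\TRAN^*(\alpha,\beta) \subseteq \{(u,v) \mid u\in P^*(\alpha),\, v\in P^*(\beta),\, \{u,v\}\in E(G^*)\lor u=v\}$ is immediate because each $M_i$ moves along edges of $G_i\subseteq G^*$ between guards of $P_i\subseteq P^*$. With $R=\emptyset$, the remaining two conditions require that every $u\in P^*(\alpha)$ is the first coordinate of exactly one move and every $v\in P^*(\beta)$ is the second coordinate of exactly one move; this is literally the bijection condition in the third bullet of Definition~\ref{def:compatible}.

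The step I expect to be slightly subtle is reconciling the ``missing'' moves at the interface. A vertex $u\in R_1 = R_2$ might be matched by no move in $M_1(\alpha,\beta)$ but then must be matched by a unique move in $M_2(\alpha,\beta)$ (and vice versa), and compatibility is exactly what guarantees that the two partial matchings assemble into a single bijection without collision or omission. I would highlight that if both $M_1$ and $M_2$ assigned moves from the same interface vertex, or if neither did, the union would fail to be a bijection; the third bullet of Definition~\ref{def:compatible} rules this out. With these checks in hand, $(G^*,(\Omega,\FF),P^*,\TRAN^*,\emptyset)$ satisfies all clauses of Definition~\ref{def:partial}, hence is a labelled strategy by Definition~\ref{def:defended_graph}.
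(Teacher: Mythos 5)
Your proposal is correct and follows essentially the same route as the paper's proof: both verify that $(G^*, (\Omega,\FF), P^*, \TRAN^*, \emptyset)$ satisfies \Cref{def:partial} component by component, with the key step being that the third bullet of \Cref{def:compatible} guarantees the union $M_1(\alpha,\beta)\cup M_2(\alpha,\beta)$ is a well-defined bijection. Your explicit discussion of how the partial matchings at the interface assemble without collision or omission is a more detailed spelling-out of what the paper compresses into one sentence, but it is not a different argument.
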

\begin{proof}
  Let us use the notation of \Cref{def:compatible,def:composing}.
  We need to check whether $(G^*, (\Omega,\FF), P^*, \TRAN^*, \emptyset)$ is a partial \LAB.
  First, $G^*$ is a graph where we unite vertices and edges, while only the interface vertices are overlapping; this constitutes a well-defined graph without multiedges and loops.
  The states $\Omega_1$ are the same for the compatible $\mathcal{B}_1$ and $\mathcal{B}_2$.
  Next, mapping of the states to vertices is done by uniting the individual sets $P_1(\gamma) \cup P_2(\gamma)$ for each $\gamma \in \Omega_1$.
  Each vertex is now guarded in the union of states it was guarded before.
  Last, we check whether the union of $M_1$ and $M_2$ always maps to a well-defined transition, however, this is ensured by compatibility conditions over $P_i$ and $M_i$ in \Cref{def:compatible}.
\end{proof}

We define an equivalency relation (reflexive, symmetric, and transitive) with respect to the interfaces as follows.
\begin{definition}[Interface equivalent]\label{def:interface_equivalent}
  Two partial \LABS $\mathcal{B}_1$ and $\mathcal{B}_2$ ($\mathcal{B}_i = (G_i, (\Omega_i,\FF_i), P_i, M_i, R_i)$) are \emph{interface equivalent} if\/ $G[R_1] = G[R_2]$, $\Omega_1 = \Omega_2$, $\FF_1 = \FF_2$, for all $\alpha \in \Omega_1$ we have $P_1(\alpha) \cap R_1 = P_2(\alpha) \cap R_2$, and we have $(a,b) \in M_1(\beta,\gamma) \Leftrightarrow (a,b) \in M_2(\beta,\gamma)$ for all $u$ such that $a=u \lor b=u$ for all $\beta,\gamma \in \Omega_1$.
\end{definition}

Interface equivalent partial \LABS have the same states with respect to the interface.
This allows us to infer compatibility as stated in the following lemma.

\begin{lemma}\label{lem:transfered_compatibility}
  For three partial \LABS $\mathcal{B}_1$, $\mathcal{B}_2$, and $\mathcal{B}_3$ if $\mathcal{B}_1$ is compatible with $\mathcal{B}_2$, $\mathcal{B}_2$ is interface equivalent with $\mathcal{B}_3$, and $V(G(\mathcal{B}_1)) \cap V(G(\mathcal{B}_3)) = R(\mathcal{B}_3)$, then $\mathcal{B}_1$ is compatible with $\mathcal{B}_3$.
\end{lemma}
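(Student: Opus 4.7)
The plan is to verify the three conditions of \Cref{def:compatible} for the pair $(\mathcal{B}_1, \mathcal{B}_3)$, obtaining each by chaining the compatibility of $(\mathcal{B}_1, \mathcal{B}_2)$ with the interface equivalence of $(\mathcal{B}_2, \mathcal{B}_3)$. Set $R := R_1 = R_2 = R_3$; this identification is legitimate because compatibility of $\mathcal{B}_1, \mathcal{B}_2$ gives $R_1 = R_2 = V(G_1) \cap V(G_2)$, interface equivalence gives $G[R_2] = G[R_3]$ (hence $R_2 = R_3$ as vertex sets), and the hypothesis $V(G_1) \cap V(G_3) = R_3$ identifies this common interface with the overlap of $G_1$ and $G_3$, which discharges the first compatibility condition. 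The strategy-graph condition $(\Omega_1, \FF_1) = (\Omega_3, \FF_3)$ then follows immediately by transitivity from $(\Omega_1,\FF_1) = (\Omega_2,\FF_2)$ and $(\Omega_2,\FF_2) = (\Omega_3,\FF_3)$.

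The substantive step is the bijection condition: for every $\alpha,\beta \in \Omega_1$, I have to show that $M_1(\alpha,\beta) \cup M_3(\alpha,\beta)$ is a bijection between $P_1(\alpha) \cup P_3(\alpha)$ and $P_1(\beta) \cup P_3(\beta)$. I plan to decompose domain and codomain by membership in $R$. Because $V(G_1) \cap V(G_3) = R$, the portions outside the interface, $P_1(\cdot)\setminus R$ and $P_3(\cdot)\setminus R$, live in disjoint vertex sets; meanwhile $P_2(\alpha) \cap R = P_3(\alpha) \cap R$ by the interface-equivalence clause on states, so the interface content of $P_3$ matches that of $P_2$ exactly.

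I then intend to split the known bijection $M_1(\alpha,\beta) \cup M_2(\alpha,\beta)$ into three disjoint pieces according to whether a move lies (i) strictly inside $V(G_1)\setminus R$, (ii) strictly inside $V(G_2)\setminus R$, or (iii) is incident with some vertex of $R$. Piece (i) comes entirely from $M_1$ and is untouched by the swap. Piece (iii) consists of moves incident with $R$; by the interface-equivalence clause on transitions, such a move belongs to $M_2$ iff it belongs to $M_3$, so together with the state agreement on $R$, piece (iii) is identical whether we read it from $M_1 \cup M_2$ or from $M_1 \cup M_3$. Piece (ii) is replaced: the $M_2$ part is a bijection $P_2(\alpha)\setminus R \to P_2(\beta)\setminus R$ by the transition axioms (every non-interface guard must be paired), and the same axioms applied to $\mathcal{B}_3$ give that the moves of $M_3$ avoiding $R$ form a bijection $P_3(\alpha)\setminus R \to P_3(\beta)\setminus R$.

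The main obstacle I anticipate is the bookkeeping that certifies the three pieces still have pairwise disjoint domains and codomains after the swap, so that their union is again a bijection $P_1(\alpha) \cup P_3(\alpha) \to P_1(\beta) \cup P_3(\beta)$. This is precisely where the hypothesis $V(G_1) \cap V(G_3) = R_3$ is used in its strong form: without it, a vertex outside the interface could occur in both $P_1$ and $P_3$ and be counted twice, breaking the bijection. Under the hypothesis, the three pieces are supported on the mutually disjoint sets $V(G_1)\setminus R$, $V(G_3)\setminus R$, and $R$, so their union is a bijection, completing the verification of compatibility.
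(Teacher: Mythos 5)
Your proposal is correct and follows essentially the same route as the paper's proof: verify the three conditions of the compatibility definition, chain the interface and strategy-graph equalities by transitivity, and for the bijection condition split the moves by incidence with the common interface $R$, using the interface-equivalence clauses to show the $R$-incident part is unchanged while the non-interface part of $M_3$ pairs $P_3(\cdot)\setminus R$ exactly once by the partial-strategy axioms. Your version is somewhat more explicit about the disjointness bookkeeping, but the underlying argument is the same.
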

\begin{proof}
  Let $\mathcal{B}_i = (G_i, (\Omega_i,\FF_i), P_i, M_i, R_i)$.
  We will check the conditions stated in \Cref{def:compatible}.
  As $V(G_1) \cap V(G_3) = R_3$ and $R_3 = R_2$ by interface equivalency, and $R_2 = R_1$ by compatibility, the first condition holds.
  As $G_3[R_3] = G_2[R_2]$ and $V(G_1) \cap V(G_3) = R_3$ there are no possible edges which would be shared by $G_1$ and $G_3$ outside of $R_3$.
  $\Omega_1 = \Omega_2$ by their compatibility, $\Omega_2 = \Omega_3$ by interface equivalency, so $\Omega_1 = \Omega_3$.
  The $\mathcal{B}_3$ is a partial \LAB so each guard on a vertex in $V(G_3) \setminus R_3$ is covered by $M_3$ exactly once.
  The guards on $R_3$ are covered exactly when they were covered on $R_2$.
  As $\mathcal{B}_1$ and $\mathcal{B}_2$ are compatible the guards on $R_2$ were covered by $M_1$ exactly when they were not covered by $M_2$ and vice-versa.
  Hence, this property still holds for $\mathcal{B}_1$ and $\mathcal{B}_3$.
\end{proof}
The culmination of the previous notions and lemmas is the following procedure which we use as one major part for proving upper bounds.

\begin{definition}[Expansion]\label{def:expansion}
  Let us have a \LAB $\mathcal{B}$ with a partial labelled substrategy $\mathcal{C}$.
  Let us also have a partial \LAB $\mathcal{C}'$ which is interface equivalent with $\mathcal{C}$.
  An \emph{expansion} of $\mathcal{B}$ from $\mathcal{C}$ to $\mathcal{C}'$ is the following sequence of operations.
  \begin{itemize}
    \item Cutting $\mathcal{B}$ along $R(\mathcal{C})$ into $\mathcal{C}$ and $\mathcal{D}$ (see \Cref{def:cut}),
    \item composing $\mathcal{D}$ with $\mathcal{C}'$ into a \LAB $\mathcal{R}$ (see \Cref{def:composing}).
  \end{itemize}
  Partial \LABS $\mathcal{D}$ and $\mathcal{C}'$ are compatible due to \Cref{lem:transfered_compatibility}.
  The result $\mathcal{R}$ is a \LAB due to \Cref{lem:compatible_composition}.
\end{definition}

To establish the difference in the number of guards used to defend $\mathcal{B}$ and $\mathcal{R}$ we have the following lemma.

\begin{lemma}\label{lem:equivalency_constant}
  For two interface equivalent partial \LAB $\mathcal{B}_1$ and $\mathcal{B}_2$ (as in \Cref{def:interface_equivalent}) there is some constant $K(P_1,P_2) \in \mathbb{Z}$ such that for all $\alpha \in \Omega_1$ we have
  \[
    K(P_1,P_2) = |P_2(\alpha)| - |P_1(\alpha)|.
  \]
\end{lemma}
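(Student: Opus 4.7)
The plan is to reduce the difference $|P_2(\alpha)|-|P_1(\alpha)|$ to a purely combinatorial invariant of the transitions, and then show this invariant is preserved along every edge of the (shared) strategy graph $(\Omega_1,\FF_1) = (\Omega_2,\FF_2)$. First I would separate the interface contribution: interface equivalence gives $P_1(\alpha)\cap R_1 = P_2(\alpha)\cap R_2$ for every state, so
\[
|P_2(\alpha)|-|P_1(\alpha)| \;=\; |P_2(\alpha)\setminus R_2| \;-\; |P_1(\alpha)\setminus R_1|.
\]
Thus it suffices to show that the right-hand side does not depend on $\alpha$.

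Next I would fix an arbitrary edge $\{\alpha,\beta\}\in\FF_1$ and classify the pairs in $M_i(\alpha,\beta)$ by whether each endpoint lies in the interface. Write $p_i$ for the number of pairs with both endpoints outside $R_i$, $q_i$ for pairs $(u,v)$ with $u\notin R_i, v\in R_i$, and $r_i$ for pairs with $u\in R_i, v\notin R_i$. By \Cref{def:transition}, every non-interface guard in $P_i(\alpha)$ (resp.\ $P_i(\beta)$) is saturated exactly once by $M_i(\alpha,\beta)$, whence
\[
|P_i(\alpha)\setminus R_i| = p_i + q_i, \qquad |P_i(\beta)\setminus R_i| = p_i + r_i.
\]
Subtracting, $|P_i(\beta)\setminus R_i| - |P_i(\alpha)\setminus R_i| = r_i - q_i$. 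The clause in \Cref{def:interface_equivalent} that compares moves touching the interface gives $q_1=q_2$ and $r_1=r_2$ (each counted pair has at least one endpoint in the interface, hence is common to $M_1$ and $M_2$). Therefore
\[
|P_2(\alpha)\setminus R_2| - |P_1(\alpha)\setminus R_1| \;=\; |P_2(\beta)\setminus R_2| - |P_1(\beta)\setminus R_1|,
\]
so the desired difference is invariant along every edge.

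Finally, I would propagate this local invariance to all of $\Omega_1$ by connectivity of the strategy graph, which is implicit in our setting (states in a defending strategy must be mutually reachable through the attack-response graph; otherwise one argues component by component and observes that only the component containing the states actually used matters). This yields a single integer $K(P_1,P_2) = |P_2(\alpha)\setminus R_2| - |P_1(\alpha)\setminus R_1|$ valid for every $\alpha\in\Omega_1$, and by the opening reduction this equals $|P_2(\alpha)|-|P_1(\alpha)|$ as claimed. The only subtle point is the bookkeeping in \Cref{def:transition}, where interface vertices are the unique ones allowed to be unmatched; once the four types of moves are enumerated correctly, the fact that interface-touching moves coincide in both strategies makes $r_i-q_i$ manifestly independent of $i$.
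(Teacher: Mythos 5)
Your proof is correct and follows essentially the same route as the paper's: both arguments show that $|P_2(\alpha)|-|P_1(\alpha)|$ is conserved along each transition because the only contributions that could differ between the two strategies involve interface vertices, and \Cref{def:interface_equivalent} forces those to coincide. The only cosmetic differences are that you count interface-crossing moves ($q_i$, $r_i$) where the paper counts unmatched interface guards ($g_i$), and that you make explicit the edge-by-edge propagation over a connected strategy graph, a step the paper leaves implicit by writing $M_i(\alpha,\beta)$ for arbitrary pairs of states.
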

\begin{proof}
  Suppose we have arbitrary states $\alpha, \beta \in \Omega_1$ and let $K_\alpha = |P_2(\alpha)| - |P_1(\alpha)|$ and $K_\beta = |P_2(\beta)| - |P_1(\beta)|$.
  Let $M_1(\alpha,\beta)$ be part of $\mathcal{B}_1$ and $M_2(\alpha,\beta)$ part of $\mathcal{B}_2$.
  Each defines a pairing of guards in respective states.
  However, the guards on the interface are not required to participate in the pairing.
  So we have $|P_1(\alpha)| + g_1(\alpha,\beta) = |P_1(\beta)| + g_1(\beta,\alpha)$ where $g_i$ is the number of guards that do not participate in the pairing of respective $M_i$ (we assume symmetric moves).
  Similarly for $\mathcal{B}_2$ we have $|P_2(\alpha)| + g_2(\alpha,\beta) = |P_2(\beta)| + g_2(\beta,\alpha)$.

  As the partial \LABS are interface equivalent, the sets of guards which do not participate in the pairings is the same, so $g_1(\gamma,\delta) = g_2(\gamma,\delta)$ for all $\gamma,\delta \in \Omega_1$.
  We get
  \begin{align*}
    K_\alpha &= |P_2(\alpha)| - |P_1(\alpha)|\\
             &= |P_2(\beta)| + g_1(\beta,\alpha) - g_1(\alpha,\beta) - (|P_1(\beta)| + g_2(\beta,\alpha) - g_2(\alpha,\beta))\\
             &= |P_2(\beta)| - |P_1(\beta)| + (g_1(\beta,\alpha) - g_2(\beta,\alpha)) + (g_2(\alpha,\beta) - g_1(\alpha,\beta))\\
             &= |P_2(\beta)| - |P_1(\beta)| = K_\beta.
  \end{align*}
  We set $K(P_1,P_2) = K_\alpha$ as we showed that this value is the same irrespective of the chosen $\alpha$.
\end{proof}

To be able to use an expansion we need to select a partial labelled substrategy $\mathcal{C}$ of $\mathcal{B}$ and then show that $\mathcal{C}$ is interface equivalent with $\mathcal{C}'$.
The expansion then proceeds as in \Cref{def:expansion} and an upper bound is obtained from the following observation.

\begin{observation}\label{obs:upper_bound}
  Let us have an expansion of $\mathcal{B}$ from $\mathcal{C}$ to $\mathcal{C}'$ (\Cref{def:expansion}) which results in a \LAB $\mathcal{R}$.
  The expansion increases the number of used guards by $K(P(\mathcal{C}),P(\mathcal{C}'))$ due to \Cref{lem:equivalency_constant}.
  Assuming that $\mathcal{B}$ is an optimal strategy we obtain $\EDN(G(\mathcal{R})) \le \EDN(G(\mathcal{B})) + K(P(\mathcal{C}),P(\mathcal{C}'))$.
\end{observation}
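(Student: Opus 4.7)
The plan is to reduce the claim to a direct guard-count computation in each state of the composed strategy, then use \Cref{lem:equivalency_constant} to show the per-state difference is a fixed constant, and finally invoke optimality of $\mathcal{B}$ to convert the guard count into the desired $\EDN$ bound.

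First I would fix an arbitrary state $\alpha \in \Omega$ of the common strategy graph and track how guards on $\alpha$ transform under the expansion. By \Cref{def:cut} and \Cref{def:partial_subgraph}, after cutting $\mathcal{B}$ along $R(\mathcal{C})$ we have $P_\mathcal{C}(\alpha) = P_\mathcal{B}(\alpha) \cap V(G(\mathcal{C}))$ and $P_\mathcal{D}(\alpha) = P_\mathcal{B}(\alpha) \cap V(G(\mathcal{D}))$, so $P_\mathcal{C}(\alpha) \cap P_\mathcal{D}(\alpha) = P_\mathcal{B}(\alpha) \cap R(\mathcal{C})$ and $|P_\mathcal{B}(\alpha)| = |P_\mathcal{C}(\alpha)| + |P_\mathcal{D}(\alpha)| - |P_\mathcal{B}(\alpha) \cap R(\mathcal{C})|$. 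By \Cref{def:composing}, $P_\mathcal{R}(\alpha) = P_\mathcal{D}(\alpha) \cup P_{\mathcal{C}'}(\alpha)$, and because $\mathcal{C}$ and $\mathcal{C}'$ are interface equivalent (so $P_{\mathcal{C}'}(\alpha) \cap R(\mathcal{C}') = P_\mathcal{C}(\alpha) \cap R(\mathcal{C})$) and $V(G(\mathcal{D})) \cap V(G(\mathcal{C}')) = R(\mathcal{C}')$, the intersection $P_\mathcal{D}(\alpha) \cap P_{\mathcal{C}'}(\alpha)$ equals $P_\mathcal{B}(\alpha) \cap R(\mathcal{C})$, the same overlap as before.

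Subtracting the two expressions gives $|P_\mathcal{R}(\alpha)| - |P_\mathcal{B}(\alpha)| = |P_{\mathcal{C}'}(\alpha)| - |P_\mathcal{C}(\alpha)|$, which by \Cref{lem:equivalency_constant} equals the constant $K(P(\mathcal{C}), P(\mathcal{C}'))$ independently of $\alpha$. Thus every state of $\mathcal{R}$ contains exactly $|P_\mathcal{B}(\alpha)| + K(P(\mathcal{C}), P(\mathcal{C}'))$ guards. Since $\mathcal{B}$ was assumed optimal it uses $\EDN(G(\mathcal{B}))$ guards per state, and \Cref{def:expansion} tells us that $\mathcal{R}$ is a \LAB (via \Cref{lem:compatible_composition} applied to the compatible pair $\mathcal{D}, \mathcal{C}'$ obtained from \Cref{lem:transfered_compatibility}). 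Provided $\mathcal{R}$ is defending, it witnesses the upper bound $\EDN(G(\mathcal{R})) \le \EDN(G(\mathcal{B})) + K(P(\mathcal{C}), P(\mathcal{C}'))$.

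The only delicate point is confirming that $\mathcal{R}$ in fact defends $G(\mathcal{R})$: the composed strategy must have a neighbor state containing every vertex of $G(\mathcal{R})$ from each $\alpha \in \Omega$. For vertices in $V(G(\mathcal{D})) \setminus R(\mathcal{C})$ this is inherited from $\mathcal{B}$, since $\mathcal{D}$ keeps its states, edges, and transitions unchanged, while for vertices of $V(G(\mathcal{C}')) \setminus R(\mathcal{C}')$ one relies on $\mathcal{C}'$ being defending as a partial \LAB; interface equivalence takes care of vertices in $R$ because $P_{\mathcal{C}'}(\alpha) \cap R = P_\mathcal{C}(\alpha) \cap R$, so any transition that previously defended an interface vertex still does so. This is the only step where more than bookkeeping is needed, and it is the part that will determine whether a specific application of the observation is sound — that is why in every concrete use the expansion supplies an explicit interface-equivalent $\mathcal{C}'$ that is verified to defend its side of the cut.
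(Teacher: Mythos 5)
Your proposal is correct and follows the same route the paper intends: the paper states this as an observation with no separate proof, relying on \Cref{lem:equivalency_constant} for the per-state guard increase and on \Cref{def:expansion} for $\mathcal{R}$ being a \LAB, and your inclusion--exclusion bookkeeping over the interface overlap is exactly the computation the paper leaves implicit. Your closing caveat that $\mathcal{R}$ must additionally be verified to be defending in each concrete application is also consistent with how the paper actually uses the observation, where each reduction checks that the new vertices are dominated.
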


We showed a way to describe a \LAB and how we can exchange the underlying defended graph.
However, to be able to do this we need the strategy to be the same for the original and expanded graph.
So before we start expansion we alter the strategy on the original graph.
This is discussed in the following section.

\subsection{Tools for Altering Strategies}\label{sec:upper_bound_tools}

In this section, we introduce further notions useful for working with strategies when building upper bound constructions.
The typical upper bound proof uses tools introduced in this section to alter the strategy and then applies expansion (\Cref{def:expansion}) which gives the upper bound by \Cref{obs:upper_bound}.

First, let us note that all the notions can be thought of as ``up to isomorphism'' because we can relabel graph or strategy vertices and relabeling does not fundamentally change them.
We skipped this in definitions for the sake of readability.
Let us also set from now on $\mathcal{B} = (G, S_G, P, \TRAN, R)$ and $S_G = (\Omega,\FF)$, and similarly for $\mathcal{B}'$ and $\mathcal{B}^*$ have respective graphs $G'$ and $G^*$, strategies, mappings, etc.

Now we present the main operation for altering strategy graphs.

\begin{definition}[Graph Cartesian product over subset]\label{def:copy}
  Let us have graphs $G_1$ and $G_2$ while $A \subseteq V(G_1)$.
  The \emph{graph Cartesian product over subset $A$} denoted as $\GCPOS{G_1}{A}{G_2}$ is a graph $H$ such that
  \begin{align*}
    V(H) = &\,\{(u,\emptyset) \mid u \in V(G_1) \setminus A\} \cup \{(u,v) \mid u \in A, v \in V(G_2)\}, \\
    \{(a,b),(c,d)\} \in E(H) \Leftrightarrow &\,\big((a = c) \land (a \in A) \land (\{b,d\} \in E(G_2))\big) \lor  \\
    \lor &\,\big(\{a,c\} \in E(G_1) \land ((b = d) \lor (b = \emptyset) \lor (d = \emptyset))\big).
  \end{align*}
\end{definition}

The operation in \Cref{def:copy} can be thought of as a Cartesian product where the sets of vertices created from $G_1$ which are not present in $A$ are identified to a single vertex.
Equivalently, $H$ can be constructed by taking the graph Cartesian product of $G[A]$ and $H$, adding $G[V(G) \setminus A]$, relabeling each new vertex $u$ as $(u, \emptyset)$ and connecting each such $(u, \emptyset) \in V(G) \setminus A$ to all $(v, x) \in A \times V(H)$ such that $v \in N_{G_1}(u)$.
This operations will prove very useful when altering strategy graphs -- we will see it used soon in \Cref{lem:always_present_guard} and many times in \Cref{sec:reducing_cactus_graph}.
The aim of this operation is to defend parts of the graph almost independently.
The edges created from $G_1$ represent changes of guard positions within one part of the graph and edges from $G_2$ represent changes in another part.
While guards move within one part of the graph then the guards in the other part will remain stationary.
The necessity of the set $A$ comes from the fact that the strategy in one part assumes that a guard occupies vertex (e.g. $u$) so then the altered part is restricted to vertices where the guard is present on the vertex ($A = \vertexStates(u)$).
See \Cref{fig:cart_product} for an example application of the Cartesian product over subset.

\begin{figure}[h]
  \centering
  \includegraphics[scale=1.3]{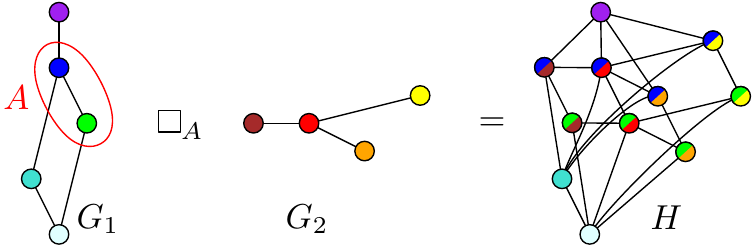}
  \caption[Cartesian product over subset]{Example of a graph Cartesian product of $G_1$ and $G_2$ over subset $A$.}%
  \label{fig:cart_product}
\end{figure}

We shall use the Cartesian product of $G'$ and complete graph over subset very often so we will use short notation that allows us to focus on what happens in the created strategy.

\begin{definition}[Short notation]\label{def:short_copy_notation}
  Let $\GCPOS{G}{A}{\{\alpha_1,\alpha_2,\dots,\alpha_n\}} = \GCPOS{G}{A}{K_n}$ where $V(K_n) = \{\beta_1,\dots,\beta_n\}$ and $\alpha_i$ denotes sets of states created from $\beta_i$, i.e., $\alpha_i = \{(a,\beta_i) \mid a \in A\}$.
\end{definition}

The Cartesian product over subset will be used to first alter the strategy graph.
The multiplied states shall defend the same set of vertices as before.
Then, during expansion, the guards shall be moved in order to defend new parts of the graph.
There, we need to ensure that the strategy remains defending.
For this, we have the following lemma that tackles the unchanged and changed states in separate cases.

\begin{lemma}\label{lem:copy_keeps_domination}
  Let us have $H = \GCPOS{G_1}{A}{G_2}$ with vertices labelled as in \Cref{def:copy}.
  \begin{enumerate}
    \item \label{lem:copy_keeps_domination:unchanged}
      If $C$ is a dominating set of $G_1$, then $\{(c,b) \mid (c,b) \in V(H), c \in C, b = \emptyset \lor b \in A\}$ is a dominating set of $H$.
    \item \label{lem:copy_keeps_domination:changed}
      If $A$ is a dominating set of $G_1$ and $B$ is a dominating set of $G_2$, then $A \times B$ (i.e., $\{(a,b) \mid a \in A, b \in B\}$) is a dominating set of $H$.
  \end{enumerate}
\end{lemma}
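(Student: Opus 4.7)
The plan is to verify each part by a direct case analysis: fix an arbitrary $(x,y) \in V(H)$ and exhibit a member of the claimed set that is either equal or adjacent to it in $H$. The key structural feature of $H$ I will exploit is the asymmetry in its edge rule, namely that any vertex of the form $(u,\emptyset)$ acts as a wildcard in the second coordinate -- it is adjacent to every $(w,d) \in V(H)$ whenever $\{u,w\} \in E(G_1)$, regardless of $d$. This flexibility of the $\emptyset$-symbol is what makes the Cartesian product over a subset compatible with domination.

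For part (\ref{lem:copy_keeps_domination:unchanged}), fix $(x,y) \in V(H)$ and use that $C$ dominates $G_1$ to pick $c \in C$ with either $c=x$ or $\{c,x\} \in E(G_1)$. If $c=x$, then $(x,y)$ itself has first coordinate in $C$ and lies in the claimed set. Otherwise $\{c,x\} \in E(G_1)$, and I split on whether $c \in A$. If $c \notin A$, then $(c,\emptyset)$ is the unique lift of $c$ into $V(H)$, and it dominates $(x,y)$ through the second edge clause using the branch $b=\emptyset$. If $c \in A$, then $(c,y')$ lies in $V(H)$ for every admissible second coordinate $y'$; choosing $y'=y$ when $x \in A$ (so $y \in V(G_2)$) or $y'=\emptyset$ when $x \notin A$ (so $y=\emptyset$) yields a dominator via the $b=d$ branch or the $d=\emptyset$ wildcard, respectively.

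For part (\ref{lem:copy_keeps_domination:changed}), fix $(x,y) \in V(H)$. If $x \in A$, then $y \in V(G_2)$ and domination of $B$ in $G_2$ yields $b \in B$ with $b=y$ or $\{b,y\} \in E(G_2)$; in either case $(x,b) \in A \times B$ dominates $(x,y)$, the adjacency following from the \emph{first} edge clause with $a=c=x \in A$. If $x \notin A$, then $y = \emptyset$, and since $A$ dominates $V(G_1)$ there is some $a \in A$ with $\{a,x\} \in E(G_1)$; picking any $b \in B$, the vertex $(a,b)$ lies in $A \times B$ and dominates $(x,\emptyset)$ through the second edge clause, the $d=\emptyset$ wildcard making the second-coordinate condition vacuous.

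The main obstacle is purely bookkeeping: the edge relation of $H$ has two disjoint clauses plus the $\emptyset$ wildcard, so one must carefully track which clause applies in each subcase and verify that the proposed dominator actually exists in $V(H)$ (for instance, $(c,b')$ with $c \notin A$ exists only for $b'=\emptyset$). No deeper combinatorial argument is needed beyond the observation that the $\emptyset$-vertices behave as universal connectors in their second coordinate, which lets the dominating sets of $G_1$ and $G_2$ be lifted almost mechanically.
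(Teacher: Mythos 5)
Your proof is correct and follows essentially the same direct case analysis as the paper's: lift a dominator of the first (resp.\ second) coordinate into $H$ and check which edge clause of \Cref{def:copy} applies, using the $\emptyset$-wildcard. One small slip (which the paper's own proof actually shares): in part~1, when $c \in A$ and $x \notin A$ the vertex $(c,\emptyset)$ you name is not in $V(H)$ --- you should instead take $(c,y')$ for an arbitrary $y' \in V(G_2)$, which still dominates $(x,\emptyset)$ by the $d=\emptyset$ wildcard you already invoke.
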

\begin{proof}
  Let us have a vertex $(x,y)$ in $V(H)$.

  In Case \ref{lem:copy_keeps_domination:unchanged}, there is a $c \in C$ that dominates $x$ in $G_1$.
  Therefore, $(c,y')$ with $y'=\emptyset$ if $c\not\in A$ or with $y'=y$ if $c\in A$ dominates $(x,y)$.

  In Case \ref{lem:copy_keeps_domination:changed}, if $x \not\in A$ (so $y=\emptyset$), then there is some $a \in A$ that dominates $x$ in $G_1$.
  As $|B| \ge 1$ there is $(a,b) \in A\times B$ that dominates $(x,y)$ in $H$.
  Otherwise $x \in A$ so there is $b \in B$ that dominates $y$ in $G_2$.
  Hence, $(x,y)$ is dominated by $(x,b) \in A\times B$ in $H$.
\end{proof}

When changing the strategy, we want to keep the properties of \Cref{lem:copy_keeps_domination} to ensure that \LAB is defending.

The following lemma shows the second major operation for changing strategies.
It allows us to add leaves to arbitrary vertex and defend the new graph with one more guard.

\begin{lemma}[Leaves addition]\label{lem:extend_leaves}
  Let us have a graph $G$ and let $u \in V(G)$ such that it has $\ell \ge 1$ adjacent leaf vertices $v_1,\dots,v_\ell$.
  Let $G'$ be a graph $G$ with vertices $v_1,\dots,v_\ell$ removed.
  For any defending \LAB $\mathcal{B}'$ there is a defending \LAB $\mathcal{B}$ with strategy graph $S_G = \GCPOS{S'_{G'}}{\vertexStates(u)}{K_\ell}$ that uses one more guard than $\mathcal{B}'$.
\end{lemma}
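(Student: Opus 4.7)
The plan is to construct $\mathcal{B}$ on the prescribed product strategy graph $S_G = \GCPOS{S'_{G'}}{\vertexStates(u)}{K_\ell}$ by choosing $P$ that adds exactly one extra guard per state, extending each transition of $\mathcal{B}'$ with one extra move (or none) for that guard, and finally verifying via \Cref{lem:copy_keeps_domination} that the result still defends $G$. Throughout, I interpret $\vertexStates(u)$ as the set of $\mathcal{B}'$-states that contain $u$, and I label the vertices of $K_\ell$ by $\{1,\dots,\ell\}$ so that the $i$-th column corresponds to the leaf $v_i$.

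First I would define $P$ by placing the extra guard on the leaf $v_i$ whenever the underlying $\mathcal{B}'$-state already holds $u$, and on $u$ itself otherwise: set $P((\alpha',i)) = P'(\alpha')\cup\{v_i\}$ for $\alpha' \in \vertexStates(u)$ and $P((\beta',\emptyset)) = P'(\beta')\cup\{u\}$ for $\beta'\notin\vertexStates(u)$. This uses exactly one more guard than $\mathcal{B}'$ and forces $u$ to be occupied in every state of $\mathcal{B}$. Next, on each of the four kinds of edges in $S_G$ I would extend $\TRAN'$ as follows. On an in-column edge $\{(\alpha',i),(\alpha',j)\}$, use the moves $(v_i,u)$ and $(u,v_j)$ to route the extra guard through the hub $u$, while every guard in $P'(\alpha')\setminus\{u\}$ stays stationary. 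On a same-index edge $\{(\alpha',i),(\gamma',i)\}$ within $\vertexStates(u)$, apply $\TRAN'(\alpha',\gamma')$ and keep the extra guard stationary on $v_i$. On a crossing edge $\{(\alpha',i),(\beta',\emptyset)\}$, where the original $u$-guard vacates $u$ via $\TRAN'(\alpha',\beta')$, add the single move $(v_i,u)$ so that the extra guard migrates onto $u$. On a disjoint edge $\{(\beta',\emptyset),(\delta',\emptyset)\}$, keep the extra guard stationary on $u$ while $\TRAN'(\beta',\delta')$ handles the rest. A direct check against \Cref{def:transition} confirms that each is a valid bijection between the guards of the two endpoint states.

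To verify the defending property I would apply \Cref{obs:defending_dominating_set} and \Cref{lem:copy_keeps_domination}. The vertex $u$ is in every state of $\mathcal{B}$, so $\vertexStates(u)$ is trivially a dominating set of $S_G$. For each $w \in V(G)\setminus\{u,v_1,\dots,v_\ell\}$, the $\mathcal{B}'$-states containing $w$ dominate $S'_{G'}$ by \Cref{obs:defending_dominating_set}, and Case~\ref{lem:copy_keeps_domination:unchanged} of \Cref{lem:copy_keeps_domination} lifts that set to a dominating set of $S_G$ which is exactly $\vertexStates(w)$ in $\mathcal{B}$. For each leaf $v_i$, I would invoke Case~\ref{lem:copy_keeps_domination:changed} with $A=\vertexStates(u)$, which dominates $S'_{G'}$ because $\mathcal{B}'$ defends $u$ in $G'$ (this also covers the edge case where $u$ is isolated in $G'$, in which case $\vertexStates(u) = \Omega'$), and $B=\{i\}$, which dominates $K_\ell$; the resulting dominating set $\vertexStates(u)\times\{i\}$ coincides with $\vertexStates(v_i)$ in $\mathcal{B}$.

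The main subtlety is the crossing transition: the original $u$-guard must leave $u$ via $\TRAN'(\alpha',\beta')$ at the very moment that the extra guard arrives at $u$ from $v_i$, thereby keeping $u$ continuously occupied in $\mathcal{B}$. The analogous two-step routing $(v_i,u),(u,v_j)$ for the in-column swap has the same flavor. Both hand-offs rely only on the fact that every $v_i$ is adjacent to $u$, which is precisely the hypothesis placed on the leaves.
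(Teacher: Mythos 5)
Your proposal is correct and follows essentially the same construction as the paper: the extra guard sits on $v_i$ in the $i$-th copy of the $u$-occupied states and on $u$ otherwise, in-column transitions route it through $u$ via $(v_i,u),(u,v_j)$, and defendedness is verified with \Cref{lem:copy_keeps_domination} exactly as in the paper. Your version merely spells out the four edge types of the product graph and the hand-off on crossing edges in more detail than the paper does.
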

\begin{proof}
  First, let $S_{G'} = \GCPOS{S'_{G'}}{\vertexStates(u)}{\{\alpha_1,\dots,\alpha_\ell\}}$ (see short notation \Cref{def:short_copy_notation}).
  As $u$ must be defended $\vertexStates(u) \ne \emptyset$.
  By its construction, all guards of strategy $S_{G'}$ are stationary on $\TRAN(\alpha_i,\alpha_j)$.
  Let $\delta' = \Omega' \setminus \{\alpha_1,\dots,\alpha_\ell\}$.
  We expand the strategy over $S_{G'}$ to $G$ by adding $u$ to $\delta$ (i.e., $P(\delta) = P'(\delta') \cup \{u\}$) and adding $v_i$ to $\alpha_i$.
  We set $\TRAN(\alpha_i,\alpha_j) = \{(v_i,u),(u,v_j)\}$ and we extend $\TRAN(\delta,\alpha_i)$ with $(u,v_i)$.

  As $\alpha_i$ dominates the clique and $\vertexStates'(u)$ dominates $S'_{G'}$ we have by \Cref{lem:copy_keeps_domination} that $\mathcal{B}$ is a defending \LAB for $G$.
\end{proof}

\begin{observation}\label{obs:more_leaves}
  In \Cref{lem:extend_leaves} the construction works the same for any number of leaves, hence, we may add additional leaves after its use retroactively.
\end{observation}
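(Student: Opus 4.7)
The plan is to reduce retroactive addition of leaves to a single application of \Cref{lem:extend_leaves} with a larger value of the parameter $\ell$. First I would observe that the proof of \Cref{lem:extend_leaves} nowhere fixes $\ell$ in advance: the construction $S_G = \GCPOS{S'_{G'}}{\vertexStates(u)}{K_\ell}$ is well-defined for every $\ell \geq 1$, and the number of extra guards used is always exactly one, independent of $\ell$.

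Then, given a defending \LAB $\mathcal{B}$ produced by \Cref{lem:extend_leaves} from a strategy $\mathcal{B}'$ on $G'$ with $\ell$ leaves $v_1,\dots,v_\ell$ attached at $u$, and supposing we now wish to add $k$ further leaves $v_{\ell+1},\dots,v_{\ell+k}$, I would simply re-apply \Cref{lem:extend_leaves} to the same $\mathcal{B}'$ but with the enlarged leaf set of size $\ell+k$. This yields a defending \LAB $\mathcal{B}^*$ with strategy graph $\GCPOS{S'_{G'}}{\vertexStates'(u)}{K_{\ell+k}}$, which defends all of $v_1,\dots,v_{\ell+k}$ while still using exactly one more guard than $\mathcal{B}'$.

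The only point to verify, and what I expect to be the least routine, is that the construction is uniform in $\ell$ in the sense that enlarging $K_\ell$ to $K_{\ell+k}$ does not disturb the existing part of the strategy. This is immediate from the description of the transitions: each $\alpha_i$ is obtained by duplicating states of $\vertexStates'(u)$, and the only transitions involving $\alpha_i$ are $\TRAN(\alpha_i,\alpha_j) = \{(v_i,u),(u,v_j)\}$ for $j \neq i$ and the extensions of $\TRAN(\delta,\alpha_i)$ with $(u,v_i)$, none of which reference any other $\alpha_{j'}$. Consequently, the $k$ new states $\alpha_{\ell+1},\dots,\alpha_{\ell+k}$ and their transitions can be appended to $\mathcal{B}$ without modifying the existing states or moves, and \Cref{lem:copy_keeps_domination} again certifies that the resulting \LAB defends $G^*$. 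Hence retroactive extension costs no additional guards, as claimed.
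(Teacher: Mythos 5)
Your argument is correct and matches the paper's intended justification: the paper states this as an observation without a separate proof, relying exactly on the point you make, namely that the construction in \Cref{lem:extend_leaves} is uniform in $\ell$, that the transitions involving each $\alpha_i$ reference only $u$, $v_i$, and $v_j$ and never any other $\alpha_{j'}$, and that the guard count increase is one regardless of $\ell$. Your explicit check that the new states $\alpha_{\ell+1},\dots,\alpha_{\ell+k}$ can be appended without disturbing the existing states, with \Cref{lem:copy_keeps_domination} re-certifying domination, is precisely the reasoning the observation compresses.
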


We shall build strategies where vast majority of leaves are defended with \Cref{lem:extend_leaves}.
This gives a merit to treat all such states in the same way as their transitions with respect to the rest of the graph are isomorphic.
To do this we put forward the following notion.

\begin{definition}[Group state]\label{def:group_state}
  Let a \emph{group defense} be a set of states which were created by Cartesian product of $G$ and a clique $K_n$ over a subset.
\end{definition}

We shall use group defense only to describe groups of leaves.
By \Cref{obs:more_leaves} we will be able to add new leaves to such group at any point of the construction.

In group states, but also in general strategies we investigated, it seems that vertices which are adjacent to multiple leaves are often permanently occupied.
To get a concrete result from this observation let us show how to alter an \EDNinstance strategy such that such vertices are permanently occupied.

\begin{definition}[Permanently defended]\label{def:permanent}
  A vertex $u$ is \emph{permanently defended} (permanently occupied) in $\mathcal{B}$ if $\vertexStates(u) = \Omega$, i.e., $u \in P(\alpha)$ for every $\alpha \in \Omega$.
\end{definition}

\begin{lemma}\label{lem:always_present_guard}
  For a graph $G$ and its arbitrary defending \LAB $\mathcal{B}'$ we may create a defending \LAB $\mathcal{B}$ which uses the same number of guards and where each vertex adjacent to at least $2$ leaves is permanently defended.
\end{lemma}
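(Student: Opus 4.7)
The plan is to first isolate a simple structural claim: in any defending \LAB $\mathcal{B}'$ and for any vertex $u$ with leaves $v_1,\ldots,v_\ell$ where $\ell \ge 2$, whenever $u \notin P'(\alpha)$ for some state $\alpha$, every leaf $v_i$ must satisfy $v_i \in P'(\alpha)$. Indeed, otherwise defending an attack on $v_i$ would require a neighbor state $\beta$ with $v_i \in P'(\beta)$, yet the guard arriving there must originate in $N[v_i] \cap P'(\alpha) \subseteq \{u,v_i\} \cap P'(\alpha) = \emptyset$, a contradiction.

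Using this claim I would handle one such $u$ at a time and then iterate. For a fixed $u$, set $\Omega_{\bar u} = \Omega' \setminus \vertexStates'(u)$ and apply the graph Cartesian product over subset from \Cref{def:copy}: let $S_G = \GCPOS{S'_{G'}}{\Omega_{\bar u}}{K_\ell}$. This splits each $\alpha \in \Omega_{\bar u}$ into copies $\alpha_1,\ldots,\alpha_\ell$, while states in $\vertexStates'(u)$ carry over unchanged. I would then define the new mapping by $P(\alpha_i) = (P'(\alpha)\setminus\{v_i\})\cup\{u\}$ and $P(\beta) = P'(\beta)$ for $\beta \in \vertexStates'(u)$. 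The per-state guard count is preserved, and by construction $u$ occupies every state of $\mathcal{B}$.

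The substantive step is specifying transitions consistent with \Cref{def:transition} that keep the strategy defending. For inter-copy edges $\alpha_i \to \alpha_j$ ($i\ne j$) introduced by the product, I would use the moves $\{(u,v_i),(v_j,u)\}$ and keep every other guard stationary. For copy-to-copy edges $\alpha_i \to \beta_i$ with $\alpha,\beta \in \Omega_{\bar u}$, the original transition keeps every leaf of $u$ stationary (a leaf guard has no destination besides $u$, which is empty in both endpoints), so one simply swaps the stationary $v_i$-guard for a stationary $u$-guard. For copy-to-unchanged edges $\alpha_i \to \beta$ with $\beta \in \vertexStates'(u)$, the original $\alpha \to \beta$ delivers a guard to $u$ from a source $w \in N(u)\cap P'(\alpha)$: if $w = v_i$ the new $u$-guard stays put; otherwise $v_i \in P'(\beta)$ necessarily, and the $u$-guard in $\alpha_i$ moves to $v_i$ while the $w$-guard still moves to $u$. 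A routine check confirms the resulting \LAB is defending, the only non-trivial case being that $v_i$ is defended from $\alpha_i$ through the inter-copy edge to any $\alpha_j$, $j\ne i$.

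The main obstacle will be verifying the $\alpha_i \to \beta$ transitions and the defendingness checks; both rely on the observation that once $u$ is filled by a source $w \ne v_i$ in the original transition, the guard on $v_i$ has no other neighbor, so $v_i$ must remain in $P'(\beta)$. Finally, iterating the construction over all vertices with at least two leaves preserves the permanent occupancy of previously processed vertices, since a leaf-for-parent swap affecting one vertex's neighborhood leaves membership of all other vertices unchanged, and no vertex adjacent to two leaves can itself be a leaf of another.
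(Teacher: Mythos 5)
Your proposal is correct and follows essentially the same route as the paper's proof: observe that in any state omitting $u$ all its leaves must be occupied, take the Cartesian product over the subset $\Omega' \setminus \vertexStates'(u)$ with a clique, place $u$ in every new copy by vacating one leaf per copy, and patch the transitions; the only cosmetic difference is that you use $K_\ell$ (one copy per leaf) where the paper fixes two leaves $v,w$ and uses $K_2$. Your explicit case analysis of the copy-to-unchanged transitions (whether the guard arriving at $u$ comes from $v_i$ or not) is a slightly more detailed version of the paper's ``substitute $w$ by $u$ in each movement'' argument, and both are sound.
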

\begin{proof}
  Let vertex $u$ be a vertex with at least $2$ adjacent leaves such that $u$ is not permanently occupied in in strategy $\mathcal{B}'$.
  Let $\alpha'$ be a state where no guard occupies $u$.
  In $P'(\alpha')$, there must be a guard on each leaf adjacent to $u$.
  Let $v$ and $w$ be two of the leaves adjacent to $u$ in $G'$.
  Let $S_{G'} = \GCPOS{S'_{G'}}{\Omega' \setminus \vertexStates'(u)}{\{\alpha_v,\alpha_w\}}$ (see short notation \Cref{def:short_copy_notation}).
  Let $P'(\alpha_v) = P(\alpha') \cup \{u\} \setminus \{w\}$ and $P'(\alpha_w) = P(\alpha') \cup \{u\} \setminus \{v\}$, so $P(\alpha_v)$ occupies $v$ and $P(\alpha_w)$ occupies $w$.

  To create transitions $\TRAN$, we keep all transitions between states within $\vertexStates'(u)$ the same.
  For all $\beta \in \Omega \setminus \vertexStates'(u)$, we set $\TRAN(\alpha_v, \beta)$ as $\TRAN(\alpha', \beta)$ with $w$ in each movement substituted by $u$.
  Such substitution still constitutes movements as $N[w] \subseteq N[u]$.
  Similarly, we create the transitions for $\TRAN(\alpha_w, \beta)$.
  Furthermore, we set $\TRAN(\alpha_w, \alpha_v) = \{(v,u),(u,w)\}$.
  Note, that the transitions in reverse direction are derived from symmetry.
  This shows that there are valid transitions for all edges of the created strategy graph $S_G$.

  In the obtained strategy $\mathcal{B}$ vertex $u$ is permanently defended.
  As we use Cartesian product with a complete graph over a dominating subset it follows from \Cref{lem:copy_keeps_domination} that the strategy is still defending.

  We repeat the above procedure for each vertex adjacent to at least $2$ leaves until all such vertices are permanently defended.
\end{proof}

\section{Reducing Cactus Graphs}\label{sec:reducing_cactus_graph}

In this section, we prove that $\EGC(G) = \EDN(G)$ for cactus graphs by showing optimal strategies and unconditional lower bounds.
The main idea is to repeatedly use reductions on the cactus graph $G$ to produce smaller cactus graph $G'$.
Then we prove that a strategy for $G$ uses a fixed number of guards more than an optimal strategy for $G'$.
Respective lower bound then shows that the strategy for $G$ is indeed also optimal.
We will describe precise way we get such results in \Cref{sec:technique_and_overview} but before that, we show the overall structure of the proof.

The proof uses an induction on the number of vertices.
The base case is a small graph ($1$ or $2$ vertices) where the optimal strategy is elementary (see \Cref{def:base_case}).
The induction step is described in detail later.
Now we show several structural properties of cactus graphs which allow us to do the induction.

\begin{definition}[Leaf cycle]\label{def:leaf_cycle}
  \emph{Leaf cycle} is a cycle which has at most one vertex (called \emph{connecting vertex}) which has neighbor such that it is not a vertex of the cycle nor a leaf.
\end{definition}
See a leaf cycle on \Cref{fig:white_pink_red}.

\begin{definition}[Leaf component]\label{def:leaf_component}
  By a \emph{leaf component} we mean either a leaf cycle or a leaf vertex which is not adjacent to a leaf cycle.
\end{definition}

\begin{observation}\label{obs:blockcut_decomposition}
  Every cactus graph with at least $3$ vertices contains a leaf component.
\end{observation}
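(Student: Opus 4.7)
The plan is to use the block–cut tree decomposition of $G$. Recall that in a cactus graph every block is either a single edge (a bridge) or an induced cycle, and every cut vertex belongs to at least two blocks (hence has degree at least $2$ in the block–cut tree). The first step is to handle the degenerate case where $G$ is $2$-connected: then $G$ is a single block, and since $|V(G)| \ge 3$ and $G$ is a cactus, this block must be a cycle. No vertex has a neighbor outside the cycle, so the cycle is vacuously a leaf cycle and therefore a leaf component.

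Otherwise, the block–cut tree $T$ has at least two vertices and so contains a leaf. Because every cut vertex has degree at least $2$ in $T$, every leaf of $T$ must be a block; I pick such a leaf block $B$ and let $c$ be its unique cut vertex. The key structural observation I would prove next is that every $v \in V(B)\setminus\{c\}$ has all its neighbors inside $B$: if $v$ had a neighbor $w \notin V(B)$, then the edge $vw$ would lie in a different block, forcing $v$ to be a second cut vertex of $B$ and contradicting that $B$ is a leaf of $T$.

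I would then finish by splitting on the type of $B$. If $B$ is a cycle, the previous observation tells us that the only vertex of $B$ that can have a neighbor outside the cycle is $c$, so $B$ immediately satisfies \Cref{def:leaf_cycle} with $c$ as the connecting vertex and is a leaf component. If $B$ is a bridge, then $B = \{c,v\}$ and $v$ has no neighbor outside $B$, so $v$ is a leaf of $G$; either $v$ is not adjacent to any leaf cycle, in which case $v$ itself is a leaf component, or $v$ is adjacent to a leaf cycle, which is itself a leaf component.

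The main obstacle I anticipate is notational rather than conceptual: I must carefully argue that a vertex of $V(B)\setminus\{c\}$ cannot even carry attached leaves (since a pendant edge at such a vertex is a separate block, which would again promote the vertex to a cut vertex), and then verify that this indeed matches the ``at most one vertex with a neighbor that is neither on the cycle nor a leaf'' phrasing of \Cref{def:leaf_cycle}. Once that verification is in place, both sub-cases yield a leaf component directly and the induction framework of \Cref{sec:technique_and_overview} can proceed.
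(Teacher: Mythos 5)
Your proof is correct and follows essentially the same route as the paper: both arguments take the block--cut tree of the cactus, use the fact that every block is a single edge or a cycle, and case-split on the type of a leaf block (the paper phrases this via the deepest node and its grandparent, while you resolve the pendant-edge case with the dichotomy ``adjacent to a leaf cycle or not,'' which is logically equivalent). The structural claim you isolate --- that in a leaf block $B$ with cut vertex $c$ every vertex of $V(B)\setminus\{c\}$ has all neighbors inside $B$ --- is sound and suffices to verify \Cref{def:leaf_cycle}.
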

\begin{proof}
  Let us obtain the block-cut tree $\mathcal{T}$ representation of the cactus graph and root it in an arbitrary block node (see \cite[block-cutpoint trees, page 36]{graph-theory}).
  Each block node of $\mathcal{T}$ either represents a single edge or a cycle.
  We observe the deepest nodes of $\mathcal{T}$ to get the following three cases, see \Cref{fig:block_cut_leaves}.
  \begin{itemize}
    \item[A] There is a deepest node which represents a cycle.
    \item[B] A deepest node's grandparent block is a single edge block.
    \item[C] A deepest node's grandparent block is a cycle block.
    \item[D] No deepest node has a grandparent.
  \end{itemize}
  \begin{figure}[h]
    \centering
    \includegraphics[scale=1.2]{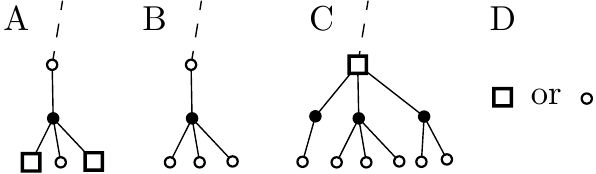}
    \caption[Block-cut tree substructures]{
      Example subtrees for structures which always appear in the block-cut tree of a cactus graph.
      Big squares represent cycle nodes, small full circles represent articulations, and small empty circles represent single edge nodes.}%
    \label{fig:block_cut_leaves}
  \end{figure}
  Note that in cases A and C the graph contains a leaf cycle, in case B it contains a leaf vertex which is not adjacent to a leaf cycle.
  The case D is trivial and the graph is either a cycle or a single edge.
  Hence, a cactus graph always contains a leaf component.
\end{proof}

\begin{definition}[Vertex colors]\label{def:vertex_colors}
  A non-connecting vertex $v$ of a leaf cycle $C$ is labeled with a color $\col(v)$ which depends on the number of adjacent leaves in the following way.
  \[
    \text{$\col(v)$ is\,}
    \begin{cases}
      \white      & \text{if $v$ is adjacent to $0$ leaves} \\
      \pink       & \text{if $v$ is adjacent to $1$ leaf} \\
      \red        & \text{if $v$ is adjacent to at least $2$ leaves}
    \end{cases}
  \]
  We shall label $v$ as $\col(v)=\connecting$ if $v$ is the connecting vertex of $C$.
  When a vertex can have different colors (to cover several cases at once) we list them by set of colors.
  For that purpose, we may write $\white$ instead of $\{\white\}$, and similarly for $\pink$, $\red$, and $\connecting$.
  Also, we use a shortcut to denote all colors $\unknown = \{\white,\pink,\red,\connecting\}$.
\end{definition}
See \Cref{fig:white_pink_red} for an example of $\white$, $\pink$, and $\red$ vertices.

\begin{figure}[h]
  \begin{minipage}[c]{0.47\textwidth}
    \centering
    \includegraphics[scale=1.2]{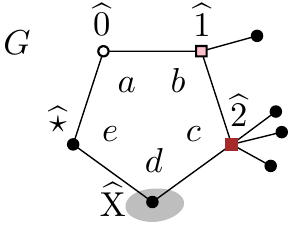}
  \end{minipage}
  \begin{minipage}[c]{0.52\textwidth}
    \caption[Example leaf cycle and vertex colors]{%
      A leaf cycle of a graph $G$ with a partial \LAB $\mathcal{B}$ containing vertices $a$, $b$, $c$, $d$, and $e$ with colors $\white$, $\pink$, $\red$, $\connecting$, and $\unknown$, respectively.
      The original graph was bigger and its rest was connected to $d$.
      To look at the leaf cycle in isolation, the graph was cut in vertex $d$ that now constitutes interface of $\mathcal{B}$.
      This cycle would be denoted by a leaf sequence $(\connecting,\unknown,\white,\pink,\red,\connecting)$ (or reversed).
    }%
    \label{fig:white_pink_red}
  \end{minipage}
\end{figure}

To describe reductions over leaf components we will use a concise notation for the leaf cycles which just lists the colors of consecutive vertices of the cycle as follows.
See \Cref{fig:white_pink_red} for an example.

\begin{definition}[Leaf sequence]\label{def:concise_reductions}
  Let $(v_1,\dots,v_n)$ be $n$ consecutive vertices of a leaf cycle.
  The \emph{leaf sequence} of vertices $(v_1,\dots,v_n)$ is $(\col(v_1),\dots,\col(v_n))$ where $\col(v_i) \subseteq \{\white,\pink,\red,\connecting\}$.
  Moreover, given two leaf sequences $A$ and $B$ and a graph $G$ which contains a leaf cycle with a leaf sequence $A$, let $A \to B$ denote a reduction of subgraph with leaf sequence $A$ to one with leaf sequence $B$ in $G$ to obtain $G'$.
\end{definition}

Note that if the leaf sequence starts and ends with a connecting vertex and contains no $\unknown$, then it describes the whole cycle because colors correspond to the number of leaves and there is only one connecting vertex in a leaf cycle.

Now, we show the base case and the overview of the induction step.

\begin{definition}[Base cases]\label{def:base_case}
  Let the \emph{base cases} be the following graphs along with their optimal defending \LABS.
  \begin{itemize}
    \item A single isolated vertex with no edges defended by \LAB
      \[
        \big( (\{u\},\emptyset),(\{\alpha\},\emptyset),\{\alpha \to \{u\}\},\emptyset\big).
      \]
    \item A single isolated edge is defended by \LAB
      \[
        \Big(
          \big(\{u,v\},\{\{u,v\}\}\big),
          \big(\{\alpha,\beta\},\{\{\alpha,\beta\}\}\big),
          \big\{\alpha \to \{u\},\beta \to \{v\}\big\},
          \big\{(\alpha,\beta) \to \{(u,v)\}\big\}
        \Big).
      \]
  \end{itemize}
\end{definition}

\subsection{Technique and Overview}\label{sec:technique_and_overview}

For the induction step, every reduction takes the cactus graph $G$ and changes it to $G'$ which has smaller number of vertices.
Reductions will be performed on a leaf component which by \Cref{obs:blockcut_decomposition} is always present in a cactus graph on at least three vertices.
The two cactus graphs which have at most two vertices are covered by base cases from \Cref{def:base_case}.

More precisely, every reduction shows lower bound and upper bound.
Lower bound is shown for the \EGCproblem and involves using \Cref{obs:identification,obs:lb_leaf,obs:lb_star,obs:lb_path}.
These tools make the graph smaller and show that in any defending strategy the removed parts required some minimum number of guards; they give us lower bound $\EGC(G) \geq \EGC(G') + K$ for some constant $K$.

Upper bounds are shown for the \EDNproblem and usually involve two separate steps.
First step takes the reduced graph $G'$ and its optimal strategy $S'_{G'}$ and shows how to alter the strategy by tools shown in \Cref{sec:upper_bound_tools}.
This does not change the number of guards, but only structure of the defense.
Second step uses the framework shown in \Cref{sec:upper_bounds}.
It takes part of the graph we intend to expand (\Cref{def:expansion}), cuts it, and replaces with an interface equivalent (\Cref{def:interface_equivalent}) partial \LAB, as described in \Cref{obs:upper_bound}.
During the expansion, the strategy graph $S_G$ does not change (so $S_G = S_{G'}$), however the graph and mapping does change.
The \LAB now maps strategy graph states so that there are new guards and some states have guards moved to other vertices of the graph.
We also show how the transitions change between states that were altered.
When the defense of the graph is managed with $K$ additional guards, this gives us an upper bound $\EDN(G) \leq \EDN(G')+K$ (the same $K$ as in the lower bound).

Combining the lower and upper bound using \Cref{lem:technique} results in an optimal number of guards for $G$ for \EDNproblem and \EGCproblem.

The used reduction depends on a leaf component that the cactus graph contains by \Cref{obs:blockcut_decomposition}.
If the deepest node is not adjacent to a leaf cycle, then we use leaf reductions shown in \Cref{sec:reducing_trees}.
Using these reduction exhaustively results in having a leaf cycle (or a base case) -- we will show this soon in \Cref{lem:tree_reductions_result}.

To reduce leaf cycles we will need additional properties on edges of the leaf cycle.
This involves being able to forbid movement along an edge, and forcing move along an edge.
We achieve this by partitioning all states of the strategy graph into tree groups which ensure these properties.
The properties are established in \Cref{sec:properties}.

Having the properties we take the leaf cycle and look at its vertex colors.
If a color pattern is listed among reductions then we have a way to remove it.
The reductions are split into two groups.
First recognizes just a small part of the cycle, making it shorter -- these are called cycle reductions \Cref{sec:cycle_reductions}.
The second recognizes the whole cycle and removes it entirely, leaving just a few leaves in its place -- these are constant component reductions shown in \Cref{sec:constant_reductions}.
We show that one of these reductions may always be used by exhaustive search of all possibilities in depicted in \Cref{fig:case_study}; and doubling this function, we show a slightly different proof in \Cref{lem:exhaustive_application_cycle}.

We end this section with the aforementioned proof of the cactus graph structure after application of leaf reductions in \Cref{lem:tree_reductions_result} and a diagram overview of the remaining sections in \Cref{fig:overview}.

\begin{lemma}\label{lem:tree_reductions_result}
  Exhaustive application of Reductions~\ref{reduction-t1}, \ref{reduction-t2}, and \ref{reduction-t3} on a cactus graph $G$ results in reaching the base case or it results in a cactus graph with a leaf cycle.
\end{lemma}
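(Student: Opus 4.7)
The plan is by contradiction. Suppose $G^*$ is the result of exhaustive application of Reductions $t_1$, $t_2$, and $t_3$ to $G$ and that $G^*$ is neither a base case nor contains a leaf cycle. Since $|V(G^*)|\geq 3$, \Cref{obs:blockcut_decomposition} guarantees a leaf component, which by assumption must be a leaf vertex rather than a leaf cycle. I split on whether $G^*$ is acyclic or contains a cycle.

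If $G^*$ is a tree, I would root its block-cut tree at an arbitrary block and take a deepest single-edge leaf block, with leaf $v$ and articulation $u$. Since $v$ is deepest, every block-child of $u$ in the rooted tree is a single-edge leaf block, so $u$ has $\ell\ge 1$ adjacent leaves and degree $d(u)$ equal to $\ell+1$ or $\ell$ (the latter exactly when $G^*=K_{1,\ell}$). A direct case split gives $t_1$ when $\ell=1$, $t_3$ when $\ell=2$, and $t_2$ when $\ell\geq 3$. The degenerate case $G^*=K_{1,2}$ is handled by observing that the lower- and upper-bound derivations in \Cref{lem:reduction-t3} use only \Cref{obs:lb_leaf} and \Cref{lem:extend_leaves}, neither of which genuinely requires $d=3$, so $t_3$ still reduces $K_{1,2}$ to the single-vertex base case. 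Each situation contradicts exhaustiveness.

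If $G^*$ contains a cycle, I would consider the block-cut tree $T$ and the subtree $T'$ obtained by deleting all single-edge leaf blocks. By \Cref{def:leaf_cycle}, an articulation $w$ of a cycle block $B$ is a connecting vertex of $B$ exactly when $w$'s other block in $T$ is not a single-edge-to-a-leaf, equivalently when $w$'s incidence to $B$ survives in $T'$. Hence the no-leaf-cycle hypothesis translates to: every cycle block has degree at least two in $T'$. Single-edge non-leaf blocks have degree two in $T'$, and internal cut-vertices preserve their full block-degree, so every leaf of $T'$ must be a cut vertex $u$ whose unique surviving block $B_0$ is either single-edge (in which case the tree-case reduction argument applies verbatim) or a cycle; in the latter case $u$ has $\ell(u)\in\{1,2\}$ adjacent leaves (otherwise $t_2$ applies).

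The hard part is the residual subcase of leaves of $T'$ whose parent block is a cycle. The plan is an iterative peeling: remove all current leaves of $T'$ and repeat. At each step cycle blocks retain their $\geq 2$ connecting articulations, single-edge non-leaf blocks retain their two cut-vertex endpoints, and internal cut-vertices are only adjacent to blocks so their degrees are unaffected. The resulting subtree therefore has no leaves and so has at most one node; a direct inspection of the one-node case shows the survivor must be a block (since any cut vertex has $\geq 2$ block neighbors in $T$, all preserved by the peeling), and it cannot be a single-edge non-leaf block (both its endpoints are internal cut vertices, so it has degree two throughout), so it is a cycle block all of whose $T$-articulations only led to leaves of $G^*$. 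By the ``at most one connecting vertex'' clause of \Cref{def:leaf_cycle}, this cycle is itself a leaf cycle of $G^*$, contradicting our assumption and finishing the argument.
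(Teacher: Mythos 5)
Your proposal is correct, but for the case where a cycle is present it takes a genuinely different and considerably heavier route than the paper. The paper's proof is four sentences: it reuses the case analysis inside the proof of \Cref{obs:blockcut_decomposition}, which already establishes that a cactus either contains a leaf cycle or contains a vertex $u$ all of whose block-tree children are pendant edges and whose parent block is a single edge, so $d(u)=\ell+1$ and exactly one of Reductions~\ref{reduction-t1}, \ref{reduction-t3}, \ref{reduction-t2} fires according to $\ell$. You instead rely only on the \emph{statement} of \Cref{obs:blockcut_decomposition} (which, as you correctly sense, is too weak by itself: a pendant leaf whose support vertex lies on a non-leaf cycle is a leaf component to which no tree reduction applies) and then re-derive the needed dichotomy via a leaf-peeling argument on the block-cut tree. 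This buys self-containedness at the cost of length; the paper buys brevity at the cost of quietly importing the internals of an earlier proof. Two small imprecisions in your write-up, neither fatal: (i) ``every cycle block has degree at least two in $T'$'' is not what the no-leaf-cycle hypothesis gives you --- the degree of a cycle block in your literal $T'$ equals its number of articulations, and a leaf cycle whose two articulations carry only pendant leaves still has degree two there; the invariant you actually need, and do correctly invoke in the peeling step, is that each cycle block retains at least two \emph{connecting} articulations, i.e., articulations with a second surviving block. (ii) Your treatment of $K_{1,2}$ applies Reduction~\ref{reduction-t3} outside its stated hypothesis $d=3$ by appealing to its proof; the paper does the same thing implicitly (``these solve base cases where only a single edge or a star remain''), so you have merely made explicit an imprecision that already sits in the reductions' definitions.
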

\begin{proof}
  We saw in \Cref{obs:blockcut_decomposition} that in every cactus there either is a leaf cycle or there is a set of $\ell \ge 1$ leaves with a common parent which is connected to the rest of the graph with a single edge.
  The number $\ell$ directly implies which tree reduction may be applied.
  If $\ell = 1$, then we may use Reduction~\ref{reduction-t1}; if $\ell = 2$, then we use Reduction~\ref{reduction-t3}; and last, if $\ell > 2$, then we use Reduction~\ref{reduction-t2}.
  After exhaustive application we either reach the base case or the other case applies -- we have a leaf cycle.
\end{proof}

\begin{figure}[h]
  \centering
  \def\isusedin(#1,#2,#3,#4,#5){
    \path [draw=#5, -latex',rounded corners=1pt] ($(#1.east)+(0,-0.04)$) -| ($(#1.east)+(+#3*0.05,-0.2)+(+#4*0.015,0)$) |- ($(#2.east)+(0,0.04)$);
}
\def\isusedbetweenup(#1,#2,#3,#4,#5){
    \path [draw=#5, -latex',rounded corners=4pt] (#1.east) -| ($(#2.west)-(#3*0.1,0)+(-#4*0.015,-0.3)$) |- (#2.west);
}
\def\isusedbetweendown(#1,#2,#3,#4,#5){
    \path [draw=#5, -latex',rounded corners=4pt] (#1.east) -| ($(#2.west)-(#3*0.1,0)+(-#4*0.015,+0.3)$) |- (#2.west);
}
\def\isusedbetweendirect(#1,#2,#3,#4,#5){
    \path [draw=#5, -latex'] (#1.east) -- (#2.west);
}
\def\isusedbetweenstraight(#1,#2,#3,#4){
  \path [draw, -latex',rounded corners=4pt] (#1.east) -| ($(#2.west)-(#3*0.1,0)+(-#4*0.015,+0.0)$) |- (#2.west);
}
\def\isusedfurther(#1){
    \path [draw, -latex',rounded corners=4pt] (#1.east) -- ($(#1.east)+(0.3,0)$);
}
\begin{tikzpicture}[
    node distance=1.0em,
    title/.style={font=\fontsize{6}{6}\color{black}\ttfamily, anchor=west}
    ]
    \def\shift{5pt}

    \node (lb) at (0,0) [title] { \Cref{sec:lower_bounds} Lower bounds};
    \node (identification) [below=of lb.west, title, xshift=\shift] {\Cref{obs:identification} Vertex identification};
    \node (lbleaf) [below=of identification.west, title] {\Cref{obs:lb_leaf} Leaf lower bound};
    \node (lbstar) [below=of lbleaf.west, title] {\Cref{obs:lb_star} Star lower bound};
    \node (lbpath) [below=of lbstar.west, title] {\Cref{obs:lb_path} Path lower bound};
    \node [draw=black, fit={(lb) (lbpath) (identification)}] {};

    \node (properties) at (0,1.7) [title] {\Cref{sec:properties} Properties of cycle edges};
    \node (properstates) [below=of properties.west, title, xshift=1*\shift] {\Cref{prop:labelled_strategy} Proper labelled strategy};
    \node [draw=black, fit={(properties) (properstates)}] {};

    \node (leafgroup) at (0,4) [title] {\Cref{sec:upper_bound_tools} Tools for altering strategies};
    \node (copy) [below=of leafgroup.west, title, xshift=1*\shift] {\Cref{def:copy} Cartesian product over subset};
    \node (leafaddition) [below=of copy.west, title] {\Cref{lem:extend_leaves} Leaves addition};
    \node (groupstate) [below=of leafaddition.west, title] {\Cref{def:group_state} Group state};
    \node [draw=black, fit={(leafgroup) (leafaddition) (groupstate) (copy)}] {};

    \node (secleaves) at (8,3.1) [title] {\Cref{sec:reducing_trees} Leaf reductions};
    \node (redt1) [below=of secleaves.west, title, xshift=1*\shift] {\Cref{lem:reduction-t1} Reduction~\ref{reduction-t1}};
    \node (redt2) [below=of redt1.west, title] {\Cref{lem:reduction-t2} Reduction~\ref{reduction-t2}};
    \node (redt3) [below=of redt2.west, title] {\Cref{lem:reduction-t3} Reduction~\ref{reduction-t3}};
    \node (seccycle) [below=of redt3.west, title, xshift=-1*\shift] {\Cref{sec:cycle_reductions} Cycle reductions};
    \node (redc1) [below=of seccycle.west, title, xshift=1*\shift] {\Cref{lem:reduction-c1} Reduction~\ref{reduction-c1}};
    \node (redc23) [below=of redc1.west, title] {\Cref{lem:reduction-c23} Reductions~\ref{reduction-c2} and~\ref{reduction-c3}};
    \node (redc4) [below=of redc23.west, title] {\Cref{lem:reduction-c4} Reduction~\ref{reduction-c4}};
    \node (redc5) [below=of redc4.west, title] {\Cref{lem:reduction-c5} Reduction~\ref{reduction-c5}};
    \node (redc6) [below=of redc5.west, title] {\Cref{lem:reduction-c6} Reduction~\ref{reduction-c6}};
    \node (secconst) [below=of redc6.west, title, xshift=-1*\shift] {\Cref{sec:constant_reductions} Constant component reductions};
    \node (cactusmulti) [below=of secconst.west, title, xshift=1*\shift] {\Cref{def:cactus_multigraph} Cactus multigraph};
    \node (redm1) [below=of cactusmulti.west, title, xshift=1*\shift] {\Cref{lem:reduction-m1} Reduction~\ref{reduction-m1}};
    \node (redm2) [below=of redm1.west, title] {\Cref{lem:reduction-m2} Reduction~\ref{reduction-m2}};
    \node (redr1) [below=of redm2.west, title, xshift=-1*\shift] {\Cref{lem:reduction-r1} Reduction~\ref{reduction-r1}};
    \node (redr2) [below=of redr1.west, title] {\Cref{lem:reduction-r2} Reduction~\ref{reduction-r2}};
    \node (redr3) [below=of redr2.west, title] {\Cref{lem:reduction-r3} Reduction~\ref{reduction-r3}};
    \node (redr4) [below=of redr3.west, title] {\Cref{lem:reduction-r4} Reduction~\ref{reduction-r4}};
    \node (redr5) [below=of redr4.west, title] {\Cref{lem:reduction-r5} Reduction~\ref{reduction-r5}};
    \node [draw=black, fit={(secleaves) (cactusmulti) (redr5) (secconst)}] {};


    \def\copycolor{red!90!purple}
    \def\addcolor{ red!50!orange}
    \def\grpcolor{ red!10!orange}
    \def\propercolor{gray}
    \def\idcolor{  blue!90!purple}
    \def\lbcolor{  blue!60!purple}
    \def\starcolor{blue!40!purple}
    \def\pathcolor{blue!10!purple}

    \def\copydepth{15}
    \def\copymoredepth{16.72}
    \def\adddepth{17}
    \def\grpdepth{19}
    \def\properdepth{26}
    \def\iddepth{9}
    \def\lbdepth{7}
    \def\stardepth{5.5}
    \def\pathdepth{4.4}

    \isusedbetweendirect(properstates,redc1,\properdepth,0,\propercolor)
    \isusedbetweendown(  properstates,redc4,\properdepth,0,\propercolor)
    \isusedbetweendown(  properstates,redc5,\properdepth,1,\propercolor)

    \isusedbetweendown(groupstate,redt2,\grpdepth,0,\grpcolor)
    \isusedbetweendown(groupstate,redc23,\grpdepth,1,\grpcolor)

    \isusedbetweendown(leafaddition,redt3,\adddepth,0,\addcolor)
    \isusedbetweendown(leafaddition,redc5,\adddepth,1,\addcolor)
    \isusedbetweendown(leafaddition,redr2,\adddepth,2,\addcolor)
    \isusedbetweendown(leafaddition,redr3,\adddepth,3,\addcolor)
    \isusedbetweendown(leafaddition,redr4,\adddepth,4,\addcolor)

    \isusedbetweendown(copy,redt2,\copydepth,0,\copycolor)
    \isusedbetweendown(copy,redm1,\copymoredepth,1,\copycolor)
    \isusedbetweendown(copy,redm2,\copymoredepth,2,\copycolor)
    \isusedbetweendown(copy,redr3,\copydepth,3,\copycolor)
    \isusedbetweendown(copy,redr5,\copydepth,4,\copycolor)

    \isusedbetweenup(  identification,redt2,\iddepth,3,\idcolor)
    \isusedbetweendown(identification,redr1,\iddepth,0,\idcolor)
    \isusedbetweendown(identification,redr3,\iddepth,1,\idcolor)
    \isusedbetweendown(identification,redr4,\iddepth,2,\idcolor)
    \isusedbetweendown(identification,redr5,\iddepth,3,\idcolor)

    \isusedbetweenup(  lbleaf,redt1,\lbdepth,4,\lbcolor)
    \isusedbetweenup(  lbleaf,redt3,\lbdepth,3,\lbcolor)
    \isusedbetweenup(  lbleaf,redc1,\lbdepth,2,\lbcolor)
    \isusedbetweenup( lbleaf,redc23,\lbdepth,1,\lbcolor)
    \isusedbetweenup(  lbleaf,redc6,\lbdepth,0,\lbcolor)
    \isusedbetweendown(lbleaf,redr2,\lbdepth,0,\lbcolor)
    \isusedbetweendown(lbleaf,redr3,\lbdepth,1,\lbcolor)
    \isusedbetweendown(lbleaf,redr4,\lbdepth,2,\lbcolor)
    \isusedbetweendown(lbleaf,redr5,\lbdepth,3,\lbcolor)

    \isusedbetweenup(lbstar,redc5,\stardepth,1,\starcolor)
    \isusedbetweenup(lbstar,redc6,\stardepth,0,\starcolor)

    \isusedbetweenup(lbpath,redc4,\pathdepth,0,\pathcolor)

    \isusedin(redc4,redc5,5,0,green!60!black)

    \isusedin(redm2,redr1,2,0,green!50!black)
    \isusedin(redm2,redr2,2,1,green!55!black)
    \isusedin(redr1,redr2,6,0,green!55!black)
    \isusedin(redm1,redr3,5,0,green!60!black)
    \isusedin(redm2,redr3,2,2,green!60!black)
    \isusedin(redm1,redr4,5,1,green!65!black)
    \isusedin(redm2,redr4,2,3,green!65!black)
    \isusedin(redr3,redr4,6,0,green!65!black)
    \isusedin(redm1,redr5,5,2,green!70!black)
    \isusedin(redm2,redr5,2,4,green!70!black)
    \isusedin(redr4,redr5,6,0,green!70!black)
\end{tikzpicture}
  \caption[Overview of \Cref{sec:reducing_cactus_graph}]{%
    Overview of \Cref{sec:reducing_cactus_graph}.
    Left boxes represent tools obtained in \Cref{sec:toolbox} (see \Cref{fig:toolbox}) and properties we introduce in \Cref{sec:properties};
    Right box shows structure of \Cref{sec:reducing_cactus_graph};
    Left-to-right arrows show which tools are used for which results.
    Right-to-right (green) arrows show that the reduction is partially based on or uses another reduction.
  }%
  \label{fig:overview}
\end{figure}

\subsection{Properties of Cycle Edges}\label{sec:properties}

We shall assume that the built strategy over the graph holds some properties which allow us to make stronger induction step.
More precisely, these properties shall be necessary to show Reductions~\ref{reduction-c1}, \ref{reduction-c4}, and \ref{reduction-c5}.

\begin{definition}[Edge states]\label{def:edge_states}
  By \emph{edge states of $(u,v)$ in $\Omega$} (where $\{u,v\} \in E(G)$) we mean creating sets $L_{u,v}$, $R_{u,v}$, and $N_{u,v}$ such that
  \begin{align*}
    \alpha \in L_{u,v}& \text{ if } \exists \beta\in\Omega,\; (u,v) \in \TRAN(\alpha,\beta), \\
    \alpha \in R_{u,v}& \text{ if } \exists \beta\in\Omega,\; (v,u) \in \TRAN(\alpha,\beta), \\
    N_{u,v}& = \Omega \setminus (L_{u,v} \cup R_{u,v})
  \end{align*}
\end{definition}

Note that because the orientation of the edge plays a role in these definitions, we have $L_{a,b}=R_{b,a}$, $R_{a,b}=L_{b,a}$, and $N_{a,b}=N_{b,a}$.
The names of the sets reflect from which side a guard can traverse the edge.
Also note, that if we assume symmetry then when moving to and from $N_{u,v}$ the edge $\{u,v\}$ cannot be traversed.
We propose the following edge property which is somewhat similar to \Cref{obs:defending_dominating_set}.

\begin{property}[Proper edge states]\label{prop:nonempty_edge_states}
  For a strategy $\mathcal{B}$ over graph $G$ an edge $\{u,v\}$ holds \emph{\Cref{prop:nonempty_edge_states}} if and only if its edge states $L_{u,v}$, $R_{u,v}$, and $N_{u,v}$ are all non-empty, $L_{u,v} \cap R_{u,v} = \emptyset$, and each of them is a dominating set over $S_G$.
\end{property}

There are several ramifications of an edge having \Cref{prop:nonempty_edge_states}.
Because of $L_{u,v} \cap R_{u,v} = \emptyset$ there is no state where we may choose to move over $(u,v)$ or $(v,u)$, i.e., at most one of these movements is available.
At the same time, as each of these sets is dominating $S_G$, it follows that we may get into any of these sets in one transition.
Last, as each set is non-empty we may force the strategy to forbid to move over $\{u,v\}$ in the current and one future transition by moving to $N_{u,v}$ at any point.
Additionally, we may force a movement over $(u,v)$ by moving first to some $\alpha \in L_{u,v}$ and then to $\beta \in R_{u,v}$ such that $(u,v) \in \TRAN(\alpha,\beta)$ as per \Cref{def:edge_states}.

All the properties that proper edge states additionally have compared to non-proper edge states are true irrespective of permutations of $L_{u,v}$, $R_{u,v}$, and $N_{u,v}$.
Hence, we may use the same sets on different edges by permuting them and checking that they constitute edge states of the new edge.

\begin{observation}\label{obs:edge_states_remapping}
  For edges $\{u,v\}$ and $\{a,b\}$ if we map proper edge states $L_{u,v}$, $R_{u,v}$, and $N_{u,v}$ to new sets $L_{a,b}$, $R_{a,b}$, and $N_{a,b}$ (with possibly permuting them) then these constitute proper edge states of $\{a,b\}$ if and only if they constitute edge states of $\{a,b\}$.
\end{observation}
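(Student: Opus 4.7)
The plan is to unpack Property~\ref{prop:nonempty_edge_states} and observe that the three additional requirements making edge states \emph{proper} (non-emptiness of each of $L,R,N$, disjointness $L \cap R = \emptyset$, and each being a dominating set of $S_G$) are statements purely about the sets of states as subsets of $\Omega$, with no reference to \emph{which} edge they are attached to. Since the map simply reassigns three fixed subsets of $\Omega$ to play the roles of $L_{a,b}, R_{a,b}, N_{a,b}$ for the new edge, these intrinsic set-theoretic properties are preserved automatically.

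The ``only if'' direction is immediate: proper edge states are edge states by Property~\ref{prop:nonempty_edge_states}. For the ``if'' direction, I would assume that the permuted triple already constitutes (possibly non-proper) edge states of $\{a,b\}$ in the sense of Definition~\ref{def:edge_states}, and then verify the three properness requirements in turn. Non-emptiness of $L_{a,b}, R_{a,b}, N_{a,b}$ follows because each is, up to permutation, one of $L_{u,v}, R_{u,v}, N_{u,v}$, all of which are non-empty by assumption. Similarly, each of $L_{a,b}, R_{a,b}, N_{a,b}$ is a dominating set of $S_G$ because this property depends only on the underlying subset of $\Omega$, not on the edge label.

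The one point that needs a sentence of care is the disjointness condition $L_{a,b} \cap R_{a,b} = \emptyset$, since only $L_{u,v} \cap R_{u,v} = \emptyset$ is stated explicitly in Property~\ref{prop:nonempty_edge_states}. Here I would observe that by Definition~\ref{def:edge_states} one has $N_{u,v} = \Omega \setminus (L_{u,v} \cup R_{u,v})$, so $N_{u,v}$ is disjoint from both $L_{u,v}$ and $R_{u,v}$ by construction. Combined with $L_{u,v} \cap R_{u,v} = \emptyset$, all three original sets are pairwise disjoint, hence any two of them that are selected by the permutation to serve as $L_{a,b}$ and $R_{a,b}$ remain disjoint.

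I do not expect a genuine obstacle: the whole content of the observation is that Property~\ref{prop:nonempty_edge_states} is invariant under relabelling because it only mentions $\Omega$ and $S_G$, never the vertices $u,v$. The only thing that could in principle fail when moving to $\{a,b\}$ is Definition~\ref{def:edge_states} itself (whether $L_{a,b}$ and $R_{a,b}$ are compatible with actual transitions across the edge $\{a,b\}$ in $\mathcal{B}$), and the hypothesis of the ``if'' direction builds exactly that assumption in. So the proof is essentially a bookkeeping verification and should occupy only a few lines.
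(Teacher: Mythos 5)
Your proposal is correct and follows essentially the same route as the paper's own proof: the forward direction is trivial, and for the converse one observes that non-emptiness, pairwise disjointness, and domination of $S_G$ are intrinsic properties of the three subsets of $\Omega$ and hence invariant under permutation and relabelling to a new edge. Your explicit note that $N_{u,v} = \Omega \setminus (L_{u,v} \cup R_{u,v})$ forces all three sets to be pairwise disjoint (so that whichever two are assigned the roles of $L_{a,b}$ and $R_{a,b}$ remain disjoint) is a useful spelling-out of a point the paper's proof leaves implicit.
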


\begin{proof}
  If the new edge states $L_{a,b}$, $R_{a,b}$, and $N_{a,b}$ do not constitute edge states then they trivially cannot be proper edge states.
  When $L_{u,v}$, $R_{u,v}$, and $N_{u,v}$ are proper edge states then they are disjoint and nonempty.
  These properties do not depend on their order so as long as the new states are edge states they will be proper.
\end{proof}

Our goal will be to have \Cref{prop:nonempty_edge_states} on all edges that lie on a leaf cycle that are incident to at least one $\white$ or $\connecting$ vertex.
We shall also show that it holds in some special cases to make several reductions easier.

In reductions, we will check that an edge has \Cref{prop:nonempty_edge_states}, however, the intuition about it is as follows.
We need to check whether each cycle edge is traversed at least once and whether it is not traversed at all by at least one state.
Also, it is usually trivial, but we should check that the edge cannot be traversed in both directions from some state.

\begin{property}[Proper labelled strategy]\label{prop:labelled_strategy}
  A partial \LAB $\mathcal{B}$ over a cactus graph $G$ has \Cref{prop:labelled_strategy} if and only if \Cref{prop:nonempty_edge_states} holds for each edge that lie on a cycle and
  \begin{itemize}
    \item is incident to a $\white$ or a $\connecting$ vertex,
    \item or is on a leaf cycle $(\connecting,\red,\red,\connecting)$,
    \item or is incident to a $\connecting$ vertex while not being a edge which lies between $\connecting$ and a $\red$ vertex on leaf cycle $(\connecting,\red,\white,\white,\red,\connecting)$ or $(\connecting,\white,\white,\red,\connecting)$.
  \end{itemize}
\end{property}

Our goal is to keep our cactus graph proper (as per \Cref{prop:labelled_strategy}) in all steps of reducing.
For simplicity, we shall work with reductions as if all edges on cycles which are incident to $\white$ or $\connecting$ vertex have \Cref{prop:nonempty_edge_states} and we shall tackle the exceptions to this rule separately in \Cref{obs:crrc_expansion} and \Cref{lem:rwwr_non_property}.

\subsection{Cycle Reductions}\label{sec:cycle_reductions}

Due to \Cref{lem:tree_reductions_result} we know that applying tree reductions may result in either solving the instance entirely or we obtain a leaf cycle.
In this section, we will tackle leaf cycles with cycle reductions which results in a leaf cycle of constant size.
Constant-sized leaf cycles are then resolved in \Cref{sec:constant_reductions}.

Let $C$ denote a leaf cycle where vertices are labeled with colors according to \Cref{def:vertex_colors}.
Cycle reductions consist of the following reductions (see notation in \Cref{def:concise_reductions}).
E.g., Reduction~\ref{reduction-c1} describes that a graph $G$ with a leaf cycle that contains consecutive vertices $U$ with colors $(\unknown,\pink,\unknown)$ may be changed to $G'$ by substituting $U$ with a vertices of colors $(\unknown,\unknown)$ (so just $\pink$ was removed).
At the same time, it claims that $\EDN(G) \leq \EDN(G')+1$ and $\EGC(G) \geq \EGC(G')+1$.
All of this is concisely written as $(\unknown,\pink,\unknown) \to (\unknown,\unknown) + 1$.

\begin{reduction}\customlabel{reduction-c1}{$c_1$}
  $(\unknown,\pink,\unknown) \to (\unknown,\unknown) + 1$ where $(\unknown,\unknown)$ has \Cref{prop:nonempty_edge_states}.
\end{reduction}
\begin{reduction}\customlabel{reduction-c2}{$c_2$}
  $(\red,\pink,\unknown) \to (\red,\unknown) + 1$
\end{reduction}
\begin{reduction}\customlabel{reduction-c3}{$c_3$}
  $(\red,\red,\unknown) \to (\red,\unknown) + 1$
\end{reduction}
\begin{reduction}\customlabel{reduction-c4}{$c_4$}
  $(\unknown,\white,\white,\white,\unknown) \to (\unknown,\unknown) + 1$ where $(\unknown,\unknown)$ has \Cref{prop:nonempty_edge_states}.
\end{reduction}
\begin{reduction}\customlabel{reduction-c5}{$c_5$}
  $(\unknown,\white,\red,\white,\unknown) \to (\unknown,\unknown) + 2$ where $(\unknown,\unknown)$ has \Cref{prop:nonempty_edge_states}.
\end{reduction}
\begin{restatable}{reduction}{restateredcsix}\customlabel{reduction-c6}{$c_6$}
  $(\connecting,\red,[\white,\red]^{2k},\connecting) \to (\pink) + 3k + 1$
  and
  $(\connecting,\red,[\white,\red]^{2k+1},\connecting) \to (\red) + 3k + 2$
\end{restatable}

Let $a$ and $b$ be the first and the last vertex of the leaf cycle in $G'$, respectively, that are described by the reduction.
It is clear that these reductions may be used in cases where $a$ and $b$ are non-connected disjoint vertices.
We note that the reductions will be used when $\{a,b\} \in E(G')$ though the result contains a pair of multiedges between $a$ and $b$ in $G$.
Moreover, these reductions may be used even in case where $a=b$.
Applying the reduction in such a case results in a loop in $a$ within $G'$.
Though loops and multiedges may be created by the process they will be immediately removed.
These cases will be addressed in \Cref{sec:loops_multiedges}.

Reductions~\ref{reduction-c1}, \ref{reduction-c4}, and \ref{reduction-c5} require the edge that is being expended (edge $\{a,b\}$ in $G'$) holds \Cref{prop:nonempty_edge_states}.
We shall ensure this by keeping \Cref{prop:labelled_strategy} for $G'$ while ensuring that during every expansion this property is preserved.

\begin{table}[tb]
  \caption[List of cycle reductions]{List of the cycle reductions; notation is the same as in \Cref{tab:leaf_reductions}}\label{tab:cycle_reductions}
  \centering
  \begin{tabular}{|c|c|c|}
    \hline
    Reduction & Lower bound & Upper bound \\
    \hline
    \reductioncase{c1}{4}
    \hline
    \reductioncase{c2}{5}
    \hline
    \reductioncase{c3}{6}
    \hline
    \reductioncase{c4}{7}
    \hline
    \reductioncase{c5}{8}
    \hline
    \ref{reduction-c6} & \multicolumn{2}{c|}{See \Cref{fig:reduction-c6-lb,fig:rwc_state_states}} \\
    \hline
  \end{tabular}
\end{table}

\begin{lemma}\label{lem:reduction-c1}
  \standardReductionLemma{c1}{1}
\end{lemma}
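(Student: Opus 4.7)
The plan is to combine a lower bound $\EGC(G) \ge \EGC(G') + 1$ with an upper bound $\EDN(G) \le \EDN(G') + 1$ via \Cref{lem:technique}. Write $v$ for the $\pink$ vertex removed by the reduction, $\ell$ for its leaf, and $a, b$ for the two cycle-neighbours of $v$; these are exactly the endpoints of the replacement edge $\{a,b\}$ in $G'$.

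For the lower bound I would apply \Cref{obs:lb_leaf} to $\{v,\ell\}$: the closed neighbourhood of $\ell$ is $\{v,\ell\}$, and the clique reduction of this pair deletes both vertices while joining their only outside neighbours $a$ and $b$ by a single edge, which produces exactly $G'$. Hence $\EGC(G) \ge \EGC(G') + 1$.

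For the upper bound, I would start from the optimal proper \LAB $\mathcal{B}'$ of $G'$ given by the induction hypothesis. By the hypothesis of the reduction together with \Cref{prop:labelled_strategy}, the edge $\{a,b\}$ in $G'$ satisfies \Cref{prop:nonempty_edge_states}, so the edge-state sets $L_{a,b}, R_{a,b}, N_{a,b}$ are non-empty, pairwise disjoint, and each dominates $S'_{G'}$. First, apply \Cref{def:copy} to build $S_{G'} = \GCPOS{S'_{G'}}{N_{a,b}}{\{\alpha_v, \alpha_\ell\}}$ in the short notation of \Cref{def:short_copy_notation}; this duplicates every state $\gamma' \in N_{a,b}$ into two copies $\gamma^v \in \alpha_v$ and $\gamma^\ell \in \alpha_\ell$ joined by a new edge, while the states in $L_{a,b} \cup R_{a,b}$ are left unchanged. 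Then I would expand to $G$ by introducing one additional guard whose position is dictated by the second coordinate: it sits on $v$ in every $\gamma^v$ and in every unchanged state, and on $\ell$ in every $\gamma^\ell$. Each new $\gamma^v \leftrightarrow \gamma^\ell$ edge is realised by moving this guard across $\{v,\ell\}$ while all other guards stay put, and every original transition containing $(a,b)$ is replaced by the relay $(a,v),(v,b)$, carried out jointly by the extra guard at $v$ moving to $b$ and the old $a$-guard moving into $v$.

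The main verification is that $\mathcal{B}$ is a proper defending \LAB. Defence of every vertex in $G \setminus \{v,\ell\}$ is inherited from $\mathcal{B}'$ by case~\ref{lem:copy_keeps_domination:unchanged} of \Cref{lem:copy_keeps_domination}; the vertex $v$ is occupied in every state except the $\gamma^\ell$ copies, which re-occupy $v$ in one step via the new edge; and $\ell$ is defended because $N_{a,b}$ dominates $S'_{G'}$, so every state of $S_G$ has a neighbour of the form $\gamma^\ell$. The main technical obstacle is checking \Cref{prop:nonempty_edge_states} for the two new cycle edges $\{a,v\}$ and $\{v,b\}$, as demanded by \Cref{prop:labelled_strategy}. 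The construction lets us lift the edge-state decomposition of $\{a,b\}$ essentially verbatim, setting $L_{a,v} := L_{a,b}$, $R_{a,v} := R_{a,b}$, and letting $N_{a,v}$ consist of both copies of every $\gamma' \in N_{a,b}$ (and symmetrically for $\{v,b\}$); \Cref{obs:edge_states_remapping} then confirms these lifted sets remain non-empty, pairwise disjoint, and dominating in $S_G$. This delivers $\EDN(G) \le \EDN(G') + 1$, and \Cref{lem:technique} concludes that $G$ is defended with one more guard than $G'$.
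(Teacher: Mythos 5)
Your proof is correct and follows essentially the same route as the paper: the lower bound via \Cref{obs:lb_leaf} on the pink vertex and its leaf, and the upper bound by exploiting the proper edge states $L_{a,b}$, $R_{a,b}$, $N_{a,b}$ of the replaced edge, relaying $(a,b)$-moves through the new pink vertex and defending the leaf from the $N_{a,b}$ states. The only real difference is that you first duplicate the $N_{a,b}$ states via a Cartesian product over subset before placing the extra guard, whereas the paper simply places the new guard on the leaf in every $N_{a,b}$ state and on the pink vertex in $L_{a,b}\cup R_{a,b}$ without enlarging the strategy graph --- your extra step is harmless but unnecessary, since $L_{a,b}$ already dominates the strategy graph and so the pink vertex is re-occupied in one transition from any $N_{a,b}$ state.
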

\begin{proof}
  Let us label the vertices of colors $(\unknown,\pink,\unknown)$ by $a,u,b$, respectively.
  Let $v$ be the leaf adjacent to $u$.
  By using \Cref{obs:lb_leaf} on vertices $\{u,v\}$ we get lower bound $\EGC(G) \geq \EGC(G')+1$.

  For upper bound, let $L_{a,b}$, $R_{a,b}$, and $N_{a,b}$ be edge states of the edge ${a,b}$ in the strategy of $G'$ obtained as stated in \Cref{def:edge_states}.
  We extend all states of $L_{a,b}$ and $R_{a,b}$ by adding $u$ to them, and we add $v$ to $N_{a,b}$.
  We substitute movements $(a,b)$ with $\{(a,u),(u,b)\}$ in $\TRAN(L_{a,b},R_{a,b})$ and we add $(u,v)$ to $\TRAN(L_{a,b} \cup R_{a,b}, N_{a,b})$.
  The new vertices are defended as $L_{a,b}$ and $N_{a,b}$ are dominating the strategy graph because \Cref{prop:nonempty_edge_states} holds for $\{a,b\}$ in $G'$.
  The edge states for the new edges $\{a,u\}$ and $\{u,b\}$ in $G$ remain the same as for $\{a,b\}$ in $G'$.
  Therefore, these edges now hold \Cref{prop:nonempty_edge_states} in $G$.
  By extending all states with one guard we got a defending \LAB, so $\EDN(G) \leq \EDN(G')+1$.
  By \Cref{lem:technique} we get that $G$ is defended with one more guard than $G'$.
\end{proof}

Reductions~\ref{reduction-c2} and \ref{reduction-c3} merge a group of consecutive red and pink vertices and defend leaves adjacent to them by a group state (\Cref{def:group_state}).

\begin{lemma}\label{lem:reduction-c23}
  Let $G'$ be $G$ after application of Reduction~\ref{reduction-c2} or~\ref{reduction-c3}.
  $G$ is defended with $1$ more guard than $G'$.
\end{lemma}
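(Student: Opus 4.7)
The plan is to handle Reductions~\ref{reduction-c2} and~\ref{reduction-c3} in parallel, establishing in each case a matching $\EGC$-lower bound and $\EDN$-upper bound that differ by exactly one, and combining them with the inductive hypothesis $\EGC(G')=\EDN(G')$ via \Cref{lem:technique} to conclude $\EDN(G)=\EGC(G)=\EDN(G')+1$.

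For the lower bound, I apply \Cref{lem:lower_bound} to a carefully chosen connected induced subgraph $H\subseteq V(G)$: I verify that $H$ always carries at least one guard, and that the clique reduction of $H$ in $G$ recovers exactly $G'$. For Reduction~\ref{reduction-c2}, I take $H=\{p,w\}$ where $p$ is the pink vertex and $w$ its leaf; \Cref{obs:lb_leaf} gives the one-guard bound, and since the only external neighbors of $H$ are the two cycle neighbors $r$ and $b$ of $p$, the clique reduction inserts precisely the edge $\{r,b\}$ that is present in $G'$. For Reduction~\ref{reduction-c3}, I take $H$ to be the removed red vertex $r_1$ together with all its leaves; the closed neighborhood of any leaf of $r_1$ is contained in $H$, so \Cref{obs:lb_leaf} again provides the one-guard bound, and the clique reduction inserts the edge between the two cycle neighbors of $r_1$ that appears in $G'$. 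In both cases \Cref{lem:lower_bound} yields $\EGC(G)\geq\EGC(G')+1$.

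For the upper bound, I start from an optimal defending labelled strategy $\mathcal{B}'$ of $G'$ and use \Cref{lem:always_present_guard} to make the red vertex adjacent to the removed material ($r$ for Reduction~\ref{reduction-c2}, $r_2$ for Reduction~\ref{reduction-c3}) permanently defended, with its leaves handled by a group state (\Cref{def:group_state}) supported on a single floating guard. The idea is then to add exactly one new guard and enlarge the floating guard's orbit so that it also covers the reinserted material. For Reduction~\ref{reduction-c2} the new guard permanently occupies $p$ and drops to $w$ only when $w$ is attacked, in which case a single transition executes three simultaneous moves: the floating guard moves from its current leaf of $r$ into $r$, the permanent $r$-guard moves into $p$, and the extra $p$-guard moves into $w$. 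For Reduction~\ref{reduction-c3} the new guard permanently occupies $r_1$, and an attacked leaf $v_i$ of $r_1$ is reached via the analogous chain that shifts the floating guard from its current leaf of $r_2$ into $r_2$, the permanent $r_2$-guard into $r_1$, and the permanent $r_1$-guard into $v_i$. The enlarged state space is realized via the graph Cartesian product over subset construction of \Cref{def:copy} applied to $S'_{G'}$ following the pattern of \Cref{lem:extend_leaves}, and domination is preserved by \Cref{lem:copy_keeps_domination}, giving $\EDN(G)\leq\EDN(G')+1$.

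The main technical obstacle will be that transitions of $\mathcal{B}'$ traversing the cycle edge $\{r,b\}$ (for Reduction~\ref{reduction-c2}) or the edge joining $r_1$'s two cycle neighbors in $G$ (for Reduction~\ref{reduction-c3}) cannot be copied verbatim into $\mathcal{B}$, since that edge has been subdivided by the reinserted vertex. Such a transition must be rerouted through the newly inserted vertex, and what makes this safe is that the red endpoint is permanently defended: any movement crossing the subdivided edge in $\mathcal{B}'$ is accompanied by a compensating movement refilling the red vertex, so the permanent guard on the inserted vertex can serve as a two-step bridge without leaving the inserted vertex undefended. Once this rerouting is spelled out, combining the lower and upper bound via \Cref{lem:technique} delivers $\EDN(G)=\EGC(G)=\EDN(G')+1$.
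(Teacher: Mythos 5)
Your proposal reproduces the paper's argument essentially step for step: the lower bound uses the same subgraph $H$ (the removed pink/red vertex together with its leaves), the same one-guard bound via \Cref{obs:lb_leaf}, and the same observation that the clique reduction of $H$ recovers exactly $G'$; the upper bound uses the same construction, namely a new permanently stationed guard on the reinserted vertex $u$, the floating group-state guard's orbit extended to the new leaves through the chain of moves $(\text{leaf},a),(a,u),(u,\text{leaf})$, and transitions of $\mathcal{B}'$ that crossed the edge $\{a,b\}$ rerouted through $u$ using the fact that the permanently occupied red vertex is always refilled. Both bounds are then combined through \Cref{lem:technique} exactly as in the paper.

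The one thing you omit is that the conclusion ``$G$ is \emph{defended} with $1$ more guard'' is defined in this paper to include that the resulting strategy is \emph{proper} in the sense of \Cref{prop:labelled_strategy}; this is not decorative, since \Cref{prop:nonempty_edge_states} on cycle edges is consumed by Reductions~\ref{reduction-c1}, \ref{reduction-c4}, and~\ref{reduction-c5} later in the induction. Concretely, you must check that the new cycle edge $\{u,b\}$ satisfies \Cref{prop:nonempty_edge_states} whenever required (e.g.\ when $b$ is a $\white$ or $\connecting$ vertex). The paper dispatches this in one sentence by letting $\{u,b\}$ inherit exactly the edge states $L$, $R$, $N$ of $\{a,b\}$, which works precisely because of the rerouting you describe; your construction supports the same argument, but you should state it explicitly, since without it the inductive invariant is not maintained.
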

\begin{proof}
  The reductions are separate for the sake of future argument but they are proven in the same way.
  Let us label the vertices of colors $(\red,\{\pink,\red\},\unknown)$ by $a,u,b$, respectively.
  Let $R_1$ denote all leaves adjacent to $u$.
  By applying \Cref{obs:lb_leaf} on $R_1 \cup \{u\}$ we get lower bound $\EGC(G) \geq \EGC(G')+1$.

  For upper bound, let $\gamma'$ be the group state for leaves adjacent to $a$.
  We add to leaves $R_1$ two new vertices which are defended by $\gamma'$ as pointed out in \Cref{lem:extend_leaves}.
  Now we split $u$ from $a$, taking its leaves with it that we now label by $R_2$.
  Transitions between leaves is extended to $\TRAN(R_1,R_2) = \{(R_1,a),(a,u),(u,R_2)\}$ and similarly, we extend all transitions which used $a$.
  The transitions that interacted with $a$ and $b$ are preserved, so the reduction expands interface equivalent partial \LABS.
  Though we did not need \Cref{prop:nonempty_edge_states} the graph still has \Cref{prop:labelled_strategy} because the new edge $\{u,b\}$ takes on exact transitions that $\{a,b\}$ had.
  So if $\{a,b\}$ held the property in $G'$, then $\{b,u\}$ holds it in $G$.

  We added one guard so $\EDN(G) \leq \EDN(G')+1$ and by \Cref{lem:technique} we get that $G$ is defended with one more guard than $G'$.
\end{proof}

\begin{lemma}\label{lem:reduction-c4}
  \standardReductionLemma{c4}{1}
\end{lemma}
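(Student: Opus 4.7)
The plan is to prove the lemma in two halves, lower bound and upper bound, and then combine them via \Cref{lem:technique}. Write the three consecutive white vertices between $a$ and $b$ on the leaf cycle as $u_1, u_2, u_3$. Because white vertices carry no leaves and their only external neighbours on the cycle are $a$ and $b$, the induced subgraph $G[\{u_1, u_2, u_3\}]$ is a pure $P_3$. Applying \Cref{obs:lb_path} to this $P_3$ produces exactly the clique reduction that collapses $(u_1, u_2, u_3)$ into the edge $\{a, b\}$, giving $\EGC(G) \ge \EGC(G') + 1$ for free.

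For the upper bound $\EDN(G) \le \EDN(G') + 1$, I start from an optimal defending \LAB $\mathcal{B}'$ on $G'$ and use the hypothesis that edge $\{a, b\}$ in $\mathcal{B}'$ satisfies \Cref{prop:nonempty_edge_states}. This gives a partition of $\Omega'$ into three non-empty dominating sets $L_{a,b}, R_{a,b}, N_{a,b}$. I would then add one extra guard whose position along the path $a\text{--}u_1\text{--}u_2\text{--}u_3\text{--}b$ depends on the partition class: at $u_3$ for states in $L_{a,b}$, at $u_1$ for states in $R_{a,b}$, and at $u_2$ for states in $N_{a,b}$. With these placements the three new vertices are dominated in every state: in $L_{a,b}$, vertex $u_1$ by the guard at $a$ and $u_2, u_3$ by the extra guard; in $R_{a,b}$ symmetrically through $b$; and in $N_{a,b}$ all three by the extra guard sitting at $u_2$.

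Transitions are defined so that the extra guard moves at most one edge per round. Transitions internal to a single part of the partition keep the extra guard stationary and inherit $\TRAN'$ unchanged. A transition $\alpha' \to \beta'$ that crosses from $L_{a,b}$ to $R_{a,b}$ carrying the move $(a, b)$ of $\mathcal{B}'$ is replaced in $G$ by the pair $(a, u_1)$ and $(u_3, b)$, reproducing the ``buffer'' shuffle used in \Cref{lem:reduction-c1}: the guard from $a$ lands on $u_1$ and becomes the extra guard of $\beta^{(R)}$, while the extra guard at $u_3$ completes the transit by moving to $b$. Transitions between an $\{L, R\}$-state and an $N$-state only require the extra guard to slide a single edge ($u_3 \leftrightarrow u_2$ or $u_1 \leftrightarrow u_2$), which is a legal single-round move. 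The resulting partial labelled substrategy on $G[\{a, u_1, u_2, u_3, b\}]$ is interface equivalent (\Cref{def:interface_equivalent}) to the partial substrategy of $\mathcal{B}'$ on $G'[\{a, b\}]$, so an expansion (\Cref{def:expansion}) composed with the rest of $\mathcal{B}'$ yields a defending \LAB $\mathcal{B}$ of $G$, and \Cref{lem:equivalency_constant} with \Cref{obs:upper_bound} certifies that exactly one extra guard is used.

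The main obstacle I anticipate is twofold. First, I must rule out or re-route pathological transitions $L_{a,b} \to R_{a,b}$ that do not use the move $(a, b)$, since such transitions would require the extra guard to jump from $u_3$ to $u_1$ in one round, which is impossible on the $P_3$; I expect to handle this either by replacing such transitions by two-step detours through $N_{a,b}$ (which is dominating, hence always reachable in one transition) or by refining the construction with a Cartesian product over subset (\Cref{def:copy}) to split states as needed. Second, I must verify that each of the four new cycle edges $\{a, u_1\}, \{u_1, u_2\}, \{u_2, u_3\}, \{u_3, b\}$ satisfies \Cref{prop:nonempty_edge_states} in $\mathcal{B}$, since the reduced edge was white-incident and the replacement edges are all incident to $\white$ vertices. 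Here I plan to exhibit explicit edge-state sets obtained by permuting $L_{a,b}, R_{a,b}, N_{a,b}$ through the path and appeal to \Cref{obs:edge_states_remapping} to transfer properness. Once both points are settled, combining the matching lower and upper bounds through \Cref{lem:technique} closes the lemma with the $+1$ guard count.
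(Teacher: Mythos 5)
Your construction coincides with the paper's own proof of this lemma: the lower bound via \Cref{obs:lb_path} on the three white vertices, the placement of the extra guard on the $b$-side, $a$-side, and middle vertex for states of $L_{a,b}$, $R_{a,b}$, and $N_{a,b}$ respectively, the replacement of each $(a,b)$ move by the pair $(a,u_1)$, $(u_3,b)$ together with the single-edge slides into the $N$-states, and the transfer of \Cref{prop:nonempty_edge_states} to the new cycle edges by permuting the edge-state sets and invoking \Cref{obs:edge_states_remapping} are exactly the steps the paper takes. The one obstacle you flag --- a transition from an $L_{a,b}$-state to an $R_{a,b}$-state that does not actually carry the move $(a,b)$ --- is passed over silently in the paper's proof as well, so your proposal is at least as complete as the published argument.
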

\begin{proof}
  Let us label the vertices of colors $(\unknown,\white,\white,\white,\unknown)$ by $a,c,u,d,b$, respectively.
  Using \Cref{obs:lb_path} on $\{c,u,d\}$ we get lower bound $\EGC(G) \geq \EGC(G')+1$.

  For upper bound, let $L'_{a,b}$, $R'_{a,b}$, and $N'_{a,b}$ be edge states of the edge ${a,b}$ in the strategy of $G'$ obtained from \Cref{def:edge_states}.
  These are proper edge states as $\{a,b\}$ holds \Cref{prop:nonempty_edge_states} in $G'$.
  We extend the states by adding $d$ to all states of $L'_{a,b}$, $c$ to $R'_{a,b}$, and $u$ to $N'_{a,b}$; this creates sets $L_{a,b}$, $R_{a,b}$, and $N_{a,b}$.
  We substitute movements along $(a,b)$ with $\{(a,c),(d,b)\}$ in $\TRAN(L_{a,b},R_{a,b})$, hence, the exchanged parts of the graph are interface equivalent.
  We add $(c,u)$ to $\TRAN(R_{a,b}, N_{a,b})$ and $(d,u)$ to $\TRAN(L_{a,b}, N_{a,b})$.
  The new $\{c,u,d\}$ vertices are defended by the nonempty sets $R_{a,b}$, $N_{a,b}$, and $L_{a,b}$, respectively.
  The edge states for the new edges are as follows.
  \begin{align}\label{eq:c4_new_states}
    L_{a,b} &= L_{a,c} = L_{d,b} = N_{c,u} = L_{d,u} \nonumber \\
    R_{a,b} &= R_{a,c} = R_{d,b} = L_{c,u} = N_{d,u} \\
    N_{a,b} &= N_{a,c} = N_{d,b} = R_{c,u} = R_{d,u} \nonumber
  \end{align}
  As these are only permutations of the edge sets by \Cref{obs:edge_states_remapping} they hold \Cref{prop:nonempty_edge_states}.
  We get $\EDN(G) \leq \EDN(G')+1$.
  By \Cref{lem:technique} we get that $G$ is defended with one more guard than $G'$.
\end{proof}

\begin{lemma}\label{lem:reduction-c5}
  \standardReductionLemma{c5}{2}
\end{lemma}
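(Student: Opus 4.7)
Label the vertices of the sequence $(\unknown,\white,\red,\white,\unknown)$ as $a,c,u,d,b$ and let $v_1,\dots,v_\ell$ with $\ell\ge 2$ be the leaves adjacent to $u$. For the lower bound I would apply \Cref{obs:lb_star} to the vertex $u$: in every defending configuration, the closed neighborhood $N[u]=\{u,c,d,v_1,\dots,v_\ell\}$ contains at least two guards. Since the only neighbors of $N[u]$ outside $N[u]$ are $a$ and $b$, the clique reduction of $N[u]$ given by \Cref{lem:lower_bound} produces exactly the graph $G'$ with the edge $\{a,b\}$, yielding $\EGC(G)\ge \EGC(G')+2$.

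The upper bound follows by cascading two earlier tools. Starting from an optimal \LAB $\mathcal{B}'$ of $G'$ whose edge $\{a,b\}$ satisfies \Cref{prop:nonempty_edge_states}, I would first apply the construction of \Cref{lem:reduction-c4} to replace the edge $\{a,b\}$ by the length-four path through three new white vertices $c,u,d$. This yields a \LAB $\mathcal{B}''$ on the intermediate graph $G''=G\setminus\{v_1,\dots,v_\ell\}$ using $\EDN(G')+1$ guards, in which $u$ is occupied in precisely the states of $N_{a,b}$. I would then invoke \Cref{lem:extend_leaves} on vertex $u$ of $\mathcal{B}''$ to attach all $\ell$ leaves at the cost of one additional guard, producing a defending \LAB on $G$ with $\EDN(G')+2$ guards. \Cref{lem:technique} then combines the two matching bounds into the claimed equality.

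The remaining obligation is to verify that \Cref{prop:nonempty_edge_states} is preserved for the four new cycle edges $\{a,c\},\{c,u\},\{u,d\},\{d,b\}$, since subsequent reductions rely on it. After the \Cref{lem:reduction-c4} step each of these edges already has an edge-state triple that is a permutation of $(L'_{a,b},R'_{a,b},N'_{a,b})$, and all three members are non-empty, pairwise disjoint, and dominating in $\mathcal{B}''$. The leaf-addition step then replaces every state of $N_{a,b}$ by $\ell$ copies through the product $\GCPOS{S''_{G''}}{\vertexStates''(u)}{K_\ell}$, while leaving $L_{a,b}$ and $R_{a,b}$ intact. A short check confirms that the union of the $\ell$ copies is still a dominating set of the new strategy graph: each $L$- and $R$-state keeps its edges to every copy via the $\emptyset$-coordinate rule of \Cref{def:copy}, and copies of the same underlying state are mutually adjacent via the $K_\ell$-edge. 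By \Cref{obs:edge_states_remapping} the four new edges therefore still satisfy \Cref{prop:nonempty_edge_states}. I expect this preservation check to be the only subtle step, and it is a structural consequence of the Cartesian-product-over-subset construction rather than anything intrinsic to leaf cycles.
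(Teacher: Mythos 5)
Your proposal is correct and follows essentially the same route as the paper: the lower bound via \Cref{obs:lb_star} on $N[u]$, and the upper bound by repeating the Reduction~\ref{reduction-c4} construction and then attaching the leaves of $u$ with \Cref{lem:extend_leaves} at the cost of one further guard, with \Cref{obs:edge_states_remapping} preserving \Cref{prop:nonempty_edge_states}. Your explicit verification that the $K_\ell$-product over $\vertexStates''(u)=N_{a,b}$ keeps the edge-state sets dominating is more detailed than the paper's one-line remark, but it is the same argument.
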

\begin{proof}
  The proof goes very similarly as the proof of \Cref{lem:reduction-c4}, but all states $N_{a,b}$ shall group defend leaves adjacent to $u$ while $u$ will be permanently occupied.

  We label the vertices of colors $(\unknown,\white,\red,\white,\unknown)$ by $a,c,u,d,b$, respectively.
  Let $R$ be the leaves neighboring $u$.
  Using \Cref{obs:lb_star} on $u$ and its neighborhood we get lower bound $\EGC(G) \geq \EGC(G')+2$.

  Repeat the same sequence of steps as in the proof of \Cref{lem:reduction-c4} which uses one guard and then add leaves adjacent to $u$ by \Cref{lem:extend_leaves} using one extra guard.
  This does not change transitions over the edges which are not incident to the leaves so by \Cref{obs:edge_states_remapping} they still hold \Cref{prop:nonempty_edge_states}.
  We get $\EDN(G) \leq \EDN(G')+2$.
  By \Cref{lem:technique} we get that $G$ is defended with two more guards than $G'$.
\end{proof}

We remark that using reductions \ref{reduction-t1}, \ref{reduction-t2}, \ref{reduction-t3}, \ref{reduction-c1}, \ref{reduction-c4}, and a small set of constant component reductions is sufficient to solve so-called Christmas cactus graphs (graphs where each edge is in at most one cycle and each vertex is in at most two $2$-connected components) for which the optimal strategy we presented in~\cite{Blazej2019EDN}.
The remaining reductions tackle vertices of color $\red$, which are not present in the class of Christmas cactus graphs.

The last cycle reduction is a curious special case, let us recall it first.

\restateredcsix*

We shall use all the other cycle reductions first and if none of them can be used, then we use Reduction~\ref{reduction-c6}.
This allows us to assume a particular structure which we define and prove now.
The reason behind this structure may also be well understood from decision diagram of reduction application in \Cref{fig:case_study}.

\begin{definition}[RW-cycle]\label{def:rw_cycle}
  A leaf cycle $C$ is a \emph{RW-cycle} if it consists of vertices with alternating $\red$ and $\white$ colors such that the first and last is $\red$, i.e., $C = (\connecting,\red,\white,\red,\white,\dots,\red,\white,\red,\connecting)$.
\end{definition}

\begin{lemma}\label{lem:exhaustive_application}
  Assume a leaf cycle $C$ where Reductions~\ref{reduction-c1}, \ref{reduction-c2}, \ref{reduction-c3}, \ref{reduction-c4}, and \ref{reduction-c5} cannot be applied anywhere.
  Then, $|C| \leq 6$ or $C$ is a \emph{RW-cycle}.
\end{lemma}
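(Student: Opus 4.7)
The plan is to prove the contrapositive: assuming none of Reductions~\ref{reduction-c1}--\ref{reduction-c5} applies to $C$ and $|C| \ge 7$, I will show that $C$ must be an RW-cycle. The key structural observation used throughout is that every cycle appearing as an exception in Property~\ref{prop:labelled_strategy} has length at most~$5$; for $|C| \ge 7$ the reduced cycle obtained after a single application of Reduction~\ref{reduction-c1}, \ref{reduction-c4}, or~\ref{reduction-c5} has length at least~$4$, and in the few remaining problematic configurations where the property hypothesis might fail on the newly created edge, a different reduction from the list turns out to apply instead.

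First I eliminate $\pink$ vertices. By Reduction~\ref{reduction-c2} no $\pink$ is adjacent to a $\red$, so every $\pink$ has both its neighbors in $\{\white,\pink,\connecting\}$. Take any maximal run of consecutive $\pink$s and let $p$ be one of its endpoints; then $p$ has a non-$\pink$ neighbor in $\{\white,\connecting\}$, so applying Reduction~\ref{reduction-c1} at $p$ produces a new edge incident to a $\white$ or $\connecting$ vertex, which automatically satisfies Property~\ref{prop:nonempty_edge_states}. This contradicts the assumption that no cycle reduction applies, so $C$ contains no $\pink$.

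Next, using Reduction~\ref{reduction-c3} (no two adjacent $\red$s), the cycle has colors in $\{\white,\red,\connecting\}$ with no adjacent $\red$s. I then argue by case analysis that the two cycle-neighbors $u,v$ of the connecting vertex must both be $\red$. Suppose $u=\white$; tracking a few further vertices and invoking Reduction~\ref{reduction-c4} whenever three consecutive $\white$s appear with a non-$\red$ end (so the resulting edge is $\white$- or $\connecting$-incident and has the property), or Reduction~\ref{reduction-c5} whenever a $\red$ appears between two $\white$s whose outer $\unknown$ on one side is non-$\red$, forces either a reduction to apply or the cycle to close within at most six vertices, contradicting $|C|\ge 7$. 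Hence $u$, and symmetrically $v$, is $\red$.

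Finally, I propagate the alternation outward from $\connecting$. Setting $v_0=\connecting$ and $v_1=\red$, the vertex $v_2$ must be $\white$ (not $\red$ by Reduction~\ref{reduction-c3}, not $\pink$, not $\connecting$ since $|C|\ge 7$). Then $v_3$ cannot be $\connecting$ (would give a short cycle), and $v_3=\white$ leads via a short case analysis on $v_4,v_5$ to an applicable Reduction~\ref{reduction-c4} or~\ref{reduction-c5}, or to $|C|\le 6$; so $v_3=\red$. Iterating this single step around the cycle yields $v_{2i}=\white$ and $v_{2i+1}=\red$ until one reaches the other $\red$ neighbor of $\connecting$, producing precisely the RW-cycle structure. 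The main obstacle throughout is to check that the occasional failure of Property~\ref{prop:nonempty_edge_states} on a resulting edge never blocks all reductions simultaneously; the uniform observation that exceptions are short, combined with the overlap of the patterns targeted by Reductions~\ref{reduction-c4} and~\ref{reduction-c5}, guarantees that some applicable reduction can always be found in any non-RW cycle of length at least seven.
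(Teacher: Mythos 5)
Your proof is correct and follows essentially the same route as the paper: eliminate $\pink$ vertices via Reductions~\ref{reduction-c1} and~\ref{reduction-c2}, eliminate adjacent $\red$ vertices via Reduction~\ref{reduction-c3}, and then use Reductions~\ref{reduction-c4} and~\ref{reduction-c5} to show that any surviving run of adjacent $\white$ vertices forces the cycle to be short. Your endgame is organized slightly differently (you fix the colors of the neighbors of the $\connecting$ vertex and propagate the alternation outward, whereas the paper bounds the distance from any adjacent pair of $\white$ vertices to the $\connecting$ vertex), and you are in fact somewhat more explicit than the paper about checking that the edge produced by each invoked reduction is incident to a $\white$ or $\connecting$ vertex and about why both neighbors of $\connecting$ must be $\red$ --- a point the paper's proof leaves implicit in its final sentence.
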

\begin{proof}
  First, note that if the leaf cycle contains $\pink$ vertices, then there always is a $\pink$ vertex which neighbors $\white$ or $\connecting$, in that case we use Reduction~\ref{reduction-c1}, or it neighbors $\red$, we use Reduction~\ref{reduction-c2}.
  Hence, all $\pink$ vertices are removed if Reductions~\ref{reduction-c1} and~\ref{reduction-c2} were exhaustively used.

  Next, as only $\red$, $\white$, and $\connecting$ vertices remain, exhaustively using Reduction~\ref{reduction-c3} ensures that there are no two adjacent $\red$ vertices.
  (Note that if only red vertices remained, than we end up with $(\connecting,\red,\connecting)$ which removes the multiedge by Reduction~\ref{reduction-m2} and then uses Reduction~\ref{reduction-t3}.)

  Last, if there is a $(\red,\white,\white)$ part of a leaf cycle then the $\red$ vertex is either also adjacent to $\connecting$ or Reduction~\ref{reduction-c5} can be used as would $(\unknown,\white,\red,\white,\white)$ necessarily occur.
  Exhaustively using Reduction~\ref{reduction-c4} ensures that such cases do not occur.
  So whenever there are two $\white$ adjacent vertices the $\connecting$ vertex is at distance at most $2$ from them.
  This means that either the cycle has at most $5$ vertices or the vertices of colors $\red$ and $\white$ alternate and constitute a RW-cycle.
\end{proof}

The lower bound for an RW-cycle will not be much harder than for other reductions, however, for the upper bound we will need a strategy made just right for such a cycle.

\begin{lemma}\label{lem:reduction-c6}
  Reduction~\ref{reduction-c6} is correct.
\end{lemma}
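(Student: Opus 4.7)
The plan is to prove Reduction~\ref{reduction-c6} by pairing matching lower and upper bounds and then invoking Lemma~\ref{lem:technique}. Denote the RW-cycle in order as $c, r_1, w_1, r_2, w_2, \ldots, w_{n-1}, r_n$, so $n = 2k+1$ in case~1 and $n = 2k+2$ in case~2, and let $K$ be the target difference $3k+1$ or $3k+2$ respectively.

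For the lower bound, I would take $H$ to be the subgraph induced on the cycle minus $c$ together with all leaves adjacent to red vertices, and feed Lemma~\ref{lem:ink} with an attack schedule of length $K$ inside $H$. The natural schedule walks outward from $r_1$, alternating between the two leaves of each red as it passes, with the terminal attacks depending on parity. Every vertex outside $H$ reaches $H$ only through $c$, which is far from the interior reds, so the distance conditions of Lemma~\ref{lem:ink} are routine, and the length of the schedule works out to $K$ with one fewer attack in case~1 than in case~2 because the walk around $c$ closes cleanly after an odd number of reds. A clique reduction (Definition~\ref{def:clique_reduction}) then collapses $H$ to a single neighbour of $c$ with $1$ or $2$ leaves (matching the residual colour $\pink$ or $\red$), and Lemma~\ref{lem:lower_bound} gives $\EGC(G) \geq \EGC(G') + K$.

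For the upper bound, I would start from an optimal proper labelled strategy $\mathcal{B}'$ of $G'$. In case~2 normalize via Lemma~\ref{lem:always_present_guard} so that $c$ is permanently occupied, and in both cases express the defense of $c$'s leaves as a group state using Lemma~\ref{lem:extend_leaves}. The cycle-local strategy I would build keeps every red permanently occupied and places $k$ guards on alternating whites, for a total added budget of $n+k = K$; transitions cyclically shift the white guards one step along the cycle, drawing on the group state at $c$ to refill the red that is vacated at the wrap. The combined strategy is then $\GCPOS{S'_{G'}}{\vertexStates(c)}{S_{\mathrm{rot}}}$ for a small rotation graph $S_{\mathrm{rot}}$, and I would conclude with Definition~\ref{def:expansion} and Observation~\ref{obs:upper_bound}, using Lemma~\ref{lem:transfered_compatibility} to discharge interface compatibility and Lemma~\ref{lem:equivalency_constant} to certify the exact guard-count increment.

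The hard part will be designing $S_{\mathrm{rot}}$ so that attacks on every white, red, and leaf are absorbed within the tight budget $n+k$ while the induced partial labelled substrategy on the cycle is interface equivalent (Definition~\ref{def:interface_equivalent}) to the corresponding local piece of $\mathcal{B}'$ at $c$. The parity split is the crux: with an odd number of reds the rotation wraps using a single residual leaf at $c$, whereas an even number of reds demands two residual leaves at $c$ to absorb the phase, and this is precisely why case~1 leaves $\pink$ and case~2 leaves $\red$. A secondary care point is that the connecting vertex in $G$ loses the proper-edge-state guarantee on its cycle edges; this matches the exception carved out in Property~\ref{prop:labelled_strategy} for $\connecting$-incident edges on RW-like patterns, so no extra bookkeeping is needed to preserve the induction invariant.
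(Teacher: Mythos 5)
There is a genuine gap in your lower bound. You propose to certify all $K = 3k+1$ (or $3k+2$) guards at once by applying \Cref{lem:ink} to the whole cycle-minus-$\connecting$ subgraph $H$ with a single attack schedule of length $K$. But the hypotheses of \Cref{lem:ink} require $d(c, v_i) > i$ for every $i$ (since the connecting vertex $c$ lies outside $H$), and the eccentricity of $c$ within the leaf cycle is only about $2k+2$, while your schedule must reach $i = 3k+1$. For every $k \ge 1$ there is no vertex of $H$ at distance greater than $3k+1$ from $c$, so no admissible schedule of length $K$ exists and the lemma cannot be invoked. (Indeed, the stronger claim it would prove --- that $K$ guards sit on $H$ in \emph{every} configuration --- is not what is needed and is not what the paper establishes.) The paper instead iterates: it applies \Cref{obs:lb_leaf} to one red vertex and \Cref{obs:lb_star} to the next, each requiring only a length-$1$ or length-$2$ ink schedule, performs the clique reduction, and repeats on the shortened cycle, gaining $3$ guards per $4$ removed vertices; the residual constant cycle of size $2$ or $4$ contributes the final $+1$ or $+2$ via Reductions~\ref{reduction-m2}, \ref{reduction-t3}, and~\ref{reduction-r4}. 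Your argument needs to be restructured in this peel-and-reduce form.

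The upper bound sketch also does not work as stated. You keep every $\red$ vertex permanently occupied and rotate $k$ guards on alternating $\white$ vertices; but the only vertex adjacent to the leaves of a red vertex $u$ is $u$ itself, so an attack on one of those leaves forces the guard on $u$ to leave $u$, contradicting permanence, and no second guard in $N[u]$ is available to restore it within your budget. The paper's construction is quite different: its backbone states $\alpha_1,\alpha_2$ occupy alternating \emph{odd} (white/connecting) positions with no red occupied, the leaf sets $R_{2i}$ are group-defended by dedicated states $\beta_{2i}$, and transitions shift guards along contiguous arcs of the cycle (the $F_i$/$B_i$ moves), with the parity split between $n=4k+2$ and $n=4k$ governing whether $u_1$ is occupied in $\alpha_1$ and hence whether the residue is $\pink$ or $\red$. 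Your parity intuition about the residual colour is in the right direction, but the configuration family and the verification of \Cref{prop:nonempty_edge_states} on the new cycle edges --- which the paper must and does carry out explicitly --- are missing.
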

\begin{proof}
  Let us label the vertices along the cycle as $u_1, u_2, \dots, u_n$ with $u_1$ being the connecting vertex (so that $\red$ vertices are even).
  Let $u_{n+1}=u_1$.
  Let $R_2$, $R_4$, etc. be the leaves adjacent to the red vertices $u_2$, $u_4$, and so on.

  First, we use \Cref{obs:lb_leaf} on $\{u_6\} \cup R_6$; second, we use \Cref{obs:lb_star} on $N[u_4]$.
  We shortened the cycle by $4$ and got lower bound of $3$.
  By repeating the argument $k$ times we end up with a cycle $G'$ of constant size $2$ or $4$.
  The cycle of size $2$ gets reduced by \ref{reduction-m2} and then \ref{reduction-t3} which results in a lower bound of $1$.
  The RW-cycle of size $4$ has form $(\connecting,\red,\white,\red,\connecting)$ and its lower bound is shown in Reduction~\ref{reduction-r4} to be $2$.
  Putting the $3k$ for every $4$ vertices together with $1$ and $2$ lower bound for respective sizes of the cycle, we get the desired lower bounds.
  See \Cref{fig:reduction-c6-lb} for an illustration of shortening the cycle by $4$.
  \begin{figure}[h]
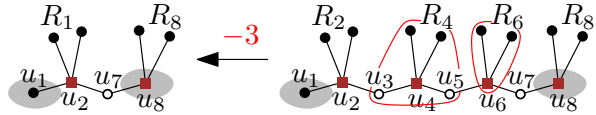

    \centering
    \reductionimage{lower}{9}
    \caption{Part of the lower bound proof for Reduction~\ref{reduction-c6}}%
    \label{fig:reduction-c6-lb}
  \end{figure}

  Strategy for the upper bound is quite tricky to describe so let us define a few new notions just for its description.
  Let \emph{red-parity} of an even number $i$ be the parity of $i/2$.
  This divides red vertices of the RW-cycle into red-odd and red-even, based on the red-parity.
  Let \emph{reverse labeling} be the labeling of the RW-cycle in opposite ordering, i.e., if $u'_1=u_{n+1}$, $u'_2=u_n$, $u'_3=u_{n-1}$, \dots, $u'_n=u_2$, and $u'_{n+1}=u_1$, then $u'_1,\dots,u'_n$ is the reverse labeling with respect to labeling $u_1,\dots,u_n$.

  For the upper bound, we distinguish two cases depending on the size of the RW-cycle.
  First case is that the size is $n = 4k + 2$, the second has size $n = 4k$.

  We now focus on the first case, where the RW-cycle has size $n = 4k + 2$.
  Note that in RW-cycle of this size reverse labeling does not change the red-parity of red vertices.
  We alter the strategy by gradually expanding the states as follows.
  \[
    S^*_{G'} = \GCPOS{S'_{G'}}{\mathcal{S}'(u_1)}\{\alpha_1, \alpha_2, \beta_4, \beta_8, \dots, \beta_{4k}\}
    \text{~~and~~}
    S_{G'} = \GCPOS{S^*_{G'}}{\Omega \setminus \mathcal{S}'(u_1)}\{\beta_2, \beta_6, \dots, \beta_{4k+2}\}
  \]
  Now we perform expansion from $G'$ to $G$ and set the states $\Omega$ of $S_G$ as follows.
  \begin{align}
    P(\alpha_1) &= \bigcup_{x=0}^{k} \{ u_{4x+3} \}
    \qquad
    P(\alpha_2) = \bigcup_{x=0}^{k} \{ u_{4x+1} \} \nonumber
    \\
    P(\beta_{4x}) &= \big((P(\alpha_2) \cap \{u_j\}_{j<i}) \cup (P(\alpha_1) \cap \{u_j\}_{j>i})\big) \cup \{R_{4x}\} \label{eq:reduction_c6_states_definition}
    \\
    P(\beta_{4x+2}) &= \big((P(\alpha_1) \cap \{u_j\}_{j<i}) \cup (P(\alpha_2) \cap \{u_j\}_{j>i})\big) \cup \{R_{4x+2}\} \setminus \{u_1\} \nonumber
  \end{align}
  Notice that $u_1 \in P(\alpha_2)$ as $u_{4x+1}=u_1$ for $x=0$ and $u_1 \in P(\alpha_1)$ as $u_{4x+3}=u_{4k+3}=u_{n+1}=u_1$ for $x=k$.
  For $P(\beta_{4x+2})$ the intersections imply that $u_1$ is not contained, but we mention it explicitly for clarity (as we do not consider $u_{n+1}$ to be $u_j$ for $j < i$ even though it equals $u_1$).
  The state $\beta_{2i}$ group defends leaves adjacent to $u_{2i}$.
  Note that they behave differently based on their their red-parity.

  Now for the transitions.
  To make the notation concise let us shorten consecutive movements through red vertices as $F_i = \{(u_i, u_{i+1}), (u_{i+1}, u_{i+2})\}$ and $B_i = \{(u_i, u_{i-1}), (u_{i-1}, u_{i-2})\}$ (as forward and backward).
  Note that we use $F_i$ and $B_i$ only for odd values of $i$.

  Let us have integers $x$ and $y$ and assume, without loss of generality, that $x \leq y$.
  We set the movements of transitions as follows.
  If $2x$ and $2y$ have the same red-parity, then
  \begin{equation}\label{eq:same-red-parity}
    \TRAN(\beta_{2x}, \beta_{2y}) = \{(R_{2x}, u_{2x}), (u_{2x}, u_{2x+1})\} \cup
    \bigcup_{i = x}^{y-4} F_{2i+3}
    \cup \{(u_{2y-1}, u_{2y}), (u_{2y}, R_{2y})\}.
  \end{equation}
  Otherwise, $2x$ and $2y$ have different red-parity.
  If $x$ is odd (so $y$ is even), then their transition is defined as follows.
  \[
    \TRAN(\beta_{2x}, \beta_{2y}) = \{(R_{2x}, u_{2x}), (u_{2x}, u_{2x-1})\} \cup
    \bigcup_{i = 2}^x B_{2i-3} \cup
    \!\bigcup_{i=y+1}^{2k-1}\!\!\! B_{2i+3} \cup
    \{(u_{2y+1}, u_{2y}), (u_{2y}, R_{2y})\}
  \]
  If the red-parity is different and $x$ is even, then in reverse labeling and swapping $x$ with $y$ we end up in the case where the red-parity is still different, but $x$ is odd.
  This case was already solved.
  The other direction of these transitions is filled in by symmetry (\Cref{prop:strategy_symmetry}).

  It remains to describe transitions with $\alpha_1$ and $\alpha_2$.
  \begin{align}
    \TRAN(\alpha_1, \alpha_2) &= \{(u_1, u_1)\} \cup \bigcup_{i=0}^{k-1} F_{4i+3} \nonumber
    \\
    \TRAN(\beta_{4x}, \alpha_1) &= \{(R_{4x}, u_{4x}), (u_{4x}, u_{4x-1})\} \cup \bigcup_{i=1}^{x-1} B_{4i+1} \label{eq:even-to-alpha-1}
    \\
    \TRAN(\beta_{4x+2}, \alpha_2) &= \{(R_{4x+2}, u_{4x+2}), (u_{4x+2}, u_{4x+1})\} \cup \bigcup_{i=1}^{x} B_{4i-1} \label{eq:odd-to-alpha-2}
  \end{align}
  Note that $\alpha_1$ is $\alpha_2$ in reverse labeling.
  Hence, the case $\TRAN(\beta_{4x}, \alpha_2)$ is equivalent to $\TRAN(\beta_{4x}, \alpha_1)$ in reverse labeling.
  Similarly, the case $\TRAN(\beta_{4x+2}, \alpha_1)$ is equivalent to $\TRAN(\beta_{4x+2}, \alpha_2)$ in reverse labeling.
  See a part of this strategy on \Cref{fig:rwc_state_states}.

  \begin{figure}[h]
    \centering
    \includegraphics[page=18]{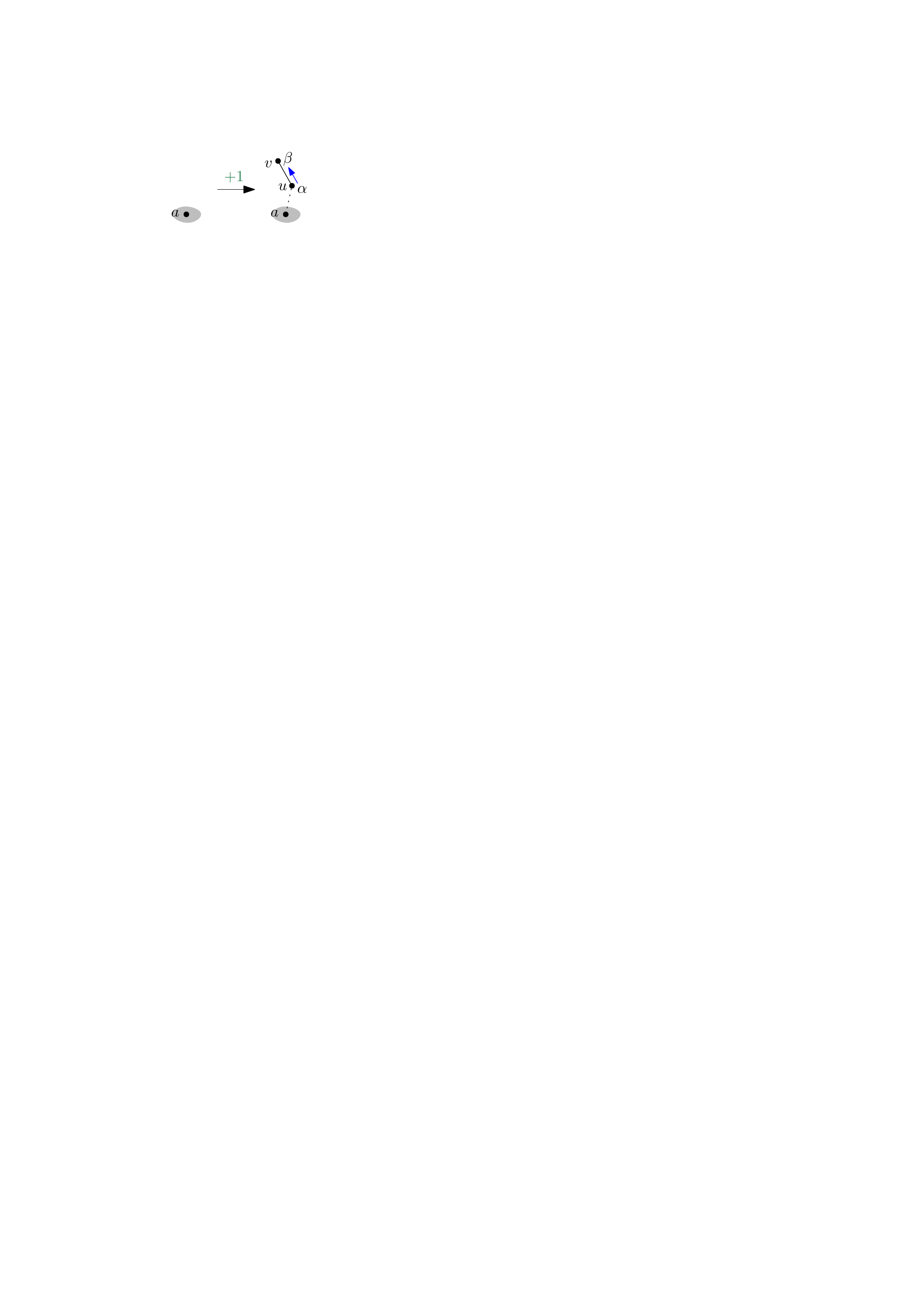}
    \caption[Example part of an RW-cycle strategy]{
      Part of a strategy on a RW-cycle of red-odd size $18$ with guards (shown purple) placed on $P(\beta_6)$.
      A few selected transitions are shown as an example.
    }%
    \label{fig:rwc_state_states}
  \end{figure}

  For the interface equivalency, note that in $\alpha_1,\alpha_2$, and $\beta_{4k}$ occupy $u_1$ and states $\beta_{4k+2}$ do not occupy $u_1$.
  We showed how to transition between every pair of states, so the strategies are interface equivalent with a single pink vertex with expanded states as in $S_{G'}$.

  \medskip
  Now for the second case, where the RW-cycle has size $n = 4k$.
  Note that in RW-cycle of this size reverse labeling changes the red-parity of red vertices, which was not true in the first case.
  The difference in the construction of the strategy is that now we expand from a red vertex.
  Let $u'_1$ and $u'_2$ be the two leaves of the red vertex.
  Let $\delta' = \Omega' \setminus (\mathcal{S}(u'_1) \cup \mathcal{S}(u'_2))$.
  We gradually alter the strategy in the following way.
  \begin{align*}
    S^1_{G'} &= \GCPOS{S'_{G'}}{\mathcal{S}(u'_1)}\{\gamma, \beta_4, \beta_8, \beta_{12}, \dots, \beta_n\}
    \\
    S^2_{G'} &= \GCPOS{S^1_{G'}}{\mathcal{S}(u'_2)}\{\alpha_1, \beta_2, \beta_6, \dots, \beta_{n-2}\}
    \\
    S_{G'} &= \GCPOS{S^2_{G'}}{\delta'}\{\alpha_2\}
  \end{align*}
  We perform the expansion to get $S_G$ such that all the states have exactly the same definitions as in the first case, see \Cref{eq:reduction_c6_states_definition}.
  We note a major difference: in the second case, $u_1$ is not an element of $P(\alpha_1)$.
  We added one extra state $\gamma$ which has $P(\gamma) = P(\alpha_1)$.
  There will be a major significance for this state when proving edge properties.

  Now we describe the transitions for the the strategy on $G$.
  For $2x$ and $2y$ of the same red-parity, the transition \Cref{eq:same-red-parity} still holds.
  In case $2x$ and $2y$ (with $x < y$) have different red-parity, then we consider two separate cases based on red-parity of $2x$.
  \begin{align*}
    \TRAN(\beta_{4x}, \beta_{4y+2}) = &\;\{(R_{4x}, u_{4x}), (u_{4x}, u_{4x-1})\}
    \cup \bigcup_{i=1}^{x-1} B_{4i+1} \cup \bigcup_{i=y+2}^k B_{4i-1}\;\cup
    \\
    \cup &\;\{(u_{4y+1}, y_{4y+2}, (u_{4y+2}, R_{4y+2})\}
    \\
    \TRAN(\beta_{4x+2}, \beta_{4y}) = &\;\{(R_{4x+2}, u_{4x+2}), (u_{4x+2}, u_{4x+1})\}
    \cup \bigcup_{i=0}^{x-1} B_{4i+3} \cup \bigcup_{i=y+2}^{k+1} B_{4i-3}\;\cup
    \\
    \cup &\;\{(u_{4y-1}, y_{4y}, (u_{4y}, R_{4y})\}
  \end{align*}
  Notice the difference in $u_1$ -- transition $\TRAN(\beta_{4x}, \beta_{4y+2})$ does not move through $u_1$ so there $u_1$ is stationary during it; in $\TRAN(\beta_{4x+2}, \beta_{4y})$ movements $\{(u_2,u_1),(u_1,u_n)\}$ happen.
  To fill all possibilities of mutual transitions among $\beta_{2x}$ we add transitions obtained by reversed labeling and symmetry.

  Now we show the transitions with $\alpha_1$ and $\alpha_2$.
  Note that reversed labeling does not change these two states.
  For $\beta_{4x+2}$ we can apply \Cref{eq:odd-to-alpha-2} to get $\TRAN(\beta_{4x+2},\alpha_2)$, and by reversing the labeling this gives us also $\TRAN(\beta_{4x}, \alpha_2)$.
  Note that after this transition there is one less guard on $G$ as it leaves through the interface $\{u_1\}$.
  In particular, $\TRAN(\beta_{2x},\alpha_2)$ moves to $u_1$ via $(u_2,u_1)$ if $2x$ is red-odd, and via $(u_n,u_{n+1})$ if $2x$ is red-even.

  Similarly, for $\beta_{4x}$ we can apply \Cref{eq:even-to-alpha-1} to get $\TRAN(\beta_{4x},\alpha_1)$, and by reversing the labeling we get $\TRAN(\beta_{4x+2}, \alpha_1)$.
  This transition did not interact with the interface.
  The transition among the two states is as follows.
  \[
    \TRAN(\alpha_1, \alpha_2) = \bigcup_{i=0}^{k-1} F_{4i+3}
  \]
  Note that this again results in a move $(u_n,u_{n+1})$.

  Last, we introduce the new state $\gamma$ which has the same guard configuration as $\alpha_1$, but differs in one transition.
  So $\TRAN(\gamma,\beta_{2x})=\TRAN(\alpha_1,\beta_{2x})$, and $\TRAN(\gamma,\alpha_2)=\emptyset$ (all guards are stationary), but $\TRAN(\gamma,\alpha_2)$ shall be $\TRAN(\alpha_1, \alpha_2)$ in reverse labeling.
  More precisely,
  \[
    \TRAN(\gamma,\alpha_2)=\bigcup_{i=0}^{k-1} B_{4i+3}.
  \]
  This contains a move $(u_2,u_1)$.
  See how movements interact with the interface in \Cref{fig:rwc_move_through_interface}.

  \begin{figure}[h]
    \centering
    \includegraphics[page=21,width=0.7\textwidth]{images/reduction/upper}
    \caption[Movement through the interface in RW-cycles]{Movements through the $\connecting$ vertex in red-odd and red-even RW-cycle.}%
    \label{fig:rwc_move_through_interface}
  \end{figure}

  We discussed the interface impact of all transitions and note that they are equivalent to those in $S_{G'}$, hence, the exchanged strategy is interface equivalent.

  \medskip
  It remains to show that $S_G$ is a proper strategy in both cases.
  The strategy started $S'_{G'}$ was a clique and by Cartesian product over single vertices it remained a clique.
  Thus, it suffices to say that there is at least one state in $L_{u_i,u_{i+1}}$, $R_{u_i,u_{i+1}}$, and $N_{u_i,u_{i+1}}$, and that $L_{u_i,u_{i+1}} \cap R_{u_i,u_{i+1}} = \emptyset$ for every $i \in \{1, \dots, n\}$, as any non-empty subset of vertices of the clique is dominating.

  Now we show the partitioning of the states into $L_{u_{x}, u_{x+1}}, R_{u_{x}, u_{x+1}}$, and $N_{u_{x}, u_{x+1}}$ for each $x \in \{1, \dots, n\}$, see \Cref{fig:rwc-state-parts} for an illustration.
  First, observe that all closed neighborhoods of $u_{2x}$ for $4 \le 2x \le n-2$ contain exactly $2$ guards in all the states we defined for this strategy.
  Let $x$ be an even integer such that $4 \le x \le n-2$.
  Let $e = \{u_{x+1},u_{x+2}\}$.
  \begin{figure}[h]
    \centering
    \includegraphics[page=20,width=0.7\textwidth]{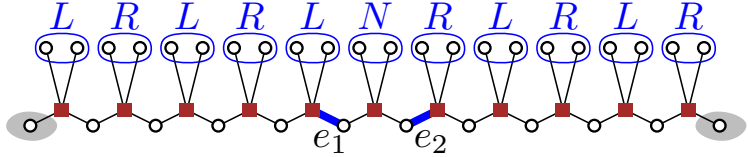}
    \caption[Part of the proper edge states proof]{The states of $\beta_{2x}$ that belong to $L_e, R_e, N_e$ for $e$ equal to the edges $e_1$ and $e_2$.}%
    \label{fig:rwc-state-parts}
  \end{figure}

  We show that $\beta_x \in N_e$ by a contradiction.
  Assume that in some transition from $\beta_x$ a guard moved through $e$.
  As in $\beta_x$ vertex $u_{x+1}$ is not occupied the guard must have moved from $u_{x+1}$.
  However, then $N[u_x]$ would have $3$ guards after the transition which cannot happen as we observed; a contradiction.

  For edges that could not be addressed in the argument because they are too close to the $\connecting$ vertex -- $e_1 = \{u_1,u_2\}$ and $e_2 = \{u_3,u_4\}$.
  We observe that for red-even RW-cycles $\alpha_1 \in N_{e_1}$ and $\gamma \in N_{e_2}$.
  For red-odd RW-cycles $\beta_2 \in N_{e_2}$ and $\beta_n \in N_{e_1}$.

  Now we claim that for any even $x$ such that $2 \le x \le n$, $e = \{u_{x+1},u_{x+2}\}$, the states $\beta_y$ where $y \ne x$ are in $L_e$ if and only if $u_{x+1} \in P(\beta_y)$, and they are in $R_e$ otherwise.
  We shall prove this more intuitively, as otherwise the claim can be proved by exhaustively listing all edges in all the transitions.
  First notice, that all movements from $\beta_y$ which are not incident to leaves are performed over a continuous part of the cycle which starts in $u_y$, and that they move ``away'' from $u_y$ towards the other end of the part.
  The movements always move an occupied $\white$ to $\red$ and if the part continues then moves the $\red$ to the adjacent $\white$ (this is true even when moving through the $\connecting$ vertex).
  Hence, if $e$ (that is not incident to $\connecting$) is included in the part of the movement, then we move $(u_{x+1},u_{x+2})$ if and only if $x_{x+1}$ is occupied, and we move $(u_{x+2},u_{x+1})$ if and only if $u_{x+1}$ was unoccupied.
  This proves the claim.

  Remainder of the edges which start at even positions and their $N$, $L$, and $R$ sets can be obtained by the same argument on reversed labelling.

\end{proof}

\subsection{Constant Component Reductions}\label{sec:constant_reductions}

The following lemma shows that considering the constant component cases completes the list of all necessary reductions.

\begin{lemma}\label{lem:exhaustive_application_cycle}
  Let us have a cactus graph $G$.
  After an exhaustive application of leaf and cycle reductions the reduced cactus graph $G'$ is either a base case or it contains a leaf cycle of constant size.
\end{lemma}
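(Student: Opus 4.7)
The plan is to combine termination of the reduction process with the structural case analyses already established in \Cref{obs:blockcut_decomposition}, \Cref{lem:tree_reductions_result}, and \Cref{lem:exhaustive_application}. First I would note that every reduction strictly decreases $|V(G)|$, so any maximal application terminates and yields a well-defined cactus $G'$. It then suffices to show that any such $G'$ on which no leaf reduction and no cycle reduction (Reductions~\ref{reduction-c1}--\ref{reduction-c6}) applies is either a base case or contains a leaf cycle of size at most $6$.

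I would argue by contraposition on the structure of $G'$. Assume $G'$ is not a base case, so $|V(G')| \ge 3$, and by \Cref{obs:blockcut_decomposition} it contains a leaf component. Should that component be a leaf vertex not adjacent to a leaf cycle, the block-cut analysis used in the proof of \Cref{lem:tree_reductions_result} produces a vertex on which one of Reductions~\ref{reduction-t1}--\ref{reduction-t3} applies, contradicting exhaustiveness; hence the component must be a leaf cycle $C$. Next I would apply \Cref{lem:exhaustive_application} to $C$: since none of Reductions~\ref{reduction-c1}--\ref{reduction-c5} applies, we either have $|C|\le 6$ (and we are done) or $C$ is an RW-cycle. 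In the RW-cycle case the color pattern of $C$ matches one of the two forms $(\connecting,\red,[\white,\red]^{2k},\connecting)$ or $(\connecting,\red,[\white,\red]^{2k+1},\connecting)$ handled by Reduction~\ref{reduction-c6}, so that reduction applies and again contradicts exhaustiveness. Therefore $|C|\le 6$.

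The core argument itself is a clean composition of prior results, so the main subtlety I expect is checking that the side conditions involving \Cref{prop:nonempty_edge_states} attached to Reductions~\ref{reduction-c1}, \ref{reduction-c4}, and \ref{reduction-c5} cannot prevent these reductions from firing on a cycle that matches their structural pattern, leaving an \emph{unexpected} stuck cycle of unbounded size. I would address this by observing that exhaustive application is performed in the setting maintained throughout \Cref{sec:reducing_cactus_graph}, where \Cref{prop:labelled_strategy} holds as a global invariant, so whenever the color pattern of a cycle fragment matches a reduction the required proper edge states are in fact present. With that invariant noted, the proof reduces to the three-step case analysis above.
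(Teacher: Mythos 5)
Your proposal is correct and follows essentially the same route as the paper: invoke \Cref{obs:blockcut_decomposition} and \Cref{lem:tree_reductions_result} to reduce to the leaf-cycle case, then apply \Cref{lem:exhaustive_application} and dispatch the RW-cycle alternative via Reduction~\ref{reduction-c6}. Your added remarks on termination and on the \Cref{prop:nonempty_edge_states} side conditions (guaranteed by the \Cref{prop:labelled_strategy} invariant) are sound extra care but do not change the argument.
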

\begin{proof}
  First, by \Cref{obs:blockcut_decomposition} the cactus always contains a leaf component.
  If we exhaustively apply tree reductions, then by \Cref{lem:tree_reductions_result} we are either done or there is a leaf cycle $C$.
  In \ref{lem:exhaustive_application} we saw that an exhaustive application of the cycle rules results either in a base case or a cycle with alternating $\red$ and $\white$ vertices, which gets tackled by Reduction~\ref{reduction-c6}.
  The cases that remain are cycles of constant sizes where none of the reductions may be applied.
\end{proof}

We obtain the list of constant leaf cycles by the following procedure.
First, we apply Reductions~\ref{reduction-c1} and~\ref{reduction-c2} exhaustively.
This removes all pink vertices from the leaf cycle.
Now, let us scan over the vertices of the leaf cycle in a linear order of vertices along the cycle, starting from the connecting vertex.
On the one hand, whenever there is a cycle reduction applicable on the vertices which were scanned so far, then we can apply it.
Hence, such a leaf cycle does not belong to constant leaf cycle cases.
On the other hand, when the cycle returns back to the connecting vertex and still no cycle reduction may be used, then this cycle constitutes a constant leaf cycle.
We present a full search diagram in \Cref{fig:case_study}.

\begin{figure}[tb]
  \centering
  \includegraphics[scale=1.14]{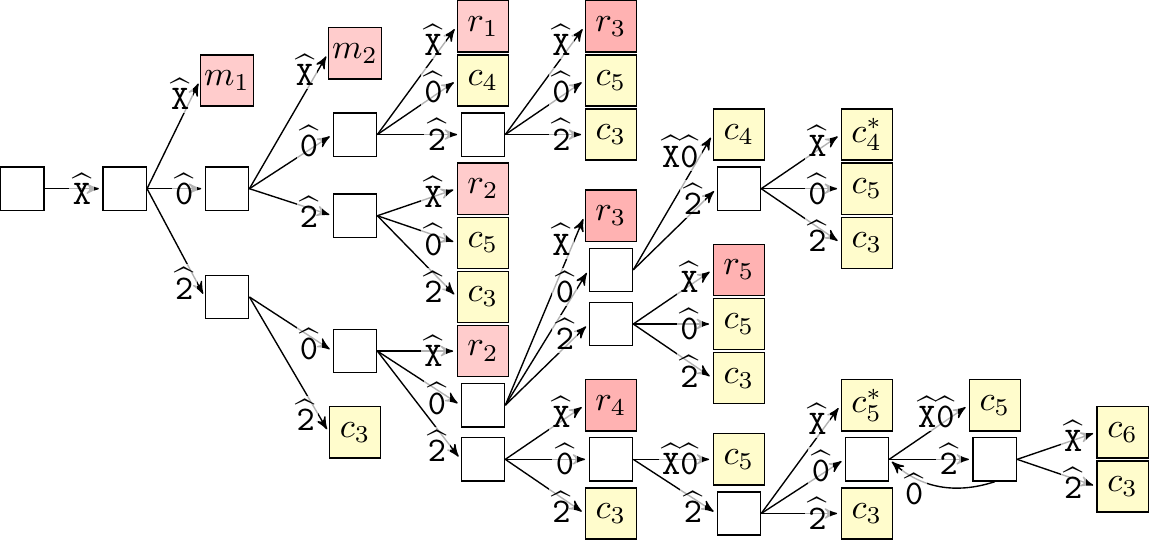}
  \caption[Case analysis of leaf cycle reductions]{
    Case analysis of applied reductions on a leaf cycle.
    Vertices $\pink$ were removed first by exhaustively applying their reductions.
    Scanning over vertices of a leaf cycle in order from the connecting vertex we identify these cases.
    The leaves show which reduction should be used for the scanned leaf cycle.
    Labels $c_1$ up to $c_5$ (yellow leaves) signify cycle reductions;
    labels $m_i$ and $r_i$ (red leaves) signify constant component reductions;
    nodes with a star ${}^{*}$ require \Cref{obs:crrc_expansion}.
    We can check that all the cases are covered by seeing that all inner (empty) nodes have outgoing edges labelled $\white$, $\red$, and $\connecting$.
  }%
  \label{fig:case_study}
\end{figure}

Again, we shall denote the reductions concisely as defined by \Cref{def:concise_reductions}.
However, in constant component reductions the leaf sequence describes the whole leaf cycle and the connecting vertex is listed as the first and the last vertex.
The vertices of the leaf cycle will be denoted by $u,u_1,u_2,\dots,u_{n-1},u$ where $u$ is the connecting vertex.
Let $R_1,\dots,R_{n-1}$ denote sets of all leaves adjacent to vertices $u_1,\dots,u_{n-1}$, respectively.
Note that the size $0 \le |R_i| \le 2$ and directly coincides with color of respective vertex $u_i$.
See \Cref{fig:crrc} for an example of a leaf sequence of constant leaf cycle and notation of its vertices.

Recall that the cycle reductions may be used even when the result does not create a simple graph, which is resolved in \Cref{sec:loops_multiedges}.

\begin{observation}\label{obs:crrc_expansion}
  A strategy for a leaf cycle $(u,u_1,u_2,u)$ with colors $(\connecting,\red,\red,\connecting)$ is built in such a way that the edge $(u_1,u_2)$ holds \Cref{prop:nonempty_edge_states} (even though it is not incident to a $\white$ vertex) which makes an expansion of Reductions~\ref{reduction-c4} or~\ref{reduction-c5} over this edge possible.
\end{observation}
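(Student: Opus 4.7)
The plan is to describe the strategy built for the triangular leaf cycle $(u,u_1,u_2,u)$ of colours $(\connecting,\red,\red,\connecting)$ in the style of the constant component reductions of \Cref{sec:constant_reductions}, and then directly verify \Cref{prop:nonempty_edge_states} on the edge $\{u_1,u_2\}$. By \Cref{lem:always_present_guard} both $u_1$ and $u_2$ may be assumed permanently defended, and the leaf sets $R_1,R_2$ are handled by independent group defences attached to $u_1$ and $u_2$, respectively, in the sense of \Cref{def:group_state} and \Cref{lem:extend_leaves}. Transitions that defend a leaf in $R_1$ move the guard on $u_1$ onto the leaf, slide the guard on $u_2$ onto $u_1$ to preserve permanent occupancy, and pull a replacement guard through the interface at $u$ onto $u_2$; such transitions therefore carry the move $(u_2,u_1)$. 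Symmetrically, the transitions defending a leaf in $R_2$ carry $(u_1,u_2)$, and transitions that do not touch any cycle leaf carry neither.

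First I would write out the explicit \LAB, listing all states, the mapping $P$, and all transitions $\TRAN$, so that the sets $L_{u_1,u_2}, R_{u_1,u_2}, N_{u_1,u_2}$ can be read off directly from \Cref{def:edge_states}. Non-emptiness of all three sets follows immediately from the construction: $R_1\ne\emptyset$ and $R_2\ne\emptyset$ (since both $u_1,u_2$ are $\red$) provide witnesses in $R_{u_1,u_2}$ and $L_{u_1,u_2}$, while any state whose only outgoing transition sends a guard to $u$ from outside the cycle is a witness in $N_{u_1,u_2}$. Next I would verify the disjointness $L_{u_1,u_2}\cap R_{u_1,u_2}=\emptyset$: each state records which side of the cycle currently has its group guard displaced onto a leaf, and the prescribed transition out of such a state rotates the cycle in a single direction to restore balance, so no state admits both a $(u_1,u_2)$ move and a $(u_2,u_1)$ move in its outgoing transitions. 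Finally, domination of each of the three sets over $S_G$ follows from the Cartesian-product structure of the strategy graph (\Cref{def:copy} and \Cref{lem:copy_keeps_domination}): the cycle's strategy graph becomes a clique after taking Cartesian products over the group defences, and every non-empty subset of a clique is a dominating set.

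The main obstacle I expect is the disjointness $L_{u_1,u_2}\cap R_{u_1,u_2}=\emptyset$, which is the part of \Cref{prop:nonempty_edge_states} that is not merely combinatorial but requires the concrete strategy to be engineered so that no single state offers transitions in both directions along the internal cycle edge. Once that is settled, \Cref{obs:edge_states_remapping} transports the resulting proper edge states to any subsequent expansion of Reduction~\ref{reduction-c4} or~\ref{reduction-c5} that takes place across $\{u_1,u_2\}$, which is exactly what the observation asserts.
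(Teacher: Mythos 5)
Your proposal diverges from the paper's proof in a way that opens a genuine gap. The paper does not construct a fresh strategy for the $(\connecting,\red,\red,\connecting)$ triangle: it takes the strategy that is actually produced by expanding the reduction chain \ref{reduction-c3}, then \ref{reduction-m2}, then \ref{reduction-t3}, notes that the edge $\{u,u_1\}$ already carries proper edge states (it is incident to the $\connecting$ vertex), checks the exact movements to see that the partition induced by $\{u_1,u_2\}$ is just the permutation $L_{u,u_1}=N_{u_1,u_2}$, $R_{u,u_1}=L_{u_1,u_2}$, $N_{u,u_1}=R_{u_1,u_2}$, and concludes by \Cref{obs:edge_states_remapping}. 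You instead design a new strategy, and the one you describe does not satisfy the property you need to prove. Your transition rule is ``defending a leaf of $R_1$ carries $(u_2,u_1)$ and defending a leaf of $R_2$ carries $(u_1,u_2)$.'' Since a defending strategy must, from essentially every state, offer an outgoing transition to a state occupying a prescribed leaf of $R_1$ and another to a state occupying a prescribed leaf of $R_2$, your rule places such a state in both $L_{u_1,u_2}$ and $R_{u_1,u_2}$, so the disjointness condition of \Cref{prop:nonempty_edge_states} fails. You correctly flag disjointness as the main obstacle, but the sketched fix (``rotate the cycle in a single direction to restore balance'') contradicts the rule you wrote down and is never carried out. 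In the paper's strategy the resolution is structural: from the balanced state the replacement guard enters through $u$ directly onto the adjacent red vertex (so neither direction of $\{u_1,u_2\}$ is used), and $\{u_1,u_2\}$ is traversed only when shifting directly between an ``$R_1$-occupied'' and an ``$R_2$-occupied'' configuration, which is exactly what makes $L_{u_1,u_2}$ and $R_{u_1,u_2}$ disjoint.

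There are two further problems with building the strategy from scratch. First, making both $u_1$ and $u_2$ permanently defended while also attaching an independent group-defence guard to each of $R_1$ and $R_2$ costs more guards than the $+2$ that the chain \ref{reduction-c3}, \ref{reduction-m2}, \ref{reduction-t3} allots, so the resulting upper bound would no longer meet the lower bound and \Cref{lem:technique} would not close. Second, your strategy prescribes a specific interface behaviour at $u$ (a guard pulled in from outside on every leaf attack and pushed back out afterwards); for the expansion framework this must be interface equivalent to the partial \LAB of the reduced graph, and you give no argument for that. Both issues disappear in the paper's approach precisely because it analyses the strategy the reductions already certify. Finally, your closing appeal to \Cref{obs:edge_states_remapping} to ``transport'' properness into later applications of Reductions~\ref{reduction-c4} and~\ref{reduction-c5} is misplaced: those reductions propagate the property themselves; the remapping observation is needed here, inside this proof, to transfer properness from $\{u,u_1\}$ to $\{u_1,u_2\}$.
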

\begin{proof}
  This leaf cycle gets reduced by Reduction~\ref{reduction-c3}, then \ref{reduction-m2}, and last with tree reduction \ref{reduction-t3}.
  We show that in the strategy resulting for expansions holds \Cref{prop:nonempty_edge_states} on edges $\{u,u_1\}$ and $\{u,u_2\}$.
  See \Cref{fig:crrc} for an illustration.
  Checking the exact movements of this strategy, we have that
  \begin{equation*}
    (u_1,u_2) \not\in \TRAN(L_{u,u_1},R_{u,u_1}),
    (u_1,u_2) \not\in \TRAN(L_{u,u_1},N_{u,u_1}),
    (u_1,u_2)     \in \TRAN(R_{u,u_1},N_{u,u_1}).
  \end{equation*}
  In particular, we may set $L_{u,u_1} = N_{u_1,u_2}$, $R_{u,u_1} = L_{u_1,u_2}$, and $N_{u,u_1} = R_{u_1,u_2}$.
  As the edge move sets for $\{u,u_1\}$ holds the properties which require all these sets to be non-empty, we have that they hold for $\{u_1,u_2\}$ as well.
  \begin{figure}[h]
    \centering
    \includegraphics[scale=1.2]{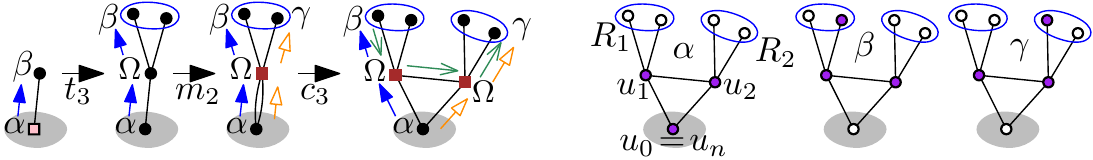}
    \caption[Strategy for $(\connecting,\red,\red,\connecting)$]{
      \textbf{Left:}
      Building the strategy for a $(\connecting,\red,\red,\connecting)$ leaf cycle.
      The states $\alpha$, $\beta$, and $\gamma$ are representants of sets $L_{u,a}$, $R_{u,a}$, and $N_{u,a}$, respectively.
      \textbf{Right:}
      Example guard configurations for states $\alpha$, $\beta$, and $\gamma$.
    }%
    \label{fig:crrc}
  \end{figure}
\end{proof}

From \Cref{obs:crrc_expansion} we know that the cases $(\connecting,\red,\white,\white,\white,\red,\connecting)$ and $(
\connecting,\red,\white,\red,\white,\red,\connecting)$ can be reduced by Reductions~\ref{reduction-c4} and~\ref{reduction-c5}, respectively.
See these cases in \Cref{fig:case_study}.

\subsubsection{Loops and Multiedges}\label{sec:loops_multiedges}

Similarly to \Cref{obs:crrc_expansion}, for the constant cases where we need to show that the properties hold.
By allowing cycle reductions to apply in cases where the vertices $a$ and $b$ are adjacent, or even identical, we allowed the result of the reduction to contain multiedges or loops.
This intermediate form of the graph can be thought of as a generalized cactus graph.

\begin{definition}[Cactus multigraph]\label{def:cactus_multigraph}
  Let the \emph{cactus multigraph} be a multigraph (possibly with loops) that is connected and its every edge lies on at most one cycle.
\end{definition}

A cactus multigraph differs from a cactus graph by allowing loops on arbitrary vertices (cycles of size $1$) and allowing $2$ multiedges between some vertices (cycles of size $2$).
The cactus multigraph may be changed to a cactus graph by removing multiedges and loops.
The following two reductions take care of that.

\begin{reduction}\customlabel{reduction-m1}{$m_1$}
  Let $G'$ be $G$ with one loop removed.
\end{reduction}
\begin{reduction}\customlabel{reduction-m2}{$m_2$}
  Let $G'$ be $G$ with a multiedge $\{u,v\}$ ($2$ edges) where $v$ has degree $2$ ($1$ neighbor) changed to a single edge.
\end{reduction}

\begin{figure}[h]
  \centering
  \includegraphics[page=1,scale=1.1]{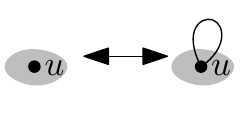}
  \qquad
  \includegraphics[page=2,scale=1.1]{reduction/loops.pdf}
  \caption[Loop and multiedge reduction]{
    \textbf{Left:}
    loop reduction \ref{reduction-m1};
    \textbf{Right:}
    multiedge reduction \ref{reduction-m2}
  }%
  \label{fig:loops}
\end{figure}

Observe these reductions on \Cref{fig:loops}.
We now prove that they do not need any additional guards.

\begin{lemma}\label{lem:reduction-m1}
  \standardReductionLemma{m1}{0}
\end{lemma}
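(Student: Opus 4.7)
The plan is to show both bounds by observing that a loop at a vertex $u$ contributes nothing to the set of admissible guard movements, because the only move it could enable is $(u,u)$, and stationary moves are already permitted unconditionally by \Cref{def:transition} (the clause $u = v$).

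For the upper bound $\EDN(G) \le \EDN(G')$, I would argue that every defending \LAB of $G'$ is already a defending \LAB of $G$: since $E(G') \subseteq E(G)$ and the vertex set is unchanged, every transition legal in $G'$ remains legal in $G$, and every state dominating $G'$ dominates $G$. No extra guards are needed, so $\EDN(G) \le \EDN(G') + 0$.

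For the lower bound $\EGC(G) \ge \EGC(G')$, I would take an optimal strategy $\mathcal{B}$ on $G$ and show it is itself a valid strategy on $G'$ using the same set of guards. The only movements possibly enabled by the loop have the form $(u,u)$, but by \Cref{def:transition} such a pair is admissible whenever $u \in P(\alpha) \cap P(\beta)$ regardless of whether the loop edge is present, so removing the loop removes no moves from $\TRAN(\alpha,\beta)$. Domination of each state is also unaffected since vertex sets coincide. Hence $\EGC(G') \le \EGC(G)$, i.e.\ $\EGC(G) \ge \EGC(G') + 0$.

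Combining the two inequalities with $k = 0$ via \Cref{lem:technique} (using the induction hypothesis $\EDN(G') = \EGC(G')$) yields $\EDN(G) = \EGC(G) = \EDN(G')$, which is exactly the claim that $G$ is defended with the same number of guards as $G'$. There is essentially no obstacle here; the proof is a one-line observation that a loop duplicates an always-available stationary move, and the only care required is to cite \Cref{def:transition} explicitly so the reader sees why removing the loop preserves both the dominating property of each state and the bijection structure of each transition.
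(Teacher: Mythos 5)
Your numeric bounds are correct and match the first paragraph of the paper's proof exactly: the loop only ever enables the move $(u,u)$, which \Cref{def:transition} already admits unconditionally, so any strategy transfers in both directions and $k=0$ follows from \Cref{lem:technique}. However, you have missed what the lemma actually asserts. The phrase ``$G$ is defended with the same number of guards as $G'$'' is defined in \Cref{sec:reducing_trees} to mean not only $\EDN(G)=\EGC(G)=\EDN(G')$ but also that the resulting strategy is \emph{proper} in the sense of \Cref{prop:labelled_strategy}, which in turn requires \Cref{prop:nonempty_edge_states} on the relevant cycle edges. The loop is precisely such an edge: Reduction~\ref{reduction-m1} is invoked when a cycle reduction (Reduction~\ref{reduction-c4} or~\ref{reduction-c5} with $a=b$) has collapsed a leaf cycle onto a single vertex, and the subsequent expansion of that cycle reduction is performed \emph{over the loop edge} and explicitly requires it to hold \Cref{prop:nonempty_edge_states}.

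Your construction fails this requirement outright. If you simply reuse the strategy of $G'$ on $G$ without ever scheduling a traversal of the loop, then by \Cref{def:edge_states} the sets $L_{u,u}$ and $R_{u,u}$ are empty, whereas \Cref{prop:nonempty_edge_states} demands that $L_{u,u}$, $R_{u,u}$, and $N_{u,u}$ all be non-empty, pairwise behave correctly, and each dominate $S_G$. This is the genuine content of the lemma, and the paper spends the second half of its proof on it: it sets $S_G = \GCPOS{S'_{G'}}{\vertexStates'(u)}{\{\alpha,\beta,\gamma\}}$ (the Cartesian product over the subset of states occupying $u$, \Cref{def:copy}), declares $\TRAN(\alpha,\beta)=\{(u,u)\}$ as a traversal of the loop, and obtains $L_{u,u}=\alpha$, $R_{u,u}=\beta$, $N_{u,u}=\Omega\setminus\{\alpha,\beta\}$, all non-empty and dominating because $\vertexStates'(u)$ dominates $S'_{G'}$. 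Without this step the induction breaks at the next expansion, so the ``one-line observation'' does not suffice.
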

\begin{proof}
  The strategy on $G$ can be easily adapted to $G'$ by replacing any guard movement along the loop of $u$ by not moving the guard on $u$, thus $\EGC(G') \leq \EGC(G)$.
  At the same time, any strategy on $G'$ is applicable on $G$, so $\EDN(G) \leq \EDN(G')$.
  The equality follows from \Cref{lem:technique}.
  However, we would like the loop in $u$ to have \Cref{prop:nonempty_edge_states}.

  Intuitively, to keep the properties, we could say that at any configuration where $u$ is occupied the guard can be moved along the loop in any direction or to be forbidden from moving along it while the configuration stays the same.
  Formally, we can achieve the same by setting $S_G = \GCPOS{S'_{G'}}{\vertexStates'{u}}{\{\alpha,\beta,\gamma\}}$.
  We now set that $\TRAN(\alpha,\beta)=\{(u,u)\}$.
  This creates $L_{u,u}=\alpha$, $R_{u,u}=\beta$, and $Q_{u,u}=\Omega\setminus\{\alpha,\beta\}$.
  This altered strategy holds \Cref{prop:nonempty_edge_states} for the loop of $u$ as the sets $L_{u,u}$, $R_{u,u}$, and $N_{u,u}$ are non-empty and dominating $S_G$ because $\vertexStates'(u)$ dominates $S'_{G'}$.
\end{proof}

In our case, Reduction~\ref{reduction-m1} gets used after Reduction~\ref{reduction-c4} is used on $(\connecting,\white,\white,\white,\connecting)$ or after Reduction~\ref{reduction-c5} is used on $(\connecting,\white,\red,\white,\connecting)$.
It could also be used on $(\connecting,\pink,\connecting)$ after Reduction~\ref{reduction-c1}; but in that case we can remove the multiedge first.

\begin{lemma}\label{lem:reduction-m2}
  \standardReductionLemma{m2}{0}
\end{lemma}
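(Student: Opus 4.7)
I would establish the lemma by invoking \Cref{lem:technique} with $k = 0$, so the task reduces to proving the two matching bounds $\EGC(G) \geq \EGC(G')$ and $\EDN(G) \leq \EDN(G')$, together with exhibiting a proper labelled strategy on $G$ that uses the same number of guards as the optimal proper strategy on $G'$ guaranteed by the induction hypothesis. The two bounds are almost immediate in the labelled strategy framework: a move is an ordered pair of endpoints rather than a specific edge, so the second copy of the multiedge contributes no new valid moves. Thus any defending strategy on $G$ is already defending on $G'$ without any modification (giving the lower bound), and conversely a defending strategy on $G'$ transports directly to $G$ since $G'$ agrees with $G$ as a simple graph (giving the upper bound).

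The substantive content is showing that the strategy on $G$ can be made proper in the sense of \Cref{prop:labelled_strategy}. In $G$, the multiedge $\{u,v\}$ with $\deg_G(v) = 2$ forms a $2$-cycle incident to the connecting vertex $u$, so both copies of the edge must carry proper edge states. I plan to imitate the construction in \Cref{lem:reduction-m1} by setting $S_G = \GCPOS{S'_{G'}}{\vertexStates'(u)}{\{\alpha,\beta,\gamma\}}$, triplicating each state in which $u$ is occupied, and then defining $\TRAN(\alpha,\beta) = \{(u,v)\}$ (using one copy of the parallel pair) while leaving all previously existing transitions unchanged modulo the Cartesian product. This forces $\alpha \in L_{u,v}$, $\beta \in R_{u,v}$, and $\gamma \in N_{u,v}$; since moves in the framework are vertex pairs, the edge-state sets are shared between the two parallel edges of the multiedge, so proper edge states for one copy automatically yield proper edge states for the other.

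To verify \Cref{prop:nonempty_edge_states} for the multiedge in the resulting strategy, I will use \Cref{lem:copy_keeps_domination}: the nonemptyness of $L_{u,v}, R_{u,v}, N_{u,v}$ is immediate from the construction, and each of them is dominating in $S_G$ because the image of $\vertexStates'(u)$ is dominating in $S'_{G'}$. Since $v$ is a leaf of $G'$ adjacent to $u$, the observation that $\vertexStates'(u) \cup \vertexStates'(v)$ dominates $S'_{G'}$, combined with the inductive properness of $\mathcal{B}'$, gives the needed coverage; I expect the cleanest argument to handle the two cases $v \in P'(\alpha')$ and $v \notin P'(\alpha')$ separately when lifting states through the Cartesian product, similarly to the final paragraph of \Cref{lem:reduction-m1}.

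The main obstacle I anticipate is purely bookkeeping: matching the split over $\vertexStates'(u)$ with the transitions inherited from $\mathcal{B}'$ so that no state loses its defending dominator in $S_G$ and so that the remaining edges on cycles keep the proper edge states they already enjoyed in $\mathcal{B}'$. Once this is in place, the equality $\EDN(G) = \EGC(G) = \EDN(G') = \EGC(G')$ follows from \Cref{lem:technique}, completing the proof.
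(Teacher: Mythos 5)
Your lower and upper bounds are fine and agree with the paper's (both are immediate because a move is determined by its endpoints). The substance of this lemma is establishing \Cref{prop:nonempty_edge_states} for \emph{both} parallel edges $e_1,e_2$, and there your argument has a genuine gap. The claim that ``the edge-state sets are shared between the two parallel edges, so proper edge states for one copy automatically yield proper edge states for the other'' is exactly the point that needs proof and it fails: each parallel edge needs its own nonempty $L$, $R$, $N$ because either edge may later be expanded (subdivided) by Reductions~\ref{reduction-c1}, \ref{reduction-c4}, or~\ref{reduction-c5}, and a single move between $u$ and $v$ can only be charged to one of the two edges. If every $u$--$v$ move is assigned to $e_1$, then $L_{e_2}=R_{e_2}=\emptyset$ and the property fails for $e_2$; reading one move as traversing both edges simultaneously instead breaks $L_e\cap R_e=\emptyset$. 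The paper therefore explicitly manufactures traffic on $e_2$: in its first case it duplicates $\vertexStates'(v)$ into $\beta,\gamma$ and reroutes the moves into $\gamma$ along $e_2$, and in the permanently-occupied case it uses a simultaneous swap $\{(u,v),(v,u)\}$ along $e_1$ and $e_2$ in opposite directions. Your construction produces no transition that uses $e_2$ at all.

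A second problem is that $\GCPOS{S'_{G'}}{\vertexStates'(u)}{\{\alpha,\beta,\gamma\}}$ with $\TRAN(\alpha,\beta)=\{(u,v)\}$ is not the benign trick of \Cref{lem:reduction-m1}: there the move $(u,u)$ is stationary and the triplicated states keep identical configurations, whereas $(u,v)$ relocates a guard, so $P(\beta)$ must differ from $P(\alpha)$, may collide with a guard already sitting on $v$, and forces you to re-derive every transition out of $\beta$ --- precisely the bookkeeping you defer. Moreover the product is taken over $\vertexStates'(u)$, which can be empty (for instance when $v$ is permanently occupied and its guard never moves), so the construction collapses in exactly the case the paper isolates and repairs by first pulling the guard from $v$ back onto $u$. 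A correct proof needs the case split on whether $v$ (and then $u$) is permanently occupied, as in the paper's three-case analysis.
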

\begin{proof}
  Let $e_1, e_2$ be the two different edges $\{u,v\}$ oriented as $(u,v)$ in $G$.
  We assume that $G'$ is $G$ with $e_2$ removed.
  Lower and upper bound are clear as every move along $e_2$ can be changed to a move along $e_1$ and the strategy on $G'$ is applicable to $G$ without change.
  The challenge is, again, to show that \Cref{prop:nonempty_edge_states} holds for $e_1$ and $e_2$ in $G$.

  Let $\beta' = \vertexStates'(v)$ and $\alpha' = \Omega' \setminus \beta$.
  To prove the property on $e_1$ and $e_2$, we will modify the strategy on $G'$ in the following way.
  If $\beta' \ne \Omega'$, then there is a move along $e_1$ in $G'$.
  In that case, we set $S_{G'} = \GCPOS{S'_{G'}}{\beta'}{\{\beta,\gamma\}}$ while we alter the movements $\TRAN(\alpha,\gamma)$ to move along $e_2$ instead of $e_1$.
  The edge states have $\alpha \in L_{e_1}$, $\beta \in R_{e_1}$, and $\gamma \in N_{e_1}$, and similarly for $e_2$ (with swapped $\beta$ and $\gamma$).

  Second case is that $\beta' = \Omega'$ while $\alpha' \ne \Omega'$.
  Here, we alter the strategy such that for all states where $u$ is not occupied, we move the guard from $v$ to $u$.
  This makes it so that $v$ is occupied in states $\alpha$ which we now split into $\alpha_1$ and $\alpha_2$ in the same way as in the previous case.

  The last case is $\beta' = \Omega'$ while $\alpha' \ne \Omega'$.
  Here we set $S_{G'} = \GCPOS{S'_{G'}}{\Omega'}{\{\alpha_1,\alpha_2,\alpha_3\}}$ and setting $\TRAN(\{\alpha_1,\alpha_2\}) = \{(u,v),(v,u)\}$, i.e., transitioning along $e_1$ and $e_2$ in opposite directions.
  Also $\TRAN(\alpha_1,\alpha_3)$ and $\TRAN(\alpha_2,\alpha_3)$ have all guards stationary.
  This makes edge states as $\alpha_1 \in L_{e_1}$, $\alpha_2 \in R_{e_1}$, and $\alpha_3 \in N_{e_1}$ while the exact same edge states work for $e_2$.

  In all the cases the edge states are non-empty, hence, \Cref{prop:nonempty_edge_states} holds for $e_1$ and $e_2$ after Reduction~\ref{reduction-m2}.
  \begin{figure}[h]
      \centering
      \includegraphics[scale=1.2]{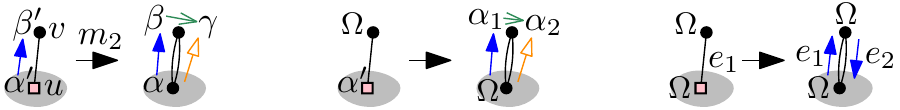}
      \caption[Cases of multiedge reduction]{
        Cases of Reduction~\ref{reduction-m2}.
        \textbf{Left:}
        There is a movement along the edge.
        \textbf{Middle:}
        Leaf is permanently occupied.
        \textbf{Right:}
        Leaf and its neighbor are permanently occupied.
      }%
      \label{fig:redm2}
  \end{figure}
\end{proof}
We note that in our strategy the case where $v$ is permanently defended shall not occur.

If we did not use Reduction~\ref{reduction-m2} the number of constant size leaf cycle reductions would be significantly bigger.
It gets used after reduction of $(\connecting,\white,\pink,\connecting)$ or $(\connecting,\pink,\pink,\connecting)$ by \ref{reduction-c1}, $(\connecting,\red,\pink,\connecting)$ by \ref{reduction-c2}, $(\connecting,\red,\red,\connecting)$ by \ref{reduction-c3}, $(\connecting,\white,\white,\white,\white,\connecting)$ or $(\connecting,\white,\white,\white,\red,\connecting)$ by \ref{reduction-c4}, $(\connecting,\white,\white,\red,\white,\connecting)$ or $(\connecting,\white,\red,\white,\red,\connecting)$ by \ref{reduction-c5}.
Without Reduction~\ref{reduction-m2} each of these cases would have to be analyzed separately.

\subsubsection{Constant Size Leaf Cycle Reductions}\label{sec:constant}

By \Cref{lem:exhaustive_application_cycle} the last cases that have to be resolved are covered by the following reductions.
See \Cref{tab:constant_reduction} for accompanying lower bound and upper bound proof illustrations.
Also see \Cref{fig:overview} for diagram of notions used within proofs of these reductions.

\begin{reduction}\customlabel{reduction-r1}{$r_1$}
  $(\connecting,\white,\white,\connecting) \to (\pink) + 0$
\end{reduction}
\begin{reduction}\customlabel{reduction-r2}{$r_2$}
  $(\connecting,\white,\red,\connecting) \to (\pink) + 1$
\end{reduction}
\begin{reduction}\customlabel{reduction-r3}{$r_3$}
  $(\connecting,\white,\white,\red,\connecting) \to (\red) + 1$
\end{reduction}
\begin{reduction}\customlabel{reduction-r4}{$r_4$}
  $(\connecting,\red,\white,\red,\connecting) \to (\red) + 2$
\end{reduction}
\begin{reduction}\customlabel{reduction-r5}{$r_5$}
  $(\connecting,\red,\white,\white,\red,\connecting) \to (\red) + 2$
\end{reduction}

\begin{table}[h]
  \caption[List of constant component reductions]{List of constant component reductions; thick red edges do not hold \Cref{prop:nonempty_edge_states}.}\label{tab:constant_reduction}
  \centering
  \begin{tabular}{|c|c|c|}
    \hline
    Reduction & Lower bound & Upper bound \\
    \hline
    \reductioncase{r1}{10}
    \hline
    \reductioncase{r2}{11}
    \hline
    \reductioncase{r3}{12}
    \hline
    \reductioncase{r4}{13}
    \hline
    \reductioncase{r5}{14}
    \hline
  \end{tabular}
  \break
\end{table}

Now we proceed to show correctness of these reductions.
First group consists of reductions where a leaf cycle is reduced to $\pink$ vertex $u$ and its leaf $v$.
The vertices of the expended leaf cycle are denoted by $u,u_1,u_2,\dots,u_{n-1},u$.

\begin{lemma}\label{lem:reduction-r1}
  \standardReductionLemma{r1}{0}
\end{lemma}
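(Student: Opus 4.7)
The plan is to verify the two inequalities required by \Cref{lem:technique} with $k=0$ (the induction hypothesis $\EDN(G')=\EGC(G')$ is assumed). Label the three vertices of the leaf cycle so that $u$ is the connecting vertex and $u_1,u_2$ are the two $\white$ vertices; then $G'$ is $G$ with this cycle collapsed into a single new leaf $v$ pendant on $u$.

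For the lower bound $\EGC(G) \geq \EGC(G')$, I would apply \Cref{obs:identification} to $G$, identifying $u_1$ with $u$. The merged vertex inherits all neighbors of $u$ outside the cycle together with $u_2$ (the two edges to the common neighbor $u_2$ collapse to one, since identification is defined on open neighborhoods), and $u_2$ becomes its pendant, yielding a graph isomorphic to $G'$ with $u_2$ in the role of $v$.

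For the upper bound $\EDN(G) \leq \EDN(G')$, I would take an optimal defending \LAB $\mathcal{B}'$ of $G'$ (note $\mathcal{S}'(v) \neq \emptyset$ since the leaf $v$ must be defended), form $S_{G'} = \GCPOS{S'_{G'}}{\mathcal{S}'(v)}{\{\alpha,\beta\}}$ via \Cref{def:copy} and \Cref{def:short_copy_notation}, and then expand from $G'$ to $G$ in the sense of \Cref{def:expansion}: in the $\alpha$-copy of each $\delta \in \mathcal{S}'(v)$ replace the guard on $v$ by a guard on $u_1$, in the $\beta$-copy replace $v$ by $u_2$, and leave the remaining states untouched. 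The new edge $\alpha_\delta\beta_\delta$ carries the single move $(u_1,u_2)$ with all other guards stationary; any transition of $\mathcal{B}'$ containing the move $(u,v)$ is rerouted to $(u,u_1)$ when entering an $\alpha$-copy and to $(u,u_2)$ when entering a $\beta$-copy; within-copy transitions inherit from $\mathcal{B}'$ with $u_1$ or $u_2$ stationary in place of $v$; cross-copy transitions for $\delta_1\neq\delta_2$ are obtained by composing the within-copy lift with $(u_1,u_2)$. By \Cref{lem:copy_keeps_domination} the resulting \LAB is defending and interface equivalent to the piece of $\mathcal{B}'$ it replaces while using the same number of guards per state, so \Cref{obs:upper_bound} yields $\EDN(G) \leq \EDN(G')$.

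The main technical obstacle I expect is preserving \Cref{prop:labelled_strategy}, i.e., establishing \Cref{prop:nonempty_edge_states} on all three new cycle edges $\{u,u_1\},\{u,u_2\},\{u_1,u_2\}$. Generically this is immediate since the move $(u_1,u_2)$ witnesses $L$ and $R$ for $\{u_1,u_2\}$ while the unchanged $(\gamma,\emptyset)$ states fill $N$, and the rerouted $(u,v)$ moves witness $L,R$ on the edges incident to $u$. The degenerate subcase is when $v$ is permanently occupied in $\mathcal{B}'$, so $\mathcal{S}'(v)=\Omega'$ and no move $(u,v)$ is ever available to reroute; in that case I would enlarge the Cartesian factor from $K_2$ to $K_3$ and introduce an auxiliary copy that momentarily parks the extra guard on $u$, manufacturing the missing moves along $\{u,u_1\}$ and $\{u,u_2\}$ in the same spirit as the auxiliary-state trick in the proof of \Cref{lem:reduction-m1}.
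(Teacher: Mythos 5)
Your lower bound is fine and essentially the paper's (the paper identifies $u_2$ with $u_1$ and then strips the resulting multiedge with Reduction~\ref{reduction-m2}; your single identification of $u_1$ with $u$ reaches the same place). Your upper bound, however, takes a genuinely different route: you build the strategy directly by a Cartesian product over $\vertexStates'(v)$ and reroute the $(u,v)$ moves, whereas the paper first re-expands the pendant edge $\{u,v\}$ into the multiedge pair $e_1,e_2$ via Reduction~\ref{reduction-m2} and only then splits $v$ into $u_1$ and $u_2$, inheriting the transition $(u_1,u_2)$ from the opposite orientations of $e_1$ and $e_2$. That detour is not cosmetic: the proof of \Cref{lem:reduction-m2} is exactly where the paper manufactures \emph{proper} edge states for $e_1$ and $e_2$ (non-empty, disjoint, and each \emph{dominating} $S_G$), after which the edge states of the new edge $\{u_1,u_2\}$ are obtained as a permutation of those of $\{u,u_1\}$ and remain proper by \Cref{obs:edge_states_remapping}.

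This is where your argument has a genuine gap. \Cref{prop:nonempty_edge_states} requires each of $L$, $R$, $N$ to be a dominating set of the strategy graph, not merely non-empty, and your verification addresses only non-emptiness. In your construction $N_{u_1,u_2} = \Omega' \setminus \vertexStates'(v)$ is the complement of a dominating set, which need not dominate $S_G$ (take $S'_{G'}$ a $5$-cycle $\sigma_1\ldots\sigma_5$ with $\vertexStates'(v)=\{\sigma_1,\sigma_2,\sigma_3\}$: the complement $\{\sigma_4,\sigma_5\}$ fails to dominate $\sigma_2$), and is empty when $v$ is permanently occupied. Likewise $L_{u,u_1}$ and $R_{u,u_1}$, consisting of the states from which a rerouted $(u,v)$ or $(v,u)$ move exists, have no a priori reason to be dominating. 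Your $K_3$ fix for the degenerate case is in the right spirit --- it is essentially how \Cref{lem:reduction-m2} itself proceeds --- but it, together with the domination of all the edge-state sets on the three new cycle edges, would have to be carried out explicitly. As written, the properness half of the claim, which is part of what ``defended'' means here and is what Reductions~\ref{reduction-c1}, \ref{reduction-c4}, and~\ref{reduction-c5} rely on later in the induction, is not established.
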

\begin{proof}
  Using \Cref{obs:identification} to identify $u_2$ with $u_1$ then using Reduction~\ref{reduction-m2} results in lower bound of $0$.

  For the upper bound, we first expand $\{u,v\}$ to multiedges $e_1$ and $e_2$ as per Reduction~\ref{reduction-m2}.
  Then we take $G'$ and change it to $G$ by splitting $v$ into two vertices $u_1$ and $u_2$.
  We create $\beta$ by substituting all occurrences of $v$ in $P(\beta')$ with $u_1$, and create $\gamma$ by substituting all occurrences of $v$ in $P(\gamma')$ with $u_2$.
  The transition between them becomes $\TRAN(\beta,\gamma)=\{(u_1,u_2)\}$.
  The strategy is interface equivalent as the strategy did not change states or transitions of the interface.

  We set $L_{u_1,u_2}=N_{u,u_1}$, $R_{u_1,u_2}=N_{u,u_1}$, and $N_{u_1,u_2}=L_{u,u_1}$ so the new edge $\{u_1,u_2\}$ holds \Cref{prop:nonempty_edge_states} and the strategy for $G$ holds \Cref{prop:labelled_strategy}.

  No guard was added so $\EDN(G) \leq \EDN(G')$ and by \Cref{lem:technique} we get that $G$ is defended with the same number of guards as $G'$.
\end{proof}

We recall that by $R_i$ we denote all leaves adjacent to $u_i$.

\begin{lemma}\label{lem:reduction-r2}
  \standardReductionLemma{r2}{1}
\end{lemma}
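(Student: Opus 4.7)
The plan mirrors the structure of \Cref{lem:reduction-r1}: establish $\EGC(G) \geq \EGC(G') + 1$ and $\EDN(G) \leq \EDN(G') + 1$ separately, then conclude via \Cref{lem:technique}. Label the cycle vertices $u, u_1, u_2$ with $u$ connecting, $u_1$ white, $u_2$ red, and let $R_2$ denote the set of at least two leaves of $u_2$.

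For the lower bound, set $H = \{u_2\} \cup R_2$. The only vertices of $V(G) \setminus H$ adjacent to $H$ are $u$ and $u_1$, and both lie at distance exactly $2$ from every leaf in $R_2$; hence \Cref{lem:ink} with the singleton sequence $(v)$ for any $v \in R_2$ guarantees that at least one guard occupies $H$ in every defending configuration. Applying \Cref{lem:lower_bound} to clique-reduce $H$ yields a graph $G''$ that differs from $G'$ only in having a multiedge on $\{u, u_1\}$ (the clique reduction re-adds an edge that was already present in $G$). Since $u_1$ has only $u$ as its neighbor in $G''$, \Cref{lem:reduction-m2} allows us to remove the multiedge while preserving $\EGC$, so $\EGC(G') \leq \EGC(G'') \leq \EGC(G) - 1$.

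For the upper bound, I chain two previously established constructions via an intermediate graph $G_1$ whose leaf cycle has colors $(\connecting, \white, \white, \connecting)$, i.e., the same triangle as in $G$ but with no leaves on $u_2$. Starting from the inductive defending \LAB $\mathcal{B}'$ of $G'$, the upper bound portion of \Cref{lem:reduction-r1} produces a defending \LAB on $G_1$ that uses the same number of guards and satisfies \Cref{prop:labelled_strategy}. Then \Cref{lem:extend_leaves} applied at $u_2$ with $\ell = |R_2|$ attaches the leaves of $R_2$ via the Cartesian product $\GCPOS{S'_{G_1}}{\vertexStates(u_2)}{K_{|R_2|}}$, introducing exactly one additional guard and producing a defending \LAB on $G$. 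Chaining the two inequalities gives $\EDN(G) \leq \EDN(G_1) + 1 = \EDN(G') + 1$, and \Cref{lem:technique} completes the proof.

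The main subtlety lies in verifying that the chained construction preserves \Cref{prop:labelled_strategy} on the cycle edges of $G$. After \Cref{lem:extend_leaves} enlarges the state space, the newly created copy-states communicate only among themselves and with their common parent state, and the guards on cycle vertices remain stationary on every transition introduced by the product. Consequently, by \Cref{obs:edge_states_remapping} together with \Cref{lem:copy_keeps_domination}, the three proper edge-state sets $L_e, R_e, N_e$ previously established for each cycle edge $e$ of $G_1$ extend to non-empty, disjoint, dominating sets in the expanded strategy graph for $G$, so the expanded \LAB satisfies \Cref{prop:labelled_strategy} as required.
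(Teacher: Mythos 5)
Your proposal is correct and follows essentially the same route as the paper: the lower bound clique-reduces $\{u_2\}\cup R_2$ (the paper packages the Ink-lemma-plus-clique-reduction step as \Cref{obs:lb_leaf}) and then removes the resulting multiedge via Reduction~\ref{reduction-m2}, and the upper bound performs the expansion of \Cref{lem:reduction-r1} followed by \Cref{lem:extend_leaves} at $u_2$ with one extra guard, concluding by \Cref{lem:technique}. Your treatment of why \Cref{prop:labelled_strategy} survives the leaf addition is just a more explicit version of the paper's appeal to the argument of \Cref{lem:reduction-r1}.
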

\begin{proof}
  Using \Cref{obs:lb_leaf} on $\{u_2\} \cup R_2$ then using Reduction~\ref{reduction-m2} results in lower bound of $1$.

  We do the same expansion as in the proof of \Cref{lem:reduction-r1}.
  After that, we use \Cref{lem:extend_leaves} to add leaves $R_2$ to $u_2$ while using one extra guard to defend it.
  Graph holds \Cref{prop:labelled_strategy} by the same argument as in the proof of \Cref{lem:reduction-r1}.
  We added one extra guard which results in $\EDN(G) \leq \EDN(G')+1$ and by \Cref{lem:technique} we get that $G$ is defended with one more guard than $G'$.
\end{proof}

We now prove correctness of the other three cases.
The reduced graph $G'$ now consists of a single $\red$ vertex $u$ (and its leaves).
The partial \LAB on $G'$ has states $\alpha'$ and $\beta'$ that defend the two leaves adjacent to $u$.
Also, let $\delta' = \Omega' \setminus (\alpha' \cup \beta')$, which may be an empty set.

\begin{lemma}\label{lem:reduction-r3}
  \standardReductionLemma{r3}{1}
\end{lemma}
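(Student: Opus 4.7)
My plan is to mirror the proofs of Lemmas~\ref{lem:reduction-r1} and~\ref{lem:reduction-r2}: establish a lower bound $\EGC(G)\ge\EGC(G')+1$ via the toolkit of \Cref{sec:lower_bounds}, construct an upper-bound strategy $\EDN(G)\le\EDN(G')+1$ by expanding the optimal strategy for $G'$, and conclude via \Cref{lem:technique}. Label the cycle in $G$ as $u,u_1,u_2,u_3,u$, with $u$ the connecting vertex, $u_1,u_2$ white, and $u_3$ red with leaf set $R_3$; let $v_3,w_3\in R_3$ be two of its leaves.

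For the lower bound I would apply \Cref{obs:lb_leaf} to $\{u_3\}\cup R_3$: this accounts for one guard, and the clique reduction of \Cref{def:clique_reduction} collapses the $4$-cycle into the triangle $u-u_1-u_2-u$ with $u_1,u_2$ white. The resulting leaf pattern $(\connecting,\white,\white,\connecting)$ matches that of Reduction~\ref{reduction-r1}, whose zero-cost step installs a pendant on $u$. Chaining these two reductions yields $\EGC(G)\ge\EGC(G_2)+1$, where $G_2$ agrees with $G'$ except that $u$ has one fewer added leaf (pink instead of red). The remaining one-leaf gap is bridged using \Cref{lem:extend_leaves} together with the inductive hypothesis applied to the proper sub-cactus obtained by removing one of the two added leaves of $u$, which gives $\EGC(G)\ge\EGC(G')+1$.

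For the upper bound I would start from the optimal strategy $\mathcal{B}'$ on $G'$ and, via \Cref{lem:always_present_guard}, make $u$ permanently occupied; the two leaves added to $u$ in $G'$ are then defended by a pair of group states produced by \Cref{lem:extend_leaves}. I would then expand to $G$: applying \Cref{def:copy} over the subset $\vertexStates'(u)$ splits each state of $\mathcal{B}'$ into four cycle-states corresponding to the four positions of a roaming guard in $\{u_1,u_2,v_3,w_3\}$, while two permanent guards sit on $u$ and on $u_3$ (the latter justified by a second invocation of \Cref{lem:always_present_guard}). The transitions cyclically rotate the roaming guard through these four positions, so that every attack on a cycle vertex or a leaf of $u_3$ is answered by a coordinated ``pulse'' of moves which temporarily shifts the permanent guards without dropping coverage of the interface at $u$. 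Any additional leaves of $u_3$ beyond $v_3,w_3$ are absorbed into the $u_3$ group state by a further application of \Cref{lem:extend_leaves}. The resulting strategy uses exactly one more guard than $\mathcal{B}'$, and one verifies interface equivalency with the relevant partial labelled substrategy of $\mathcal{B}'$ so that \Cref{obs:upper_bound} gives $\EDN(G)\le\EDN(G')+1$.

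The main obstacle is the sharpness of the lower bound: the \Cref{obs:lb_leaf}+Reduction~\ref{reduction-r1} chain only lands at a pink-$u$ graph, so obtaining the tight $+1$ requires careful accounting for the last leaf --- either through the backward argument sketched above or, alternatively, via a direct application of \Cref{obs:lb_star} on $u_3$ coupled with the inductive hypothesis relating $G'$ to the cactus obtained by clique-reducing $N[u_3]$. Once the two bounds match, \Cref{lem:technique} yields the stated conclusion that $G$ is defended with one more guard than $G'$.
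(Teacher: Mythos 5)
Your overall plan (lower bound via the \Cref{sec:lower_bounds} toolkit, upper bound via expansion, combine with \Cref{lem:technique}) is the right one, and your upper-bound construction is a plausible variant of the paper's. But the lower bound as you set it up has a genuine gap that your own ``bridging'' step does not close.

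You apply \Cref{obs:lb_leaf} to the whole closed star $\{u_3\}\cup R_3$ and then chain through Reduction~\ref{reduction-r1}, landing at a graph $G_2$ in which $u$ carries only one pendant leaf (pink), whereas $G'$ has $u$ red with two leaves. This gives $\EGC(G)\ge\EGC(G_2)+1$, and since $G_2$ is $G'$ minus a leaf we only have $\EGC(G_2)\le\EGC(G')$ --- the inequality points the wrong way, so the chain proves a \emph{weaker} bound than $\EGC(G)\ge\EGC(G')+1$ (and the gap is real: adding a second leaf to a vertex with one leaf can strictly increase the guard count, cf.\ $K_{1,1}$ versus $K_{1,2}$). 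Your proposed repairs do not work: \Cref{lem:extend_leaves} is an upper-bound tool and cannot be invoked to strengthen a lower bound, and the alternative of \Cref{obs:lb_star} on $N[u_3]$ both overshoots to $+2$ and clique-reduces away the connecting vertex $u$ itself. The paper avoids all of this by clique-reducing only the pair $\{u_3,v_3\}$ (a single leaf and its neighbor) via \Cref{obs:lb_leaf}, then identifying $u_2$ with $u$ by \Cref{obs:identification} and cleaning up with Reductions~\ref{reduction-m1} and~\ref{reduction-m2}; the surviving leaves of $u_3$ then become leaves of $u$, landing exactly on the red-$u$ graph $G'$ with the tight $+1$.

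On the upper bound, your ``four-position roaming guard'' over $\{u_1,u_2,v_3,w_3\}$ with permanent guards on $u$ and $u_3$ uses the correct number of guards and is in the same spirit as the paper's three state-groups $\alpha,\gamma,\beta$ (guard on $u_1$, on $u_2$, or group-defending $R_3$), but it is under-specified in one respect the induction actually needs: the lemma's notion of ``defended'' requires the resulting strategy to satisfy \Cref{prop:labelled_strategy}, so you must exhibit nonempty, disjoint, dominating edge states for the new cycle edges (the paper records $\alpha\in N_{u_2,u_3}$, $\beta\in N_{u_1,u_2}$, $\gamma\in N_{u,u_1}$, and defers $\{u,u_3\}$ to \Cref{lem:rwwr_non_property}). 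A purely cyclic rotation of the roamer must be checked against this requirement, not just against coverage.
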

\begin{proof}
  Using \Cref{obs:lb_leaf} on $u_3$ and one of its leaves, identifying $u_2$ with $u$ using \Cref{obs:identification}, and using Reductions~\ref{reduction-m1} and~\ref{reduction-m2} to remove loops and multiedges results in lower bound of $1$.

  For the upper bound, let $u_1$ and $u_3$ be the two leaves adjacent to $u$ in $G'$.
  Let $\alpha' = \vertexStates'(u_1)$ and $\beta' = \vertexStates'(u_3)$.
  We make $S_{G'} = \GCPOS{S'_{G'}}{\beta'}{\{\beta,\gamma\}}$.
  Now we expand the graph $G'$ by first applying \Cref{lem:extend_leaves} on $u_3$, adding $2$ new leaves to it using one additional guard.
  Next, we add a vertex $u_2$ while connecting it to $u_1$ and $u_3$ and we move $\gamma$ from $R_3$ to $u_2$ which is easy as $u_2$ is a neighbor of $u_3$.
  The only major change in transitions is that $\TRAN(\alpha,\gamma)=\{(u_1,u_2),(u,u),(u_3,u_3)\}$ instead of $\{(u_1,u),(u,u_3),(u_3,u_2)\}$.
  No other transitions change, and $u$ behaves the same, so the exchanged graphs are interface equivalent.
  See \Cref{fig:redconnecting_cwwrc} for strategy $S_G$.

  \begin{figure}[h]
      \centering
      \includegraphics[scale=1.2,page=1]{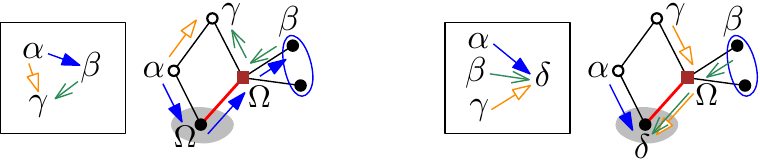}
      \caption{Strategy for $(\connecting,\white,\white,\red,\connecting)$ leaf cycle}%
      \label{fig:redconnecting_cwwrc}
  \end{figure}

  We note that each is traversed at some point and that $\alpha \in N_{u_2,u_3}$, $\beta \in N_{u_1,u_2}$, and $\gamma \in N_{u,u_1}$ so these edges hold \Cref{prop:nonempty_edge_states} and the strategy for $G$ holds \Cref{prop:labelled_strategy}.
  The edge $\{u,u_3\}$ does not need to hold the property as it is a special case tackled in \Cref{lem:rwwr_non_property}.

  We got that $\EDN(G) \leq \EDN(G')+1$ and by \Cref{lem:technique} we get that $G$ is defended with one more guard than $G'$.
\end{proof}

\begin{lemma}\label{lem:reduction-r4}
  \standardReductionLemma{r4}{2}
\end{lemma}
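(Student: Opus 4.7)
The leaf cycle is $(\connecting, \red, \white, \red, \connecting)$ with vertices $u, u_1, u_2, u_3$; here $u_1, u_3$ are $\red$ with leaves $R_1, R_3$ (each $|R_i|\geq 2$) and $u_2$ is $\white$. The graph $G'$ replaces the cycle by two new leaves of $u$ that, following the setup, we also label $u_1, u_3$; by assumption $\alpha' = \vertexStates'(u_1)$ and $\beta' = \vertexStates'(u_3)$, while $\delta'$ is the remainder.

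For the lower bound $\EGC(G) \geq \EGC(G')+2$, I would apply \Cref{obs:lb_leaf} twice, once to the pair $\{u_1, r_1\}$ with $r_1 \in R_1$ and once to $\{u_3, r_3\}$ with $r_3 \in R_3$, each contributing $+1$ through the associated clique reduction (\Cref{def:clique_reduction}, \Cref{lem:lower_bound}). I would then identify $u_2$ with $u$ using \Cref{obs:identification}, apply Reductions~\ref{reduction-m1} and~\ref{reduction-m2} to remove the loops and multiedges so created, and apply Reduction~\ref{reduction-t2} to trim any surplus leaves of $u$; the resulting graph is isomorphic to $G'$.

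For the upper bound $\EDN(G) \leq \EDN(G')+2$, I would alter $\mathcal{B}'$ by first applying \Cref{def:copy} twice so that $\alpha'$ and $\beta'$ become full group states,
\[
  S^*_{G'} = \GCPOS{\GCPOS{S'_{G'}}{\alpha'}{K_{|R_1|}}}{\beta'}{K_{|R_3|}},
\]
and then expand from $G'$ to $G$ by repurposing the leaves $u_1, u_3$ of $G'$ as cycle vertices, inserting the new vertex $u_2$ and the leaves $R_1, R_3$, and spending the two new guards to make $u_1$ and $u_3$ permanently defended. In each $\alpha$-copy the guard previously on $u_1$ (as a leaf of $G'$) now sits on the indexed leaf $r \in R_1$, and symmetrically for $\beta$. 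Attacks on $u_2$ are handled by introducing a fresh state type $\gamma$ with cycle occupancy $\{u, u_1, u_2, u_3\}$, reached from any $\alpha$- or $\beta$-state by simultaneous moves that push a cycle guard onto $u_2$ while the group-state leaf refills the newly emptied $u_1$ (or $u_3$). Cross-side attacks, e.g.\ an $R_3$-leaf attack from an $\alpha$-state, are resolved by a chain $r \to u_1 \to u \to u_3 \to r'$ along the cycle, landing in the corresponding $\beta$-copy. Since the rest-of-graph strategy and the interface at $u$ are inherited verbatim, the substitution is interface equivalent (\Cref{def:interface_equivalent}) and \Cref{obs:upper_bound} supplies the required bound; combining with the lower bound via \Cref{lem:technique} yields $\EDN(G) = \EGC(G) = \EDN(G')+2$.

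What remains is verifying \Cref{prop:labelled_strategy}, i.e., that \Cref{prop:nonempty_edge_states} holds for each of the four cycle edges. For every such edge the chain transitions and the $\alpha/\beta \leftrightarrow \gamma$ transitions yield states realising both directions of traversal, while $\delta$-states realise non-traversal, giving non-empty disjoint dominating sets $L, R, N$. I expect the main obstacle to be the simultaneous bookkeeping: the $\delta$-states in $G$ inherit only three cycle guards (one fewer than $\alpha, \beta, \gamma$-states), so transitions $\delta \leftrightarrow \alpha/\beta/\gamma$ must coordinate a rest-of-graph move inherited from $\mathcal{B}'$ with a matching cycle move so that the guard count remains exactly $\EDN(G')+2$, all while ensuring every cycle edge is crossed in both directions somewhere in the strategy.
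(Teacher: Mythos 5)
Your lower bound is essentially the paper's argument (two applications of \Cref{obs:lb_leaf} at $u_1$ and $u_3$, identification of $u_2$ with $u$, loop/multiedge cleanup), and your upper-bound construction, while different from the paper's (the paper reuses the expansion of \Cref{lem:reduction-r3} and then applies \Cref{lem:extend_leaves} to $u_1$, whereas you build a symmetric strategy with $u_1,u_3$ permanently defended and a state $\gamma$ occupying all of $u,u_1,u_2,u_3$), does yield $\EDN(G)\le\EDN(G')+2$ as a pure guard count.

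The genuine gap is in your final paragraph. The lemma's conclusion that $G$ is \emph{defended} requires the resulting strategy to satisfy \Cref{prop:labelled_strategy}, and the cycle $(\connecting,\red,\white,\red,\connecting)$ is not among its exceptions, so \Cref{prop:nonempty_edge_states} must hold for \emph{both} edges $\{u,u_1\}$ and $\{u,u_3\}$. Your claim that ``$\delta$-states realise non-traversal'' does not go through for these two edges. First, every $\alpha$- and $\beta$-state is excluded from both $N_{u,u_1}$ and $N_{u,u_3}$: since $u_2$ is unoccupied in both endpoints of a cross-side transition $\alpha_r\to\beta_{r'}$, no move of that transition may touch $u_2$, so the only legal bijection is the chain $r\to u_1\to u\to u_3\to r'$, which traverses both edges. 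Second, each $\gamma$-state must (in general) release a guard to the outside when transitioning to a $\delta$-state, forcing one of $(u_1,u)$ or $(u_3,u)$, and each $\delta$-state must pull a guard in through $u$ to defend $R_1$ and $R_3$, forcing one of $(u,u_1)$ or $(u,u_3)$; hence each such state lies in at most one of $N_{u,u_1}$, $N_{u,u_3}$. You therefore need two disjoint non-empty dominating sets of $S_G$ inside $\gamma\cup\delta$, but you have only one guaranteed dominating family ($\gamma$); the $\delta$-states are copies of $\Omega'\setminus(\alpha'\cup\beta')$, which need not be dominating and may even be empty. The paper closes exactly this hole by a further split of $\gamma$ into $\gamma_1$ and $\gamma_2$, routing $\TRAN(\gamma_1,\delta)$ through $\{(u_2,u_1),(u_1,u)\}$ and $\TRAN(\gamma_2,\delta)$ through $\{(u_2,u_3),(u_3,u)\}$, so that $\gamma_1\subseteq N_{u,u_3}$ and $\gamma_2\subseteq N_{u,u_1}$ are the two required dominating sets. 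Without this (or an equivalent device) the induction invariant needed by later reductions that expand over these edges is not maintained.
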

\begin{proof}
  Using \Cref{obs:lb_leaf} first on $\{u_1, v_1\}$ where $v_1 \in R_1$, then again on $\{u_3, v_3\}$ where $v_3 \in R_3$, identifying $u_2$ with $u$ using \Cref{obs:identification}, and using Reductions~\ref{reduction-m1} and~\ref{reduction-m2} to remove loops and multiedges results in lower bound of $2$.

  For upper bound, repeat exactly the expansion from \Cref{lem:reduction-r3} on $G'$ which uses one extra guard.
  Continue by applying \Cref{lem:extend_leaves} on $u_1$ which adds the leaves $R_1$ using one extra guard while returning the defending \LAB on $G$.
  The properties for edges $\{u_1,u_2\}$, $\{u_2,u_3\}$, and interface equivalency still hold from \Cref{lem:reduction-r3}.
  However, we can split $\gamma$ into two states $\gamma_1$ and $\gamma_2$ which dictates whether $\TRAN(\gamma_i,\delta)$ traverses through $\{(u_2,u_1),(u_1,u)\}$ or $\{(u_2,u_3),(u_3,u)\}$.
  This ensures \Cref{prop:labelled_strategy} for $\{u,u_1\}$ and $\{u,u_3\}$ as former cannot be traversed from $\gamma_2$ and latter from $\gamma_1$.
  Hence, we have $\EDN(G) \leq \EDN(G')+2$ and by \Cref{lem:technique} we get that $G$ is defended with two more guards than $G'$.
\end{proof}

\begin{lemma}\label{lem:reduction-r5}
  \standardReductionLemma{r5}{2}
\end{lemma}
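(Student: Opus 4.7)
The plan is to parallel the preceding constant-component reductions \Cref{lem:reduction-r3} and \Cref{lem:reduction-r4}, establishing matching lower and upper bounds of $2$ extra guards and closing the argument via \Cref{lem:technique}. Let $u$ be the connecting vertex of the leaf cycle in $G$, let $u_1,u_2,u_3,u_4$ be the remaining cycle vertices in cyclic order, and let $R_1, R_4$ be the sets of leaves at the red vertices $u_1$ and $u_4$.

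For the lower bound, I would apply \Cref{obs:lb_leaf} first on the leaf pair $\{u_1, v_1\}$ with $v_1 \in R_1$ to save one guard, then on $\{u_4, v_4\}$ with $v_4 \in R_4$ to save a second. I would then identify both $u_2$ and $u_3$ with $u$ via \Cref{obs:identification} and eliminate the resulting loops and multiedges via Reductions~\ref{reduction-m1} and \ref{reduction-m2}, leaving (at most) $G'$. This yields $\EGC(G) \geq \EGC(G') + 2$.

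For the upper bound, I would start from an optimal defending \LAB $\mathcal{B}'$ on $G'$ in which $u$ is permanently occupied (by \Cref{lem:always_present_guard}). Identifying the two leaves of $u$ in $G'$ which upon expansion become the red cycle vertices as $u_1$ and $u_4$, set $\alpha' = \vertexStates'(u_1)$ and $\beta' = \vertexStates'(u_4)$. I would then split these state-sets via two Cartesian products over subset (\Cref{def:copy}), first forming $S^1_{G'} = \GCPOS{S'_{G'}}{\alpha'}{\{\alpha_1, \alpha_2\}}$ and then $S^2_{G'} = \GCPOS{S^1_{G'}}{\beta'}{\{\beta_1, \beta_2\}}$. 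Next, I would apply \Cref{lem:extend_leaves} twice: once to attach $R_1$ to $u_1$ at the cost of one extra guard, and once to attach $R_4$ to $u_4$ at the cost of another, so that both red vertices become permanently occupied with a group defense over their leaves. Finally, insert the two white vertices $u_2, u_3$ closing the cycle path $u_1 - u_2 - u_3 - u_4$, and in the secondary split on each side relocate the roaming group-defense guard from $R_1$ onto $u_2$ (in the $\alpha_2$ states) and from $R_4$ onto $u_3$ (in the $\beta_2$ states), mirroring the relocation of $\gamma$ from $R_3$ to $u_2$ in the proof of \Cref{lem:reduction-r3}. The transitions are assembled so that an attack on a white vertex is answered by a cycle shift of the form $(u_4, u_3), (u, u_4), (u_1, u), (R_1, u_1)$, with the symmetric version defending $u_2$.

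Two verifications remain. Interface equivalency with $\mathcal{B}'$ is immediate since neither the splits nor the leaf additions modify the behavior of $\mathcal{B}'$ at $u$ or its remaining leaves. For \Cref{prop:labelled_strategy}, I would check \Cref{prop:nonempty_edge_states} on the three edges $\{u_1,u_2\},\{u_2,u_3\},\{u_3,u_4\}$ by exhibiting the induced $L$, $R$, $N$ sets from the constructed transitions and invoking \Cref{obs:edge_states_remapping}; the two remaining cycle edges $\{u,u_1\}$ and $\{u,u_4\}$ are precisely the exceptions carved out by \Cref{prop:labelled_strategy} for this color pattern and are handled separately in \Cref{lem:rwwr_non_property}. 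The main obstacle will be coordinating the four split state-sets so that every cross-transition between the $\alpha$-side and the $\beta$-side admits a valid bijection of guard movements through the four cycle edges -- in particular, so that when one white vertex is occupied by a displaced leaf guard in some state, an attack on the other white vertex is still answered by a single legal transition respecting \Cref{def:transition}. Once this coordination is settled we obtain $\EDN(G) \leq \EDN(G') + 2$, and \Cref{lem:technique} then gives the stated conclusion.
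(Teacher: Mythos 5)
Your lower bound is exactly the paper's: \Cref{obs:lb_leaf} on a leaf pair at each of $u_1$ and $u_4$, followed by identifying $u_2$ and $u_3$ with $u$. For the upper bound you take a genuinely different route. The paper does not build the strategy from scratch: it reuses the expansion of Reduction~\ref{reduction-r4} verbatim (which already costs the two extra guards) and then splits the single white vertex of the $(\connecting,\red,\white,\red,\connecting)$ cycle into $u_2$ and $u_3$ via one Cartesian product over $\vertexStates'(u_2)$, so the one guard that patrolled that white vertex now sits on $u_2$ in $\gamma_1$ and on $u_3$ in $\gamma_2$, linked by $\TRAN(\gamma_1,\gamma_2)=\{(u_2,u_3),(u,u)\}$. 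You instead build symmetrically, letting the $R_1$-group guard additionally cover $u_2$ and the $R_4$-group guard additionally cover $u_3$. Both constructions spend two extra guards and both defer the edges $\{u,u_1\}$ and $\{u,u_4\}$ to the exception handled in \Cref{lem:rwwr_non_property}. The coordination problem you flag as the main obstacle does resolve: from the state with $u_2$ occupied, an attack on $u_3$ is answered by the single move $(u_2,u_3)$ with every other guard stationary (and conversely), which is neutral at the interface $u$; the transitions into these states from the $\delta$-states and from the leaf-group states are the rotations you already wrote down, and they reproduce the moves $(u_1,u)$, $(u,u_4)$ of the original star strategy at the interface. What the paper's bootstrap buys is inheritance: all transitions and all $L$, $R$, $N$ edge partitions except the one new pair of states come pre-verified from Reductions~\ref{reduction-r3} and~\ref{reduction-r4}, whereas your self-contained construction must re-derive the proper edge states for all three interior cycle edges; in exchange, yours is more symmetric and does not thread through two prior reductions.
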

\begin{proof}
  Using \Cref{obs:lb_leaf} first on $\{u_1, v_1\}$ where $v_1 \in R_1$, then again on $\{u_4, v_4\}$ where $v_4 \in R_4$, and last identifying $u_2$ and $u_3$ with $u$ using \Cref{obs:identification} results in lower bound of $2$.

  For upper bound, repeat exactly the expansion from \Cref{lem:reduction-r4} on $G'$ which uses two extra guards (we do not use part of the proof which proved the property).
  Then we make $S_{G'} = \GCPOS{S'_{G'}}{\vertexStates'(u_2)}{\{\gamma_1,\gamma_2\}}$, i.e., splitting $\gamma$ into $\gamma_1$ and $\gamma_2$.
  We expand $G'$ to $G$ by splitting $u_2$ into $u_2$ and $u_3$ (while renaming $u_3$ to $u_4$).
  We preserve a guard of $\gamma_1$ on $u_2$ and $\gamma_2$ on $u_3$.
  Transition between them will be $\TRAN(\gamma_1,\gamma_2) = \{(u_2,u_3),(u,u)\}$.
  This is interface equivalent.
  See \Cref{fig:redconnecting_crwwrc} for strategy $S_G$.

  \begin{figure}[h]
      \centering
      \includegraphics[scale=1.15,page=2]{reduction/redconnecting.pdf}
      \caption{Strategy for $(\connecting,\red,\white,\white,\red,\connecting)$ leaf cycle}%
      \label{fig:redconnecting_crwwrc}
  \end{figure}

  We have \Cref{prop:labelled_strategy} as each edge is traversed and $\{u_1,u_2\}$ cannot be traversed from $\gamma_2$, $\{u_2,u_3\}$ from $\alpha$, and $\{u_3,u_4\}$ cannot be traversed from $\gamma_1$.
  We note that the other two cycle edges $\{u,u_1\}$ and $\{u,u_4\}$ are part of the exception which is tackled in \Cref{lem:rwwr_non_property}.
  Hence, we have $\EDN(G) \leq \EDN(G')+2$ and by \Cref{lem:technique} we get that $G$ is defended with two more guards than $G'$.
\end{proof}

Now we tackle the exception in \Cref{prop:labelled_strategy} which influences Reductions~\ref{reduction-r3} and~\ref{reduction-r5}.

\begin{lemma}\label{lem:rwwr_non_property}
  The order of reductions can be changed so that in a $(\connecting,\white,\white,\red,\connecting)$ or $(\connecting,\red,\white,\white,\red,\connecting)$ leaf cycle \Cref{prop:nonempty_edge_states} is not required for edges that connect a $\connecting$ and a $\red$ vertex.
\end{lemma}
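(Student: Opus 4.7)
The plan is to show that every reduction in the sequence can be chosen so that no expansion step ever requires \Cref{prop:nonempty_edge_states} on an edge between the $\connecting$ vertex and a $\red$ vertex of the two specific leaf cycles. I begin by noting that among the reductions developed in this section, only Reductions~\ref{reduction-c1}, \ref{reduction-c4}, and \ref{reduction-c5} actually consume the property on a specific pre-existing edge, namely the edge $\{a,b\}$ being expanded. The remaining cycle, leaf, multiedge, loop, and constant component reductions either propagate or re-establish the property during their expansion construction (as witnessed, for example, by \Cref{lem:reduction-c23}, \Cref{lem:reduction-m1}, and \Cref{lem:reduction-m2}), so they do not impose any requirement on a $\connecting$-$\red$ edge of a leaf cycle.

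For Reduction~\ref{reduction-c1}, the fix is a direct substitution: whenever an application would set $\{a,b\}$ to a $\connecting$-$\red$ edge of one of the two leaf cycles, the $\pink$ vertex being removed lies between a $\red$ and the $\connecting$, so the matched pattern $(\red,\pink,\connecting)$ also satisfies the hypothesis of Reduction~\ref{reduction-c2}. Using \ref{reduction-c2} instead yields the same reduced cycle and the same guard-count change, and its correctness proof in \Cref{lem:reduction-c23} does not rely on \Cref{prop:nonempty_edge_states} on any edge, so the problematic requirement is avoided.

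For Reductions~\ref{reduction-c4} and \ref{reduction-c5}, I would examine the precursor cycles that would set $\{a,b\}$ to the $\connecting$-$\red$ edge. If Reduction~\ref{reduction-c4} would create the problematic edge, the precursor contains three consecutive $\white$'s with the $\red$ on one side and the $\connecting$ on the other; then the precursor also contains a $(\white,\red,\white)$ substructure, so Reduction~\ref{reduction-c5} is applicable at a different window of the cycle, and choosing \ref{reduction-c5} leads to a reduced cycle of a different form that does not match $(\connecting,\white,\white,\red,\connecting)$ or $(\connecting,\red,\white,\white,\red,\connecting)$, so Reductions~\ref{reduction-r3} and~\ref{reduction-r5} are never invoked on the resulting leaf cycle. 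An analogous flexibility of application site handles Reduction~\ref{reduction-c5}. Finally, if the leaf cycle already appears in the graph in one of the two forms at the start, then \ref{reduction-r3} or \ref{reduction-r5} is the first (and only) reduction affecting this component, so nothing ever demands the property on its $\connecting$-$\red$ edge.

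The main obstacle is the exhaustive case enumeration for Reductions~\ref{reduction-c4} and~\ref{reduction-c5}, where I must check for every precursor that admits a ``bad'' application there is an alternative application or reduction producing the same optimal guard count (matching the lower bound via \Cref{lem:technique}) while bypassing the forbidden leaf cycle forms. This reduces to a finite check that mirrors the case analysis already depicted in \Cref{fig:case_study}.
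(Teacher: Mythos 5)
Your proposal is correct and follows essentially the same route as the paper: you identify Reductions~\ref{reduction-c1}, \ref{reduction-c4}, and \ref{reduction-c5} as the only ones consuming \Cref{prop:nonempty_edge_states} on the created edge, and you resolve each problematic precursor by substituting Reduction~\ref{reduction-c2} for \ref{reduction-c1} and by applying Reduction~\ref{reduction-c5} at the shifted $(\white,\red,\white)$ window for \ref{reduction-c4}, exactly as the paper does. The only place you are thinner than the paper is the \ref{reduction-c5} case, which you dismiss as ``analogous''; the paper makes it explicit by exhibiting the forced precursor $(\connecting,\white,\red,\white,\red,\white,\white,[\red,]\connecting)$ and applying Reduction~\ref{reduction-c5} on the second $\red$ vertex, but this is indeed the same one-line argument you indicate.
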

\begin{proof}
  Let us label by $e$ an edge which connects a $\connecting$ and a $\red$ vertex.
  Reductions which require the \Cref{prop:nonempty_edge_states} on an edge are Reductions~\ref{reduction-c1},~\ref{reduction-c4}, and~\ref{reduction-c5}.
  If $e$ is not a result of any of these reductions then there is no need for $e$ to hold \Cref{prop:nonempty_edge_states}.
  Otherwise, let us analyze the cases separately.
  \begin{itemize}
    \item
      Reduction~\ref{reduction-c1} resulted in $e$ -- before reduction we had $(\connecting,\pink,\red,\dots)$ where we can use Reduction~\ref{reduction-c2} instead.
      This results in $(\connecting,\red,\dots)$ without needing the property for $e$.
    \item
      Reduction~\ref{reduction-c4} resulted in $e$ -- before reduction we had $(\connecting,\white,\white,\white,\red,\white,\white,[\red,]\connecting)$.
      Hence, we may use Reduction~\ref{reduction-c5} instead.
      This results in $(\connecting,\white,\white,\white,[\red,]\connecting)$ where $e$ has the property.
    \item
      Reduction~\ref{reduction-c5} resulted in $e$ -- before reduction we had $(\connecting,\white,\red,\white,\red,\white,\white,[\red,]\connecting)$ so we may use Reduction~\ref{reduction-c5} on the second $\red$ vertex instead.
      This results in $(\connecting,\white,\red,\white,[\red,]\connecting)$ where $e$ has the property.
  \end{itemize}
  We used other reductions to avoid reaching these leaf components by reductions that would require \Cref{prop:nonempty_edge_states}.
  The first described case can be used at any point.
  The last two described cases are used on constant leaf components and as the result is different, it follows that their edges hold the property.
\end{proof}

This concludes the constant component reductions which together with cycle components and approach described in \Cref{sec:technique_and_overview} give us a polynomial algorithm to solve \EDNproblem.

\section{Future Work}\label{sec:future_work}
The presented tools could be useful in a future study of the \EDNproblem on different graph classes. 
For instance, grids of size $\{3, 5\} \times n$ were extensively studied~\cite{Messinger2017,5n-grids}.
We believe it would interesting to see to which extent the tools could by applied in study of grids of less restricted dimensions.

Another noteworthy class of graphs are the so called dually chordal graphs, for which many domination related problems are polynomial time solvable.
It would be interesting to see whether \EDNproblem remains polynomial time solvable as well.

Furthermore, the computational complexity of the decision variant of the \MED\ problem is still mostly unknown as mentioned in the introduction.
It remains open whether the problem is in \PSPACE and whether it is \PSPACE-hard.




\bibliographystyle{plain}
\bibliography{main}


\appendix

\section{Complete Strategies}

We note that if the strategy $S_G$ was a complete graph, then strategy $S'_G$ created by the application of \Cref{lem:always_present_guard} is also a complete graph.

\begin{property}\label{prop:complete_strategy}
  A partial \LAB $\mathcal{B} = (G, S_G, P, \TRAN, R)$ is \emph{complete} if $S_G$ is a complete graph, i.e., there is $\{\alpha,\beta\}\in\FF$ for every $\alpha,\beta \in \Omega$.
\end{property}

We note that there are graphs where every optimal strategy is not complete, see \Cref{apx:sec:noncomplete} for such an example.
Complete strategies can be effectively pruned to contain at most $|V(G)|$ states in the following way.

\begin{lemma}
  For any complete defending \LAB of cardinality $k$ with the minimum number of vertices of $S_G$ it holds $|V(S_G)| \le |V(G)| - k + 1$.
\end{lemma}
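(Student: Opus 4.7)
The plan is to argue that when $|V(S_G)|$ is minimum, every state of $\Omega$ must contribute some vertex of $G$ that no other state contains, and then to perform a single counting step against any fixed reference state.

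First, I would observe that the completeness of $S_G$ collapses the defending condition from \Cref{def:defending} into the purely combinatorial requirement $\bigcup_{\gamma \in \Omega} P(\gamma) = V(G)$. Indeed, since every pair of states in a complete strategy graph is adjacent, the existential clause of \Cref{def:defending} can be satisfied for a given vertex $v$ if and only if $v$ lies in $P(\beta)$ for \emph{some} state $\beta$.

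Second, I would exploit the minimality of $|V(S_G)|$. Deleting a single state $\alpha$ from $\Omega$ yields a strategy whose strategy graph is again complete (a complete graph minus a vertex is complete) and whose transitions on the remaining edges are untouched, so the result is still a complete \LAB of cardinality $k$. Minimality therefore forces such a deletion to break the covering condition, i.e., for each $\alpha \in \Omega$ there exists $v_\alpha \in V(G)$ with $v_\alpha \notin \bigcup_{\beta \neq \alpha} P(\beta)$. Combined with the covering condition $\bigcup_{\gamma \in \Omega} P(\gamma) = V(G)$, this forces $v_\alpha \in P(\alpha)$, and the map $\alpha \mapsto v_\alpha$ is injective.

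Finally, I would fix any $\alpha_0 \in \Omega$ and observe that for every $\beta \in \Omega \setminus \{\alpha_0\}$ the witness $v_\beta$ avoids $P(\alpha_0)$ by construction, so $\{v_\beta : \beta \neq \alpha_0\}$ is a set of $|\Omega| - 1$ distinct vertices of $V(G) \setminus P(\alpha_0)$. Since the guards occupy distinct vertices, $|P(\alpha_0)| = k$, which yields $|\Omega| - 1 \le |V(G)| - k$ and hence the claimed bound. The only slightly delicate step is justifying that state deletion in a complete strategy is safe (no transition, interface, or traversability datum is destroyed by dropping one vertex from a complete strategy graph); once that is in place the remainder is a one-line counting argument.
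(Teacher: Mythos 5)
Your proof is correct, but it runs in the opposite direction from the paper's. The paper argues constructively: starting from an arbitrary complete defending strategy, it greedily selects one state $\alpha$ (covering $k$ vertices) plus one state per vertex of $V(G)\setminus P(\alpha)$, observes that any subfamily of states of a complete strategy is again complete and is defending as soon as it covers $V(G)$, and so exhibits a complete defending sub-strategy on at most $|V(G)|-k+1$ states; minimality is invoked only at the very end. You instead invoke minimality at the outset to extract, for each state, a \emph{private} vertex witnessing that the state cannot be deleted, and then count these witnesses against a fixed reference state. Both arguments rest on the same reduction --- completeness collapses the defending condition of \Cref{def:defending} to the set-cover condition $\bigcup_{\gamma\in\Omega}P(\gamma)=V(G)$, which the paper leaves implicit and you state explicitly --- and both need the (easy) fact that dropping states from a complete \LAB preserves the remaining structure. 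Your version proves the marginally stronger statement that \emph{every} deletion-minimal (irredundant) complete defending strategy obeys the bound, not just one of minimum size, whereas the paper's version has the advantage of actually producing the small strategy, which is the form in which the lemma is used. Either write-up is acceptable; if you keep yours, do spell out the deletion-is-safe step you flag at the end, since that is exactly the point the paper's own proof also passes over quickly.
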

\begin{proof}
  Pick an arbitrary complete defending strategy $S_G$ which uses $k$ guards.
  For each $v \in V(G)$ we shall pick one state $\alpha_v \in V(S_G)$ such that $v \in P(\alpha)$.
  First, pick any $\alpha \in V(S_G)$ and assign it as state to each $v \in P(\alpha)$.
  Then, for every $v \in V(G) \setminus P(\alpha)$ assign $\alpha_v \in \vertexStates(v)$ as its state.
  We just picked $|V(G)| - k + 1$ states such that they form a strategy where every pair of states is traversable and which is defending as it covers all the vertices of $G$.
\end{proof}

Similar to completeness of a strategy we may talk about the graph class of $S_G$ to describe its properties.

\section{Non-complete Strategy}\label{apx:sec:noncomplete}

\begin{observation}\label{obs:no_clique}
  An optimal \EDNinstance strategy on $5 \times 5$ grid uses at least $7$ guards.
\end{observation}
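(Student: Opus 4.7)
The plan is to apply \Cref{obs:edn_lb}, which gives $\EDN(G) \ge \DN(G)$ for every graph $G$, and to invoke the classical fact that the domination number of the $5 \times 5$ grid equals $7$. Combining these immediately yields $\EDN(G) \ge 7$ for the $5 \times 5$ grid $G$, with no eternal-domination-specific machinery required beyond this monotonicity bound. In particular, since every configuration of a defending $m$-eternal guard strategy induces a dominating set (noted just after the definition of \emph{defending against vertex attacks}), the bound on $\DN$ transfers immediately to $\EDN$.

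For a self-contained lower bound on $\DN(G)$, I would begin by observing that the four corner closed neighborhoods around $(1,1), (1,5), (5,1), (5,5)$ are pairwise disjoint triples of vertices, so any dominating set contains at least four ``corner guards'', one per corner neighborhood. Each such guard dominates at most two vertices outside the union of the four corner neighborhoods, which leaves at least $13 - 8 = 5$ of the interior vertices uncovered. A case split on placements that simultaneously dominate the middle row $\{(3,j)\}_{j=1}^{5}$ and the middle column $\{(i,3)\}_{i=1}^{5}$ would then push the required total up to $7$, since these two lines share only the centre $(3,3)$ and no single placement covers more than three of either.

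The main obstacle lies in this final case analysis: the naive covering bound $\lceil 25/5\rceil = 5$ and even the refined four-corner bound give only $\DN(G) \ge 5$ or $6$, so squeezing out the last unit requires a structural argument about simultaneous coverage of the central cross. Because this calculation is standard in the domination-number literature, the cleanest presentation of the observation is simply to cite the known value $\DN(G) = 7$ for the $5 \times 5$ grid and reduce to it via \Cref{obs:edn_lb} in a single line. Alternatively, one could attempt to construct seven pairwise suitably-separated induced subgraphs of the grid and apply the Ink Lemma (\Cref{lem:ink}) directly to each, but the geometric constraints of the grid (in particular, the diameter-$8$ requirement forcing both endpoints of the ink sequence to sit near opposite corners) make such a partition non-obvious and longer than the citation route.
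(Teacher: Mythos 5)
Your argument is correct, but it takes a genuinely different route from the paper's. You derive the bound from \Cref{obs:edn_lb} together with the known value $\DN = 7$ for the $5 \times 5$ grid, so the whole proof collapses to a citation of the domination-number literature. The paper instead gives a short self-contained counting argument that never determines $\DN$ of the grid: it only shows that no dominating set of size $6$ can contain the vertex $u_{2,2}$, and then invokes the eternal-domination structure --- every defending strategy must have some configuration occupying $u_{2,2}$ --- to reach a contradiction with $6$ guards. Your route buys brevity at the price of an external dependency (the cited value is indeed $7$, so this is sound); the paper's route stays self-contained and, notably, would survive even in a world where the plain domination number were only $6$, because it leans on the game requirement rather than on $\DN$ itself. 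Two caveats on your write-up: first, your attempted self-contained derivation of $\DN \ge 7$ is, as you concede, not actually completed --- the four-corner accounting only reaches $6$, and the ``central cross'' case split is left as a sketch --- so as written the proposal genuinely stands or falls on the citation; if you want a citation-free proof, the paper's trick of anchoring the count at a configuration forced to contain $u_{2,2}$ is exactly the missing structural step, and it is shorter than completing your case analysis. Second, your judgement that \Cref{lem:ink} is the wrong tool here is sound: the distance conditions it demands for a length-$7$ sequence are essentially unattainable in a graph of diameter $8$ with $25$ vertices.
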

\begin{proof}
  Let us denote vertices of the grid by $u_{i,j}$ where $1 \le i,j \le 5$.

  First, we show a lower bound of $7$.
  Assume for a contradiction that there is a defending strategy $S_6$ with at most $6$ guards.
  Any state of $S_6$ needs to dominate all $25$ vertices.
  There must exist a state $C$ where $u_{2,2}$ is occupied.
  In $C$ there also must be at least one guard in the closed neighborhood of each corner ($u_{1,1}$, $u_{5,1}$, $u_{1,5}$, and $u_{5,5}$).
  In the grid a vertex may dominate at most $5$ vertices and a vertex on the side of the grid may dominate at most $4$ vertices.
  All vertices in the closed neighborhood of corners are on the side of the grid.
  Additionally, vertex which dominates $u_{1,1}$ may dominate at most $2$ new vertices, as $u_{2,2}$ already dominates many of vertices in its neighborhood.
  In total, the $6$ guards of $C$ may dominate at most $2\cdot 5 + 3 \cdot 4 + 2 = 24$, a contradiction.
\end{proof}

The upper bound can be shown by construction of a strategy, however, we have no good tools to show that all the strategies are not complete graphs -- we found this using a full strategy-space search.
A construction which uses $7$ guards contains three states and majority of their reflections and rotations, see them on \Cref{fig:grid_strategy}.
In this case, we do not show the strategy, as it contains roughly $20$ states (depending on a slight optimization, it may be less) that would contain $190$ transitions.

\begin{figure}[h]
  \centering
  \includegraphics[page=1,width=0.20\textwidth]{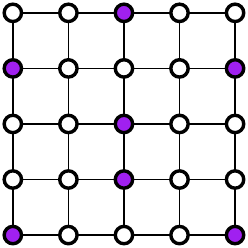}
  \hspace{1cm}
  \includegraphics[page=2,width=0.20\textwidth]{5x5-grid/grid-strategy.pdf}
  \hspace{1cm}
  \includegraphics[page=3,width=0.20\textwidth]{5x5-grid/grid-strategy.pdf}
  \caption[Strategy on a $5 \times 5$ grid]{%
    The $5 \times 5$ grid has $19$ m-eternal dominating sets.
    Each of the configurations can be expressed as a combination of rotations and reflections of exactly one of these $3$ basic configurations.
    Each of the $19$ configurations is necessary for the strategy and can move into at most $12$ other states.
  }%
  \label{fig:grid_strategy}
\end{figure}





\end{document}